\title {Structure, classification, and conformal symmetry of elementary particles over non-archimedean space-time}
\font\mysmall=cmr8 at 8 pt
\font\eightit=cmti8
\def\c{\bf C}
\def\q{\bf Q}
\def\r{\bf R}
\def\hh{\cal H}
\def\kk{\cal K}
\newtheorem{theorem}{Theorem}[section]
\newtheorem{corollary}{Corollary}[section]
\newtheorem{lemma}{Lemma}[section]
\newtheorem{definition}{Definition}
\newtheorem{proposition}{Proposition}[section]
\newcommand{\Q}{\textbf{Q}}
\newcommand{\R}{\textbf{R}}
\newcommand{\C}{\textbf{C}}
\newcommand{\SO}{\textrm{SO}}
\newcommand{\SL}{\textrm{SL}}
\newcommand{\ORTH}{\textrm{O}}
\begin{document}

\maketitle

\centerline {\bf V. S. Varadarajan and Jukka T. Virtanen}

\setcounter{tocdepth}{1}
\tableofcontents

\section{Abstract}
{\mysmall
It is well known that at distances shorter than Planck length, no length measurements are possible.  The Volovich hypothesis asserts that at sub-Planckian  distances and times, spacetime itself has a non-Archimedean geometry.   We
discuss the structure of elementary particles, their classification,  and their  conformal symmetry under this hypothesis.  Specifically, we investigate the projective representations of the $p$-adic Poincar\'{e} and Galilean
groups, using a new variant of the Mackey machine for projective unitary representations of semidirect products of locally compact and
second countable (lcsc) groups. We construct the conformal spacetime over $p$-adic fields and discuss the imbedding of the $p$-adic Poincar\'{e} group into the $p$-adic conformal group.  Finally, we show that the massive and so
called eventually masssive particles of the Poincar\'{e} group do not have conformal symmetry. The whole picture bears a close resemblance to what happens over the field of real numbers, but with some significant variations.

\vskip 0.3 true in\noindent
Key words: Volovich hypothesis, non-archimedean fields, Poincar\'e group, Galilean group, semidirect product, cocycles, affine action, conformal spacetime, conformal symmetry, massive, eventually massive, and massless particles.
\vskip 0.3 true in\noindent
Mathematics Subject Classification 2000: 22E50, 22E70, 20C35, 81R05.}
\vskip 0.3 true in\noindent

\section{Introduction}
Divergences in quantum field theories led many physicists, notably Beltrametti and his
collaborators, to propose in the 1970's the idea that one should
include the structure of space-time itself as an unknown to be
investigated \cite{bel1} \cite{bel2}\cite{bel3} \cite{bel4}. In particular they suggested that the geometry of
space-time might be based on a non-archimedean or even a finite
field, and examined some of the consequences of this hypothesis. But
the idea did not really take off until Volovich proposed in 1987 \cite{vol}
that world geometry at sub-Planckian regimes might be
non-archimedean because no length measurements are possible at such
ultra-small distances and time scales. A huge number of articles
have appeared since then, exploring this theme. Since no single
prime can be given a distinguished status, it is even more natural to
see if one could really work with an adelic geometry as the basis
for space-time. Such an idea was first proposed by Manin \cite{man}. For  a
definitive survey and a very inclusive set of references concerning $p$-adic mathematical physics see the
article by Dragovich et al \cite{dra}. It is not our contention that there
is sufficient experimental evidence for a non-archimedean or adelic
spacetime. Rather we explore this question in the so-called {\it
Dirac mode\/}, namely to do the mathematics first {\it and then\/} to seek
the physical interpretation (see \cite{nam}) (p 371).

\vspace{14pt}

\noindent In this paper we examine the
consequences of the non-archimedean hypothesis for the
classification of elementary particles. We consider both the
Poincar\'e and the Galilean groups. Each of these is the group of
$k$-points of a linear algebraic group defined over a local
non-archimedean field $k$ of characteristic $\not=2$.

\vspace{14pt}

\noindent Beyond the classification of elementary particles with Poincar\'e and conformal symmetry lies the problem of constructing quantum field theories over $p$-adic spacetimes. For a deep study of this question see
the paper of Kochubei and Sait-Ametov \cite{koc}.

\vspace{14pt}

\noindent It is a consequence of the basic principles of quantum mechanics
(see \cite{var1}) that the symmetry of a quantum system with respect to a
group $G$ may be expressed by a projective unitary representation
(PUR) of $G$ (or at least of a normal subgroup of index $2$ in $G$) in the Hilbert space of quantum states; this PUR
may be lifted to an ordinary unitary representation (UR) of a
suitable topological central extension (TCE) of it by the
circle group $T$. Already in 1939, Wigner, in his great paper \cite{wig},
proved that all PUR's of the Poincar\'e group $P$ lift to UR's of
the simply connected (2-fold) covering group $P^\ast$ of the Poincar\'e
group.   In other words, $P^\ast=V \rtimes {\rm Spin}(V)$ is already the {\it universal\/} TCE of the Poincare
group (UTCE).  Here $V$ is a \emph{real}
quadratic vector space, namely a real vector space with a quadratic
form, of signature $(1, n)$ that defines the Minkowski metric, and
${\rm Spin}(V)$ is the spin group, which is the simply connected
covering group of the orthogonal group ${\rm SO}(V)$. We note that for $n=3$, Spin$(V)=\SL(2,\C)_\R$, 
the suffix $\R$ denoting the fact that we view $\SL(2,\C)$ as a real group.
For the real Galilean group, going to the simply connected covering group
is not enough to unitarize all PUR's.  One has to construct the UTCE
(see \cite{cas}.)

\vspace{14pt}

\noindent Not all groups have UTCE's.  For a lcsc group to have a UTCE it is necessary
that the commutator subgroup should be dense in it.  Over a non-archimedean local field, the commutator subgroups of the
Poincare group and the orthogonal groups are open and closed {\it
proper\/} subgroups and so they do not have UTCE's.
The spin groups and the Poincar\'e groups associated to the spin groups
{\it do have\/} UTCE's; for the spin groups this is a consequence of
the work of Moore \cite{moo} and Prasad and Raghunathan \cite{pra} and for the
corresponding Poincar\'e groups, of the work of Varadarajan \cite{var2}.
However, the natural map from the spin group or the corresponding
Poincar\'e group to the orthogonal group or the corresponding
Poincar\'e group is {\it not} surjective over the local
non-archimedean field $k$ (even though they are surjective over the algebraic closure of $k$), and so replacing the orthogonal group by
the spin group leads to a loss of information. So we have to work with the orthogonal group rather than the spin group.  The following example, treated in \cite{var0}, illustrates this.

\vspace{14pt}

\noindent Let $G={\rm SL}(2, {\q}_p)$. The adjoint representation exhibits $G$ as the spin group corresponding to the quadratic vector space $\frak g$ which is the Lie algebra of $G$ equipped with the Killing form. The adjoint map
$G\longrightarrow G_1={\rm SO}(\frak g)$ is the spin covering for ${\rm SO}(\frak g)$ but this is {\it not surjective\/}; in the standard basis
$$
X=
\left(
  \begin{array}{cc}
    0 & 1 \\
    0 & 0 \\
  \end{array}
\right)
,\quad H=
\left(
  \begin{array}{cc}
    1 & 0 \\
    0 & -1 \\
  \end{array}
\right), \quad Y=\left(
  \begin{array}{cc}
    0 & 0 \\
    1 & 0 \\
  \end{array}
\right)
$$
the spin covering map is
$$
\left(
  \begin{array}{cc}
    a & b \\
    c & d \\
  \end{array}
\right) \longmapsto
\left(
  \begin{array}{ccc}
    a^2 & -2ab & -b^2 \\
    -ac & ad+bc & bd \\
    -c^2 & 2cd & d^2 \\
  \end{array}
\right)
$$
The matrix
$$
\left(
  \begin{array}{ccc}
    \alpha & 0 & 0 \\
    0 & 1 & 0 \\
    0 & 0 & \alpha^{-1} \\
  \end{array}
\right)
$$
is in ${\rm SO}(\frak g)$; if it is the image of $\left(
                                                    \begin{array}{cc}
                                                      a & b \\
                                                      c & d \\
                                                    \end{array}
                                                  \right)$, then $b=c=0, d=a^{-1}$, and $\alpha=a^2$, so that unless $\alpha \in {{\q}_p^\times}^2$, this will not happen.

\vspace{14pt}

\noindent So in this paper we
work with the orthogonal groups rather than the spin groups. This means that we have to deal with projective UR's of the Poincar\'e and Galilean groups directly.
\vspace{14pt}

\noindent
An announcement containing the main results of this paper  (without proofs) has appeared in the Letters in Mathematical Physics
\cite{var0}
The present article is an elaboration of this announcement, with proofs.

\section{Multipliers and PURs for semidirect product groups}\label{section Multipliers and PURs for semidirect product groups}

\subsection{Multipliers for semidirect products}\label{section multipliers for semidirect products}
We assume that the reader is familiar with the basic facts regarding
multipliers \cite{var2}.  We begin by discussing the multiplier
group of a semidirect product. We work in the category of locally compact second countable (lcsc) groups.

\vspace{14pt}

\noindent Multipliers of a group $K$ form a group $Z^2(K)$. The subgroup of trivial multipliers is denoted by $B(K)$.  We define $H^2(K) = Z^2(K) \slash B(K)$ to be the \emph{multiplier group of $K$}. If $m\in Z^2(K)$ we define a
$m$-representation of $K$ to be a Borel map $x\longmapsto U(x)$
of $K$ into the unitary group of a Hilbert space ${\hh}$ (which is a standard Borel group) such that
$$
U(x)U(y)=m(x, y)U(xy)\qquad (x,y\in K).
$$
 If $K$ is totally disconnected, every multiplier is equivalent to a continuous one and the the subgroup of continuous multipliers has the property that the natural inclusion map induces an isomorphism with $H^2(K)$.
This is true for the $p$-adic groups.

\vspace{14pt}

\noindent Let $H = A \rtimes G$ where $A$ and $G$ are lcsc groups and $A$ is abelian. Let $A^*$ be the
character group of $A$.  Our starting point is to investigate the
subgroup of multipliers of $H$ that are trivial when restricted to $A$,
denoted by $M_A(H)$.  Let $H^2_A(H)$ be its image in
$H^2(H)$.  Let $M'_A(H)$ be the group of multipliers $m$ for $H$
with $m\vert_{A \times A} = m\vert_{A \times G} = 1.$ Results from \cite{var2} and \cite{mac1} tell us that any element of $M_A(H)$ is equivalent to one in $M'_A(H)$.
We define a $1$-cocycle for $G$ with coefficients in $A^*$ as a Borel
map $f(G \rightarrow A^*)$ such that $$f(gg') = f(g) + g [ f(g') ]
\;\; (g,g' \in G).$$ This is equivalent to saying that $g \mapsto
(f(g),g)$ is a Borel homomorphism of $G$ into the semidirect product
$A^* \rtimes G$.  Hence any $1$-cocycle is continuous and defines a continuous map of $G\times A$ into $T$. We denote the
abelian group of $1$-cocycles by $Z^1(G,A^*)$. The
coboundaries are the cocycles of the form $g \mapsto g [ a ] - a$
for some $a \in A^*$. The coboundaries form a subgroup $B^1(G,A^*)$
of $Z^1(G,A^*)$. We now form the cohomology group $H^1(G,A^*) =
Z^1(G,A^*) \slash B^1(G,A^*)$.   The following
theorem due to Mackey describes the multipliers of $H$. Full details can be found in \cite{var2}.

\begin{theorem}\label{theorem theorem 1 moscow paper} Any element in $M_A(H)$ is
equivalent to one in $M'_A(H)$. If $m \in M'_A(H)$ and $m_0 =
m\vert_{G \times G}$ and $\theta_m(g^{-1})(a') = m(g,a')$, then $m
\mapsto(m_0, \theta_m)$ is an isomorphism $M'_A(H) \simeq Z^2(G)
\times Z^1(G,A^*)$ which is well defined in cohomology and gives the
isomorphisms $H^2_A(H) \simeq H^2(G) \times H^1(G,A^*)$. Moreover,
$$
m(ag, a'g')=m_0(g, g')\theta_m(g^{-1})(a').
$$
\end{theorem}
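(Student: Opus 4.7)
The plan is a direct verification using the cocycle identity $m(x,y)m(xy,z)=m(x,yz)m(y,z)$ applied to strategically chosen triples from $A$ and $G$, combined with Mackey's standard Borel-normalization machinery for cochains on lcsc groups. The algebraic core is essentially a bookkeeping exercise, once one handles the substitution $(ag)(a'g')=a\cdot{}^{g}\!a'\cdot gg'$ carefully.

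\textbf{Normalization.} First I would establish the statement that every $m\in M_A(H)$ is cohomologous to an element of $M'_A(H)$. By hypothesis $m|_{A\times A}$ is a coboundary on the abelian group $A$, so after multiplying $m$ by a suitable trivial multiplier we may assume $m|_{A\times A}=1$ identically. The identity applied to the triple $(a,a',g)$ then forces $a\mapsto m(a,g)$ to be a character of $A$ for each fixed $g$, and a Borel selection of trivializing cochains along the $G$-variable, available under the lcsc hypothesis and carried out in \cite{var2} and \cite{mac1}, produces a Borel $\phi:H\to T$ whose coboundary kills $m|_{A\times G}$ without disturbing the vanishing on $A\times A$.

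\textbf{Decomposition.} Next, for $m\in M'_A(H)$, set $m_0=m|_{G\times G}$ and $\theta_m(g^{-1})(a')=m(g,a')$. Restricting the cocycle identity to triples in $G$ immediately shows $m_0\in Z^2(G)$. To see that $\theta_m(g^{-1})\in A^*$, apply the identity to $(g,a',a'')$ and use $ga''=({}^{g}\!a'')g$; iterating the identity once to eliminate the resulting cross-term, the triviality hypotheses collapse the expression to $m(g,a'a'')=m(g,a')m(g,a'')$. A parallel computation on $(g,g',a)$, combined with $g'a=({}^{g'}\!a)g'$, yields $m(gg',a)=m(g,{}^{g'}\!a)m(g',a)$; dualising the $G$-action to $A^*$, this is exactly the 1-cocycle identity $\theta_m(g_1 g_2)=\theta_m(g_1)+g_1\cdot\theta_m(g_2)$. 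Conversely, given $(m_0,\theta)\in Z^2(G)\times Z^1(G,A^*)$, I would define $m(ag,a'g')=m_0(g,g')\theta(g^{-1})(a')$ and check by direct expansion that the full multiplier identity for $m$ splits cleanly into the 2-cocycle identity for $m_0$ and the 1-cocycle identity for $\theta$. This simultaneously proves the displayed formula, the bijection $M'_A(H)\simeq Z^2(G)\times Z^1(G,A^*)$, and the fact that the two constructions are mutual inverses.

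\textbf{Descent to cohomology and main obstacle.} To descend to $H^2$ I observe that a Borel coboundary $\delta\phi$ lies in $M'_A(H)$ precisely when $\phi(ag)=\chi(a)f(g)$ for some $\chi\in A^*$ and Borel $f:G\to T$, in which case $\delta\phi$ corresponds to $(\delta f,\,\delta\chi)$ with $\delta\chi(g)=g\cdot\chi-\chi\in B^1(G,A^*)$; this identifies the coboundaries on both sides and yields $H^2_A(H)\simeq H^2(G)\times H^1(G,A^*)$. The main obstacle is not any single algebraic identity but the Borel measurability in the normalization step: constructing the trivializing cochain $\phi$ uniformly in $g$ requires a measurable selection theorem, and it is the only place where the lcsc hypothesis is genuinely needed rather than pure cohomological algebra.
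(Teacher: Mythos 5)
The paper itself offers no proof of this theorem---it cites \cite{var2} and \cite{mac1} for the details---so there is no in-paper argument to compare against; your strategy of directly exploiting the 2-cocycle identity is the natural one, and the decomposition step, the converse construction, and the descent to cohomology all look correct in outline. The gap is in your normalization step.

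You claim that after arranging $m|_{A\times A}=1$, applying the cocycle identity to $(a,a',g)$ forces $a\mapsto m(a,g)$ to be a character of $A$. It does not: that identity reads
\[
m(aa',g)=m(a,a'g)\,m(a',g),
\]
and this gives multiplicativity in $a$ only if $m(a,a'g)=m(a,g)$, something you have no grounds to assert at this stage and which fails for a generic $m$ with $m|_{A\times A}=1$. The measurable-selection argument you then invoke to kill $m|_{A\times G}$ is therefore hanging on a premise that is not available. Fortunately the reduction is easier than you are making it: once $m|_{A\times A}=1$, define $\phi(ag):=m(a,g)$. This is a Borel map $H\to T$ with $\phi(e)=1$, because the factor map $H\to A\times G$ is a Borel isomorphism of lcsc spaces. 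Since every multiplier satisfies $m(x,e)=m(e,x)=1$, the coboundary $\delta\phi(x,y)=\phi(x)\phi(y)\phi(xy)^{-1}$ has $\delta\phi(a,a')=1$ and $\delta\phi(a,g)=m(a,g)^{-1}$, so $m\cdot\delta\phi$ has trivial restriction to both $A\times A$ and $A\times G$, i.e.\ lies in $M'_A(H)$ and is cohomologous to $m$. No selection theorem is needed here; the only genuinely non-elementary Borel input in the whole normalization is the choice of a Borel trivialization of $m|_{A\times A}$ on $A$, which the Moore--Mackey Borel-cochain framework you are already invoking supplies. With that fix the rest of your argument should carry through.
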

\begin{corollary}If $m_0=1$, then $m$ is a continuous multiplier and $m(ag,a'g') =
\theta(g^{-1})(a')$.\end{corollary}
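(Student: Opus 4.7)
The plan is essentially to read off both conclusions from Theorem~\ref{theorem theorem 1 moscow paper} together with the continuity assertion for $1$-cocycles noted just before it.

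First I would invoke the theorem to write, for any $m \in M'_A(H)$, the explicit formula
$$
m(ag, a'g') \;=\; m_0(g,g')\,\theta_m(g^{-1})(a').
$$
Setting $m_0 = 1$ (the hypothesis of the corollary) immediately yields $m(ag, a'g') = \theta_m(g^{-1})(a') = \theta(g^{-1})(a')$, which is the stated formula. Note that the right-hand side depends only on $g$ and $a'$, not on $a$ or $g'$.

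For continuity, I would use the preceding remark that any element of $Z^1(G, A^*)$ is automatically continuous and, moreover, defines a \emph{continuous} map $G \times A \to T$. Applied to $\theta = \theta_m$, this tells us that $(g, a') \mapsto \theta(g^{-1})(a')$ is continuous on $G \times A$. Since $H = A \rtimes G$ carries the product topology under the homeomorphism $(a,g) \mapsto ag$, the decomposition maps $H \to A$ and $H \to G$ are continuous; composing with the continuous map above gives that $(ag, a'g') \mapsto \theta(g^{-1})(a')$ is continuous on $H \times H$.

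There is no real obstacle: the corollary is a direct specialization of the theorem, with the only nontrivial ingredient being the cited fact that cocycles in $Z^1(G, A^*)$ are jointly continuous on $G \times A$ (not merely separately Borel). The sole thing to be careful about is that ``$m$ continuous'' means jointly continuous on $H \times H$, which is why one needs the joint-continuity form of the cocycle statement rather than just continuity of $\theta(g^{-1})$ as a character of $A$ for each fixed $g$.
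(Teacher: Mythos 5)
Your proposal is correct and follows exactly the route the paper intends: specialize the formula of Theorem~\ref{theorem theorem 1 moscow paper} to $m_0=1$, and invoke the preceding remark that every $1$-cocycle in $Z^1(G,A^*)$ is continuous and yields a jointly continuous map $G\times A\to T$ to get joint continuity of $m$ on $H\times H$. The paper states the corollary without proof precisely because it is this immediate specialization, and you have supplied the details faithfully.
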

\noindent \textbf{Remark:} A multiplier $m$ for $H$ is said to be \emph{standard}
if $m\vert_{A\times A} = m\vert_{A \times G} = m\vert_{G \times A}
=1$.  It follows from the above that a multiplier for $H$ is standard
if and only if it is the lift to $H$ of a multiplier for $G$ via $H \rightarrow
H \slash A \simeq G$.

\subsection{$m$-Systems of imprimitivity}\label{section m-systems of
imprimitivity}

\noindent Classically, systems of imprimitivity are the key to finding UIR's of semidirect products.
In this section we utilize $m$-systems of imprimitivity to describe the PUIR's of semidirect product groups. This is a straightforward variation of the corresponding theory for ordinary systems of imprimitivity.

\vspace{14pt}

\noindent We assume the following setup.  Let $G$ be a lcsc group.
Let $X$ be a $G$-space that is also a standard Borel space. Let
$\mathcal{H}$ be a separable Hilbert space and $\mathcal{U}$ the
group of unitary transformations of $\mathcal{H}$. An
\emph{$m$-system of imprimitivity} is a pair $(U,P)$, where $P(E
\rightarrow P_E)$ is a projection valued measure (pvm) on the class of
Borel subsets of $X$, the projections being defined in
$\mathcal{H}$, and $U$ is an $m$-representation of $G$ in
$\mathcal{H}$  such that
$$U(g)P(E)U(g)^{-1} = P(g[E]) \;\; \forall \; g\in G \text{ and all
Borel } E \subset X.$$ The pair $(U,P)$ is said to be \emph{based on
$X$}.
For what follows we take $X$ to be a {\it transitive\/} $G$-space. We
fix some $x_0\in X$ and let $G_0$ be the stabilizer of $x_0$ in $G$,
so that $X \simeq G \slash G_0$. We will also fix a multiplier $m$
for $G$ and let $m_0 = m\vert_{G_0 \times G_0}$.
\begin{lemma}\label{lemma strict cocycle}
Let $G$ be a lcsc group acting transitively on a lcsc space $X$. Fix
$x_0 \in X$ and let $G_0$ be the stabilizer of $x_0$ in $G$. Suppose
that $C \subset G_0$ is a closed central subgroup of $G$ and $\chi$
is a character of $C$.  Let $\gamma$ be a strict $(G,X)$-cocycle
with the values in the unitary group $\mathcal{U}$ of Hilbert space
$\mathcal{K}$ and let $\nu$ be the map $G_0 \rightarrow \mathcal{U}$
defined by $\nu(g) = \gamma(g,x_0) \;\;\; g \in G_0$. Then $C$ acts
trivially on $X$ and the following are equivalent:
\begin{enumerate}
\item $\nu(c) = \chi(c)$.
\item $\gamma(c,x) = \chi(c)$ for each $c \in C$ for almost all
$x \in X$.
\item $\gamma(c,x) = \chi(c)$ for each $c \in C$, and for all $\; x
\in X$.
\end{enumerate}
\end{lemma}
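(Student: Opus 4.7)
The plan is to prove the equivalence by first extracting a single master identity from the strict cocycle property and then reading off each implication from it. The only real input beyond bookkeeping is that the values $\chi(c)$ are \emph{scalar} unitaries, which makes the conjugation by $\gamma(g, x_0)$ trivial.

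First I would dispose of the triviality that $C$ acts trivially on $X$: every $x \in X$ has the form $g \cdot x_0$ for some $g \in G$, and for $c \in C$ centrality plus $c \in G_0$ gives $c \cdot (g \cdot x_0) = (cg)\cdot x_0 = (gc)\cdot x_0 = g \cdot x_0$. Next I would record the only substantive computation. Applying the strict cocycle identity $\gamma(g_1 g_2, x) = \gamma(g_1, g_2 \cdot x)\gamma(g_2, x)$ to the equation $cg = gc$ with $x = x_0$, and using $c \cdot x_0 = x_0$ on the right-hand side, yields
\[
\gamma(c, g\cdot x_0)\,\gamma(g, x_0) \;=\; \gamma(g, x_0)\,\gamma(c, x_0) \;=\; \gamma(g, x_0)\,\nu(c),
\]
so that for every $g \in G$,
\[
\gamma(c, g\cdot x_0) \;=\; \gamma(g, x_0)\,\nu(c)\,\gamma(g, x_0)^{-1}. \tag{$\ast$}
\]

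With $(\ast)$ in hand all three implications are immediate. The direction (3) $\Rightarrow$ (2) is tautological, and (3) $\Rightarrow$ (1) follows by taking $x = x_0$ (for which $\gamma(c, x_0) = \nu(c)$). For (1) $\Rightarrow$ (3), if $\nu(c) = \chi(c)\,I_\mathcal{K}$ is a scalar then it commutes with $\gamma(g, x_0)$, so $(\ast)$ collapses to $\gamma(c, g \cdot x_0) = \chi(c)$, and transitivity $X = G\cdot x_0$ gives the conclusion for every $x$. For (2) $\Rightarrow$ (1), pick any $x$ in the conull set where $\gamma(c, x) = \chi(c)$ holds (this set is nonempty since $X$ carries a quasi-invariant measure) and write $x = g \cdot x_0$; then $(\ast)$ gives $\chi(c)\,I_\mathcal{K} = \gamma(g, x_0)\,\nu(c)\,\gamma(g, x_0)^{-1}$, and conjugating back by $\gamma(g, x_0)^{-1}$ recovers $\nu(c) = \chi(c)\,I_\mathcal{K}$.

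I do not expect any serious obstacle here; the entire content is the commutation trick that turns the relation $cg = gc$ into $(\ast)$ via the \emph{strict} (everywhere, not a.e.) form of the cocycle identity. The mildly delicate point is making sure the argument does not secretly demand that $\nu$ take values in the center of $\mathcal{U}$ a priori — rather, this scalar nature is imposed by the hypothesis $\nu(c) = \chi(c)$ and is used only in the forward direction (1) $\Rightarrow$ (3). Conversely, in (2) $\Rightarrow$ (1), it is the scalar nature of $\chi(c)$ alone (on the left-hand side of $(\ast)$) that forces $\nu(c)$ to be scalar as well.
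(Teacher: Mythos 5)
The paper omits this proof entirely ("The proof is a straightforward calculation and we omit it here for brevity"), so there is nothing to compare against; your argument is correct and supplies exactly the sort of calculation the authors had in mind. The master identity $(\ast)$, obtained by applying the strict (everywhere-valid) cocycle relation to the commutation $cg=gc$ and using $c\cdot x_0 = x_0$, cleanly yields all three implications, and you are right that the only nontrivial input is that $\chi(c)$ is a scalar so conjugation by $\gamma(g,x_0)$ is absorbed; the handling of (2)$\Rightarrow$(1) via a single point in the conull set is also correct since the null set is allowed to depend on $c$.
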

\noindent The proof is a straightforward calculation and we omit it here for brevity.

\vspace{14pt}

\noindent Let $G$ be a lcsc group and $m$ a multiplier of $G$.  We recall now the Mackey technique of trivializing $m$ by passing to
a central extension of $G$. Let $T$ the
circle group of complex numbers. One
may build a central extension $E_m$ of $G$ given by the following
exact sequence:
$$0 \rightarrow T \rightarrow E_m \rightarrow G \rightarrow 0.$$
Here $E_m = G \times T$, the product structure is given by
$$(x_1,t_1)(x_2,t_2) = (x_1x_2, m(x_1x_2)t_1t_2),$$ and the maps are $t\mapsto (1,t)$ and $(x,t)\mapsto x$. The
Mackey-Weil topology on $E_m$ will convert $E_m$ into a lcsc group.
The key property of $E_m$ is that any $m$-representation of $G$
lifts to a unitary representation on $E_m$ that restricts to
$(1,t)$ as $tI$.

\vspace{14pt}

\noindent Let $E= E_m$ be as above and $E_0$ the
preimage of $G_0$ in $E$ under the map $E \rightarrow G$.  Then $E =
G \times T$ and $E_0 = G_0 \times T$. $E_0$ is isomorphic to
$E_{m_0}$, the central extension of $G_0$ defined by $m_0$.  We note
that $E$ acts on $X$ through $G$, and so $E_0$ may be viewed as the
stabilizer of $x_0$ in $E$.
\begin{lemma}\label{lemma correspondance imprimitivity URs of E0}
In the correspondence between systems of imprimitivity $(V,Q)$ for
$E$ based on $X$ and UR's $\nu$ of $E_0$, the systems with $V(t) = t
I \;\; (t \in T)$ correspond to UR's $\nu$ of $E_0$ with $\nu(t) = t
 I$.
\end{lemma}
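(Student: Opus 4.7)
The statement asserts that the Mackey imprimitivity correspondence is compatible with the condition that the central circle $T$ act by scalars. I would reduce everything to the cocycle picture of systems of imprimitivity and then invoke Lemma 3.1.

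Recall that the Mackey imprimitivity theorem gives a bijection between systems of imprimitivity $(V, Q)$ for $E$ based on $X \simeq E/E_0$ and unitary representations $\nu$ of $E_0$. Any such $(V, Q)$ is realised from a strict $(E, X)$-cocycle $\gamma$ with values in the unitary group of a Hilbert space $\mathcal{K}$: one takes $\mathcal{H} = L^2(X, \mathcal{K})$, with $Q$ acting by multiplication by characteristic functions and $V(e) f(x) = \gamma(e, e^{-1}x) f(e^{-1}x)$ (up to the appropriate Radon--Nikodym factor); the corresponding UR of $E_0$ is $\nu(e_0) = \gamma(e_0, x_0)$.

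I would then apply Lemma 3.1 with $C = T$, which is central in $E$ and contained in $E_0$, and with $\chi(t) = t$. Since $\nu(t) = \gamma(t, x_0)$, the lemma yields the equivalence
\[
\nu(t) = t\,I \text{ for all } t \in T \;\Longleftrightarrow\; \gamma(t, x) = t\,I \text{ for all } t \in T \text{ and all } x \in X,
\]
the pointwise (as opposed to almost-everywhere) version being guaranteed by part (c) of that lemma.

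The two directions of the asserted equivalence then fall out. If $V(t) = t I$ globally on $\mathcal{H}$, then restricting to any single fibre in the induced-representation realisation gives $\nu(t) = tI$. Conversely, if $\nu(t) = tI$, the displayed equivalence gives $\gamma(t, x) = tI$ for every $x \in X$; since $T$ is central in $E$ it acts trivially on $X$, so $t^{-1}x = x$, and hence $V(t) f(x) = \gamma(t, t^{-1}x) f(t^{-1}x) = t\, f(x)$, whence $V(t) = tI$. The only subtlety worth flagging --- and it is handled precisely by Lemma 3.1 --- is the upgrade from an almost-everywhere equality of the cocycle on $T$ to a pointwise one, without which we could only conclude that $V(t)$ is scalar off a null set rather than on all of $\mathcal{H}$.
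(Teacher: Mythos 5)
Your argument follows essentially the same route as the paper: both pass to the strict $(E,X)$-cocycle $\gamma$ realizing $V$ and $\nu$, and then use Lemma \ref{lemma strict cocycle} with $C=T$ and $\chi(t)=t$ to move between the condition on $\nu$ and the condition on $\gamma$. One small correction to your closing remark: the almost-everywhere-to-pointwise upgrade furnished by Lemma \ref{lemma strict cocycle} is actually needed for the direction $V(t)=tI\Rightarrow\nu(t)=tI$ (from $V(t)=tI$ one only gets $\gamma(t,\cdot)=t$ a.e., yet $\nu(t)=\gamma(t,x_0)$ requires the value at the specific point $x_0$, which is exactly implication (2)$\Rightarrow$(1) of that lemma), whereas for the other direction an a.e.\ equality of $\gamma(t,\cdot)$ already forces $V(t)=tI$ as an operator on $L^2$; correspondingly, ``restricting to a single fibre'' should be understood as an appeal to Lemma \ref{lemma strict cocycle} rather than a literal pointwise evaluation.
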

\begin{proof}
Let $\nu$ and $V$ correspond. Then $\nu$ gives rise to a strict
$(E_0,X)$-cocycle  $\gamma$ such that $\gamma(g,x_0) = \nu(g)$.  The
representation $V$ acts on the Hilbert space $\mathcal{H} =
L^2(X,\mathcal{K},\lambda)$ of $\mathcal{K}$-valued functions on $X$ and
$\lambda$ is a quasi-invariant probability measure on $X$. The action of
$V$ is given by
$$(V(h)f)(x) = \rho_h(h^{-1} [x])^{\frac{1}{2}} \gamma(h,h^{-1}
[x]) f(h^{-1}  [x])$$ where $\rho_h = d\lambda\slash d\lambda^{h^{-1}}$.
Note that $T$ is central in
$E$ and so acts trivially on $X$ by Lemma 3.1, and therefore $\rho_t = 1$ for all $t\in T$. Suppose now that $\nu(t) = tI$ for $t \in T$.  By Lemma \ref{lemma
strict cocycle}, $\gamma(t,x) = t$ for all $x \in X$. This shows that $V(t) =
tI$. Conversely, suppose that $V(t) = tI$ for all $t \in T$. Then
for each $t \in T$, $\gamma(t,t^{-1}[x]) = t$ for almost all $x$ so
that $\gamma(t,x) = t$ for almost all $x$.  By Lemma \ref{lemma
strict cocycle} we have that $\nu(t) = t$ for all $t \in T$.
\end{proof}
\begin{theorem}\label{theorem correspondance m0 reps and msystems}
There is a natural one to one correspondence between
$m_0$-representations $\mu$ of $G_0$ and $m$-systems of
imprimitivity $S_\mu:=(U,P)$ of $G$  based on $X$. Under this
correspondence, we have a ring isomorphism of the commuting ring of
$\mu$ with that of $S_\mu$, so that irreducible $\mu$ correspond to irreducible $S_\mu$.
\end{theorem}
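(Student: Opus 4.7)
The plan is to reduce the $m$-twisted statement to the classical (untwisted) Mackey imprimitivity theorem by passing to the central extension $E = E_m$ constructed above, and then to transfer the correspondence downstairs using Lemma \ref{lemma correspondance imprimitivity URs of E0}. The key observation is that $E$ acts on $X$ through $G$, $E_0$ is the stabilizer of $x_0$ in $E$, and $E_0 \simeq E_{m_0}$, so the classical imprimitivity theorem applied to the pair $(E, E_0)$ gives a bijection between ordinary UR's of $E_0$ and ordinary systems of imprimitivity of $E$ based on $X$.

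Given an $m_0$-representation $\mu$ of $G_0$, I would first lift it to the ordinary UR $\nu$ of $E_0$ determined by $\nu(g, t) = t\,\mu(g)$; this is the unique UR of $E_0$ with $\nu(1, t) = tI$ whose reduction to $G_0$ via a Borel section recovers $\mu$. The classical Mackey imprimitivity theorem then produces an ordinary system of imprimitivity $(V, P)$ of $E$ based on $X$ corresponding to $\nu$. By Lemma \ref{lemma correspondance imprimitivity URs of E0} we have $V(1, t) = tI$, so $V$ descends to an $m$-representation $U$ of $G$ via $U(g) = V(g, 1)$, and the covariance identity $U(g)P(E)U(g)^{-1} = P(g[E])$ is inherited directly from $(V, P)$ since scalars commute with $P(E)$. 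This defines $S_\mu = (U, P)$.

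In the reverse direction, starting from an $m$-system of imprimitivity $(U, P)$ of $G$, one lifts $U$ to the UR $V$ of $E$ with $V(g, t) = tU(g)$; the pair $(V, P)$ is then an ordinary system of imprimitivity for $E$ on $X$ with $V(1, t) = tI$. Classical imprimitivity attaches a UR $\nu$ of $E_0$, and Lemma \ref{lemma correspondance imprimitivity URs of E0} forces $\nu(1, t) = tI$, so $\nu$ descends to an $m_0$-representation $\mu$ of $G_0$. Mutual inversion of the two constructions is immediate because each of the three intermediate bijections (lift $\mu \leftrightarrow \nu$, classical imprimitivity $\nu \leftrightarrow (V,P)$, and descent $(V,P) \leftrightarrow (U,P)$) is functorial in the obvious sense.

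For the commuting ring statement, an operator on the representation space commutes with $\mu$ iff it commutes with its lift $\nu$, since $T$ acts by scalars on both sides. The classical imprimitivity theorem provides a ring isomorphism between the commuting ring of $\nu$ and that of $(V, P)$. Finally, because $V(g, t) = tU(g)$, an operator commutes with $(V, P)$ iff it commutes with $(U, P)$. Composing these three ring isomorphisms yields the asserted one, and irreducibility of $\mu$ corresponds to irreducibility of $S_\mu$. The main technical point I expect to have to handle carefully is the Borel measurability of the section $G \to E$ used in the descent: one must verify that although $U$ is only a Borel $m$-representation, the covariance relation and the commuting ring identifications are insensitive to the choice of section precisely because the central $T$-action is scalar throughout.
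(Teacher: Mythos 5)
Your proposal follows essentially the same route as the paper's own proof: pass to the central extension $E_m$, convert the $m$-system of imprimitivity $(U,P)$ to the ordinary system $(V,P)$ via $V(x,t)=tU(x)$, invoke Lemma~\ref{lemma correspondance imprimitivity URs of E0} to match these with UR's $\nu$ of $E_0$ satisfying $\nu(t)=tI$, use $E_0\simeq E_{m_0}$ to descend to $m_0$-representations of $G_0$, and observe that the three intermediate identifications preserve commuting rings. The concern you raise at the end about the Borel section $G\to E$ is not actually an issue, since $E_m=G\times T$ as a Borel space and $g\mapsto(g,1)$ is the obvious Borel section; otherwise the argument matches the paper's.
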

\begin{proof}
If $(U,P)$ is an $m$-system of imprimitivity for $G$, and we define
$V$ on $E_m$ by $V(x,t) = tU(x)$, then $V$ is an ordinary
representation and $(V,P)$ is thus an ordinary system of
imprimitivity for $E_m$ based on $X$.  We have $V(t)=tI$ for all $t\in T$ and all such $(V, P)$ arise in this manner. Lemma \ref{lemma
correspondance imprimitivity URs of E0} now says that there is a
bijection between these $(V,P)$  and UR's
$\nu$ of $E_0$ such that $\nu(t) = tI$. Since $E_0 \simeq
E_{m_0}$, there is a bijection between $\nu$-representations of
$E_0$ for which $\nu(t) = tI$ and $m_0$-representations $\mu(x) =
\nu(x,1)$ of $G_0$.  So, there exists a bijection between
$m$-systems of imprimitivity $(U,P)$ for $G$ and
$m_0$-representations of $G_0$.  The commuting rings of $\nu$ and
$U$ are respectively the same as the commuting rings of $\mu$ and
$V$, which are isomorphic by the Mackey theory.
\end{proof}
\noindent We will now make the above correspondence more explicit.  We define
\emph{a strict $(G,X)$-$m$-cocycle} to be a Borel map $\delta:G
\times X \rightarrow \mathcal{U}$ such that:
$$m(g_1,g_2)\delta(g_1g_2,x) = \delta(g_1,g_2 [ x ])
\delta(g_2,x) \;\;\;\; \forall g_i \in G, \;\; x \in X.$$ Two such
cocycles $\delta_i \;\; (i=1,2)$ are cohomologous $(\simeq)$ if
there exists a Borel function $\phi: X \rightarrow \mathcal{U}$ such
that $\delta_2(g,x) = \phi(g [x])\delta_1(g,x)\phi(x)^{-1}$ for all
$g \in G, \;\; x\in X$.
\begin{lemma}\label{lemma bijection between strict cocycles and strict m cocycles} There is a natural bijection
between strict $(E,X)$-cocycles $\gamma$ such that $\gamma((t),x)
= tI \;\; \forall t \in T, \; x \in X$, and strict
$(G,X)$-$m$-cocyles $\delta$, which given by $\delta(g,x) =
\gamma((g,1),x)$, $\gamma((g,t),x) = t \delta(g,x)$.  This bijection
respects equivalences, and induces a bijection of the respective
cohomology sets.
\end{lemma}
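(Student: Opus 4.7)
The plan is to verify that the two formulas in the statement give mutually inverse Borel maps, and then check that each respects the appropriate cocycle identities and cohomology relations. I will normalize the multiplier $m$ so that $m(1,g)=m(g,1)=1$; then in $E_m = G\times T$ the element $(1,t)$ really is the central element $t$, and $(1,t)(g,1)=(g,1)(1,t)=(g,t)$, so that every element of $E_m$ factors uniquely as $(g,1)\cdot(1,t)$.

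First I would show that if $\gamma$ is a strict $(E,X)$-cocycle with $\gamma((1,t),x)=tI$, then $\delta(g,x):=\gamma((g,1),x)$ is a strict $(G,X)$-$m$-cocycle. Applying the cocycle identity for $\gamma$ to the pair $(g_1,1),(g_2,1)$ gives
$$\gamma\bigl((g_1,1)(g_2,1),x\bigr)=\gamma\bigl((g_1,1),g_2[x]\bigr)\,\gamma\bigl((g_2,1),x\bigr).$$
Because $(g_1,1)(g_2,1)=(g_1g_2,m(g_1,g_2))=(1,m(g_1,g_2))(g_1g_2,1)$, one more application of the cocycle identity together with the hypothesis $\gamma((1,t),y)=tI$ rewrites the left side as $m(g_1,g_2)\,\delta(g_1g_2,x)$, yielding exactly the $m$-cocycle relation. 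Conversely, given $\delta$, define $\gamma((g,t),x):=t\,\delta(g,x)$; a direct computation using $(g_1,t_1)(g_2,t_2)=(g_1g_2,m(g_1,g_2)t_1t_2)$ shows the $E$-cocycle identity for $\gamma$ reduces, after cancelling $t_1t_2$, to the $m$-cocycle identity for $\delta$. Also $\gamma((1,t),x)=t\,\delta(1,x)=tI$, since the $m$-cocycle identity at $g_1=g_2=1$ forces $\delta(1,x)=I$. The two constructions are visibly inverse to one another, and both are Borel because projection $(g,t)\mapsto(g,1)$ and scalar multiplication in $\mathcal{U}$ are Borel.

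For the equivalence relation, suppose $\gamma_1\simeq\gamma_2$ via a Borel $\phi:X\to\mathcal{U}$, i.e.\ $\gamma_2(h,x)=\phi(h[x])\gamma_1(h,x)\phi(x)^{-1}$ for $h\in E$. Specializing to $h=(g,1)$ gives $\delta_2(g,x)=\phi(g[x])\delta_1(g,x)\phi(x)^{-1}$, so the associated $\delta_i$ are cohomologous via the same $\phi$. Conversely, if $\delta_1\simeq\delta_2$ via $\phi$, then the identity $\gamma_i((g,t),x)=t\,\delta_i(g,x)$ together with the fact that $\phi(x)^{-1}$ commutes with the scalar $t$ shows $\gamma_2((g,t),x)=\phi(g[x])\gamma_1((g,t),x)\phi(x)^{-1}$, establishing the corresponding equivalence in $E$. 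Hence the bijection descends to cohomology.

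I do not expect a real obstacle here; the only points requiring care are (i) fixing the normalization convention for $m$ so that the factorization $(g,t)=(1,t)(g,1)$ holds on the nose, and (ii) verifying Borel regularity of the inverse map, which follows because $E_m=G\times T$ as a Borel space and scalar multiplication on $\mathcal{U}$ (with its strong Borel structure) is jointly Borel. If one works with multipliers that are merely equivalent to normalized ones, one would first pass to a normalized representative, which does not affect the cohomology classes involved.
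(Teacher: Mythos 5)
Your proof is correct and follows essentially the same route as the paper: both directions of the bijection are verified by applying the $E$-cocycle identity to suitable factorizations of $(g_1,1)(g_2,1)$ (the paper uses $(g_1g_2,1)=(g_1,1)(g_2,m(g_1,g_2)^{-1})$, you use $(g_1,1)(g_2,1)=(1,m(g_1,g_2))(g_1g_2,1)$, which are interchangeable), and the compatibility with equivalence is the same observation that the intertwining function $\phi$ is unaffected by the passage from $G$ to $E$. Your explicit normalization $m(1,g)=m(g,1)=1$ and the remark that $\delta(1,x)=I$ are standard details that the paper leaves tacit.
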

\begin{proof}

\noindent We have,
$$\gamma((g_1,t_1)(g_2,t_2),x) =
\gamma((g_1g_2,t_1t_2m(g_1g_2)),x)$$
$$= t_1t_2m(g_1,g_2)\delta(g_1g_2,x) = t_1t_2\delta(g_1,g_2 [x])\delta(g_2 ,x) $$
$$ = \gamma((g_1,t_1),(g_2,t_2) [x]) \gamma ((g_2,t_2),x).$$
Also,
$$\delta(g_1g_2,x) = \gamma((g_1g_2,1),x) = \gamma(g_1,1)(g_2,
m(g_1,g_2)^{-1},x)$$
$$=\gamma((g_1,1),g_2 [x]) \gamma((g_2,m(g_1,g_2)),x)$$
$$\delta(g_1,g_2 [x]) m(g_1,g_2)^{-1} \delta(g_2,x).$$
The fact that the correspondence respects equivalence is clear since
$X$ is the same for both and since the condition $\gamma((t),x)=t
I \; \forall \; t\in T, \; x\in X$ is unchanged under equivalence.
\end{proof}
\begin{lemma}\label{lemma m-representation Hilbert space action}
Given an $m_0$-representation $\mu$ of $G_0$ there is a strict
$m$-cocycle $\delta$ with values in $\mathcal{U}$ such that
$\delta(g,x_0) = \mu(g)$.  In the corresponding $m$-system $(U,P)$,
the action of $U$ is given as follows: $U$ acts on
$L^{2}(X,\mathcal{K},\lambda)$ with
$$(U(g)f)(x) = \rho_g(g^{-1} [x])^{\frac{1}{2}}\delta(g,g^{-1}
[x])f(g^{-1} [x]).$$ The $\rho$ factors drop out if $\lambda$ is
invariant.
\end{lemma}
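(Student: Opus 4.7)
The plan is to reduce the statement to ordinary Mackey theory on the central extension $E = E_m$, via Theorem~\ref{theorem correspondance m0 reps and msystems} and Lemma~\ref{lemma bijection between strict cocycles and strict m cocycles}. First I lift the $m_0$-representation $\mu$ of $G_0$ to the ordinary unitary representation $\nu$ of $E_0 \simeq E_{m_0}$ defined by $\nu(g,t) = t\mu(g)$; the $m_0$-multiplicativity of $\mu$ is exactly the multiplicativity of $\nu$, and tautologically $\nu(t) = tI$ for $t \in T$.

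Next I construct the strict $(E,X)$-cocycle $\gamma$ attached to $\nu$ by the standard Mackey recipe. Mackey's Borel cross-section theorem for lcsc transitive actions supplies a Borel section $s \colon X \to E$ with $s(x_0) = 1$; set
$$\gamma(h,x) = \nu\bigl(s(h[x])^{-1} h\, s(x)\bigr),$$
which is well-defined since $s(h[x])^{-1} h\, s(x) \in E_0$. A short algebraic manipulation yields the cocycle identity $\gamma(h_1 h_2, x) = \gamma(h_1, h_2[x])\gamma(h_2,x)$ and, for $h \in E_0$, the boundary value $\gamma(h,x_0) = \nu(h)$. Because $T$ is central in $E$ and acts trivially on $X$ (Lemma~\ref{lemma strict cocycle}), the inner expression collapses to $t$ when $h = (t)\in T$, giving $\gamma((t),x) = tI$. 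Lemma~\ref{lemma bijection between strict cocycles and strict m cocycles} then produces the strict $(G,X)$-$m$-cocycle $\delta(g,x) := \gamma((g,1),x)$, and evaluation at $x_0$ gives $\delta(g,x_0) = \nu(g,1) = \mu(g)$, as required.

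For the explicit formula I invoke the standard Mackey construction of the ordinary system of imprimitivity $(V,P)$ on $E$ corresponding to $\nu$: $V$ acts on $L^2(X,\mathcal{K},\lambda)$ by
$$(V(h)f)(x) = \rho_h(h^{-1}[x])^{1/2}\,\gamma(h,h^{-1}[x])\,f(h^{-1}[x]).$$
The proof of Theorem~\ref{theorem correspondance m0 reps and msystems} identifies $V(g,t) = tU(g)$, so specialising to $h = (g,1)$ and using $\gamma((g,1),y) = \delta(g,y)$ yields the stated formula for $U(g)$. The $m$-system identity $U(g_1)U(g_2) = m(g_1,g_2)U(g_1g_2)$ then drops out of the $m$-cocycle identity for $\delta$, and the $\rho$ factor is identically $1$ whenever $\lambda$ is $G$-invariant. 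I expect the only non-algebraic ingredient to be the Borel measurability of $s$ (and hence of $\gamma$), which is supplied by the Borel cross-section theorem; everything else is bookkeeping between the $E_m$-picture and the $G$-picture.
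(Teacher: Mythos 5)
Your proposal is correct and follows essentially the same route as the paper: lift $\mu$ to the UR $\nu(g,t)=t\mu(g)$ of $E_0 \simeq E_{m_0}$, build a strict $(E,X)$-cocycle $\gamma$ with the prescribed boundary value at $x_0$, and translate to the strict $m$-cocycle $\delta$ via Lemma~\ref{lemma bijection between strict cocycles and strict m cocycles}. You simply make explicit two steps that the paper leaves tacit --- the construction of $\gamma$ from a Borel cross-section $s\colon X\to E$ with $s(x_0)=1$, and the specialization $V(g,1)=U(g)$ that produces the displayed formula --- both of which check out.
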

\begin{proof}
Define $\nu(x,t)=t \mu(x)$ for $(x,t) \in E_0$.  Then $\nu$ is a UR
of $E_0$. This is a consequence of the fact that $E_0 \simeq
E_{m_0}$.  Next, we build a strict $(E,X)$-cocyle $\gamma$ with
$\gamma((g,t),x_0) = \nu((g,t)), \; (g,t) \in E_0$.  By Lemma
\ref{lemma bijection between strict cocycles and strict m cocycles},
such strict $(E,X)$-cocycles are in bijection with strict
$(G,X)$-$m$-cocyles $\delta$ given by $\delta(g,x) =
\gamma((g,1),x)$ and $\gamma((g,t),x) = t \delta(g,x)$.
\end{proof}
\noindent We will need the following lemma for later use.
\begin{lemma}\label{lemma strict cocycle lemma} Let $\delta_i$ $(i=1,2)$ be two strict $m$-cocycles
for $G$ such that for each $g \in G$, $\delta_1(g,x)=\delta_2(g,x)$
for almost all $x \in X$.  Let $\nu_i$ be the $m$-representations of
$G_0$ defined by $\delta_i$ $(i=1,2)$.  Then $\nu_1 \simeq \nu_2$.
\end{lemma}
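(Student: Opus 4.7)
The plan is to reduce the claim to equality of the associated $m$-systems of imprimitivity on $L^2(X,\mathcal{K},\lambda)$ and then invoke Theorem~\ref{theorem correspondance m0 reps and msystems}. One cannot simply evaluate at $x=x_0$ to conclude $\nu_1(g) = \delta_1(g,x_0) = \delta_2(g,x_0) = \nu_2(g)$, because $\{x_0\}$ is typically a $\lambda$-null set; so the almost-everywhere hypothesis has to be exploited at the operator level.

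First I would form the two $m$-systems $S_{\nu_i}=(U_i, P)$ associated with $\delta_i$ by Lemma~\ref{lemma m-representation Hilbert space action}. The PVM $P$, being multiplication by indicator functions on $L^2(X,\mathcal{K},\lambda)$, is the same in both systems, while
$$(U_i(g)f)(x) = \rho_g(g^{-1}[x])^{1/2}\,\delta_i(g, g^{-1}[x])\,f(g^{-1}[x]).$$
Fix $g\in G$ and let $N_g \subset X$ be a $\lambda$-null set outside of which $\delta_1(g,\cdot)=\delta_2(g,\cdot)$. Because $\lambda$ is quasi-invariant under the $G$-action, $g[N_g]$ is again $\lambda$-null, hence $\delta_1(g, g^{-1}[x]) = \delta_2(g, g^{-1}[x])$ for $\lambda$-almost every $x$. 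Therefore $U_1(g)f = U_2(g)f$ in $L^2$ for every $f \in L^2(X,\mathcal{K},\lambda)$, so $U_1(g)=U_2(g)$ as unitaries on $\mathcal{H}$. It follows that $S_{\nu_1} = S_{\nu_2}$ as $m$-systems of imprimitivity.

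Finally, Theorem~\ref{theorem correspondance m0 reps and msystems} provides a one-to-one correspondence between equivalence classes of $m_0$-representations of $G_0$ and equivalence classes of $m$-systems of imprimitivity of $G$ based on $X$. Since $S_{\nu_1}$ and $S_{\nu_2}$ literally coincide, they represent the same class, and so $\nu_1 \simeq \nu_2$. The only delicate point in the argument is the null-set transfer at the end of the second paragraph, which rests on the quasi-invariance of $\lambda$; without it one could not pass from the fiber-wise a.e.\ equality of the cocycles to honest equality of the associated $L^2$-operators.
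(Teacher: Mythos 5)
Your proof is correct, and it takes a route that is parallel to, but not identical with, the paper's. The paper lifts the strict $m$-cocycles $\delta_i$ to strict ordinary $(E,X)$-cocycles $\gamma_i((g,t),x)=t\delta_i(g,x)$ on the central extension $E=E_m$, observes that $\gamma_1=\gamma_2$ a.e.\ in $x$ for each fixed $(g,t)\in E$, and then simply cites Lemma~5.25 of \cite{var1} (p.~178) to conclude that the associated unitary representations $\mu_i$ of $E_0$ are equivalent, whence $\nu_1\simeq\nu_2$ by restriction. You instead stay inside the $m$-framework: you pass to the associated $m$-systems of imprimitivity $S_{\nu_i}=(U_i,P)$ on $L^2(X,\mathcal{K},\lambda)$, use quasi-invariance of $\lambda$ to transfer the null set so that $U_1(g)=U_2(g)$ for every $g$, and then invoke the bijection of Theorem~\ref{theorem correspondance m0 reps and msystems} between (equivalence classes of) $m_0$-representations of $G_0$ and $m$-systems for $G$. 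Both arguments exploit the same essential fact — a.e.\ agreement of cocycles forces equality of the induced operators on $L^2$, and one then uses an imprimitivity-type bijection to descend to the stabilizer — but your version is self-contained within this paper's machinery and makes explicit the null-set transfer that the cited Lemma~5.25 presumably encapsulates, whereas the paper's version is shorter because it delegates the operator-level work to the reference after pushing the data up to the central extension. One small caveat in your write-up: Theorem~\ref{theorem correspondance m0 reps and msystems} is stated as a "natural one-to-one correspondence," and the intended (and needed) reading is a bijection on \emph{equivalence classes}; you interpret it that way, which is correct, but it is worth flagging since the theorem's phrasing is slightly loose on this point.
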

\begin{proof}
Let $\gamma_i$ be the strict $(E,X)$ cocycle defined by
$\gamma_i((g,t),x) = t \delta_i(g,x).$  Then for each $(g,t) \in E$,
$\gamma_1((g,t),x) = \gamma_2((g,t),x)$ for almost all $x \in X$.
Let $\mu_i((g,t)) = \gamma_i((g,t),x_0),  \;\; g\in G_0$.  Then by
\cite{var1} p. 178 Lemma 5.25, $\mu_1 \simeq \mu_2$.  Since
$\nu_i(g) = \mu_i(g,1))$, $\nu_1 \simeq \nu_2$.
\end{proof}
\subsection{The Mackey machine for projective unitary irreducible representations of semidirect products}\label{section The Mackey machine for projective representations}
We now turn our attention to the Mackey treatment
of lcsc groups with a semidirect  product structure. In this section
we are going to consider a group $H= A \rtimes G$ where $G$ and $A$
are lcsc groups and $A$ is abelian. We concern ourselves only
with multipliers of $H$ which are trivial when restricted to $A
\times A$.  We recall that these multipliers are completely
described by Theorem \ref{theorem theorem 1 moscow paper}.

\vspace{14pt}

\noindent The following lemma introduces the key idea  that is needed for the variant of the Mackey machine for semidirect products when we deal with projective unitary representations.
\begin{lemma}\label{lemma affine action} Let $\phi:G \rightarrow A^*$ be a continuous map with $\phi(1)=0$.
Define $g\{\chi\}=g_\phi\{\chi \} = g[\chi] + \phi(g)$, for $g \in G, \chi \in
A^*$.  Then $a_{\phi}:(g, \chi) \mapsto g_\phi\{\chi\}$ defines an action of $G$ on
$A^*$ if and only if $\phi \in Z^1(G,A^*)$.\end{lemma}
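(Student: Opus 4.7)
The plan is to verify the two axioms for a group action directly and see that one is automatic from $\phi(1)=0$ while the other is equivalent to the $1$-cocycle identity. There is essentially no obstacle here; it is a one-line computation once things are unwound, and the point of the lemma is conceptual rather than technical.

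First I would check that the identity acts trivially. Since $\phi(1) = 0$ and $1[\chi] = \chi$, we have $1_\phi\{\chi\} = 1[\chi] + \phi(1) = \chi$, so the identity axiom holds regardless of whether $\phi$ is a cocycle.

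Next I would impose the associativity requirement $(g_1 g_2)\{\chi\} = g_1\{g_2\{\chi\}\}$ and expand both sides using the definition. The left-hand side is
\[
(g_1 g_2)\{\chi\} = (g_1 g_2)[\chi] + \phi(g_1 g_2) = g_1\bigl[g_2[\chi]\bigr] + \phi(g_1 g_2),
\]
while the right-hand side is
\[
g_1\{g_2\{\chi\}\} = g_1\bigl[g_2[\chi] + \phi(g_2)\bigr] + \phi(g_1) = g_1\bigl[g_2[\chi]\bigr] + g_1[\phi(g_2)] + \phi(g_1),
\]
using that the linear action of $G$ on $A^*$ preserves addition. Subtracting the common term $g_1[g_2[\chi]]$, associativity for all $\chi \in A^*$ is equivalent to
\[
\phi(g_1 g_2) = \phi(g_1) + g_1[\phi(g_2)] \qquad (g_1, g_2 \in G),
\]
which is exactly the definition of $\phi \in Z^1(G, A^*)$ recalled just before Theorem \ref{theorem theorem 1 moscow paper}.

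Finally I would record that $a_\phi$ is automatically a Borel (indeed continuous) action under either equivalent hypothesis, since $\phi$ is assumed continuous and the original $G$-action on $A^*$ is continuous, so no measurability issue arises. This completes both directions: if $\phi$ is a cocycle, the two displayed computations match and $a_\phi$ is an action; conversely, if $a_\phi$ is an action, taking the two displayed expressions to be equal for all $\chi$ forces the cocycle identity.
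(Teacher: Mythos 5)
Your proof is correct and follows essentially the same route as the paper: expand both sides of the compatibility axiom $(g_1g_2)\{\chi\} = g_1\{g_2\{\chi\}\}$, cancel the common linear term, and observe that what remains is precisely the $1$-cocycle identity. You additionally verify the identity axiom (which the paper leaves implicit, relying on $\phi(1)=0$) and remark on continuity, but these are minor completions rather than a different argument.
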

\begin{proof}
\noindent If $a_\phi$ is to be an action on $A^*$, then $g_2\{g_1\{\chi
\} \} = g_2g_1 \{ \chi \}$ for all $\chi \in A^*$. Now
$$g_2\{g_1\{\chi \} \} = g_2[g_1[\chi] + \phi(g_1)] + \phi(g_2) = g_2[g_1[\chi]] + g_2[\phi(g_1)] + \phi(g_2).$$
On the other hand
$$g_2g_1\{\chi\} = g_2g_1[\chi] + \phi(g_2g_1).$$
Equating the two we see that the condition on $\phi$ is
$$\phi(g_2g_1) = g_2[\phi(g_1)] + \phi(g_2),$$
that is, $\phi \in Z^1(G,A^*).$
\end{proof}
If $\phi' \in Z^1(G,A^*)$ defines the same element as $\phi$ in $H^1(G,A^*)$, then
$\phi'(g) = \phi(g) + g [ \chi_0] - \chi_0$ for some $\chi_0 \in
A^*$.   So,
$$g_{\phi'} \{\chi\} = g[\chi] + \phi'(g) = g[\chi] + g [ \chi_0] -
\chi_0 = g[\chi + \chi_0] - \chi_0.$$ Let $\tau: \chi \mapsto \chi +
\chi_0$ be the translation by $\chi_0$ in $A^*$.  Then,
$$g_{\phi'} = \tau^{-1} g_{\phi} \tau.$$
So the actions defined by $\phi$ and $\phi'$ are equivalent in this
strong sense.
\begin{definition} The action $a_\phi: (g,\chi) \mapsto g_\phi\{\chi\}$ is called
the affine action of $G$ on $A^*$ determined by $\phi$.
\end{definition}
\begin{theorem}\label{theorem mackey theorem for m-systems of imprimitivity}Fix $\theta \in Z^1(G,A^*)$ and $m \in M_A'(H), \; m \simeq (m_0,\theta)$.
Then there is a natural bijection between $m$-representations $V$ of $H=A
\rtimes G$ and $m_0$-systems of imprimitivity $(U,P)$ on $A^*$ for
the affine action $g, \chi \mapsto g_\theta\{\chi\}$, defined by $\theta$. The bijection is given by:
$$V(ag) = U(a)U(g), \;\;\; U(a) = \int_{A^*}\langle a,\chi\rangle  dP(\chi).$$
\end{theorem}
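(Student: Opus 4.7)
The plan is to adapt the classical Mackey normal-subgroup analysis to projective representations, using the decomposition $m(ag,a'g')=m_0(g,g')\theta(g^{-1})(a')$ from Theorem~\ref{theorem theorem 1 moscow paper} together with the affine action introduced in Lemma~\ref{lemma affine action}. Given an $m$-representation $V$ of $H=A\rtimes G$, the restriction $V|_A$ is an ordinary unitary representation of the abelian lcsc group $A$ (since $m\in M'_A(H)$ forces $m|_{A\times A}=1$), so SNAG produces a unique projection-valued measure $P$ on $A^*$ with $V(a)=\int_{A^*}\langle a,\chi\rangle\,dP(\chi)$. Setting $U(g):=V(g)$ for $g\in G$ gives an $m_0$-representation of $G$, because $m|_{G\times G}=m_0$.

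For the imprimitivity relation the key computation is $V(g)V(a)V(g)^{-1}$. Using $m|_{A\times G}=1$ one has $V((g[a])g)=V(g[a])V(g)$, and combining with $V(g)V(a)=m(g,a)V(ga)=m(g,a)V((g[a])g)$ yields
$$V(g)V(a)V(g)^{-1}=m(g,a)\,V(g[a])=\theta(g^{-1})(a)\,V(g[a]).$$
Substituting the spectral form of both sides and changing variables by $\chi\mapsto g_\theta^{-1}\{\chi\}$ in the affine action $g_\theta\{\chi\}=g[\chi]+\theta(g)$, SNAG uniqueness forces $U(g)P(E)U(g)^{-1}=P(g_\theta\{E\})$, which is exactly the $m_0$-imprimitivity condition for the affine action.

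Conversely, given an $m_0$-system of imprimitivity $(U,P)$ on $A^*$ for the affine action, define $V(a):=\int\langle a,\chi\rangle\,dP(\chi)$ and $V(ag):=V(a)U(g)$. To check that $V$ is an $m$-representation, expand
$$V(a_1g_1)V(a_2g_2)=V(a_1)\bigl(U(g_1)V(a_2)U(g_1)^{-1}\bigr)U(g_1)U(g_2);$$
the affine-action imprimitivity relation produces the scalar $\langle a_2,\theta(g_1^{-1})\rangle$ and replaces $V(a_2)$ by $V(g_1[a_2])$, while $U(g_1)U(g_2)=m_0(g_1,g_2)U(g_1g_2)$. Recognising $(a_1g_1)(a_2g_2)=(a_1+g_1[a_2])g_1g_2$, the overall scalar is $m_0(g_1,g_2)\theta(g_1^{-1})(a_2)$, which by Theorem~\ref{theorem theorem 1 moscow paper} is precisely $m(a_1g_1,a_2g_2)$. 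Bijectivity of the two constructions is then immediate from SNAG uniqueness and $V|_G=U$, and a bounded operator commutes with $V$ iff it commutes with every $P(E)$ (equivalently every $V(a)$) and every $U(g)$, giving the commuting-ring isomorphism and hence the irreducibility correspondence.

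The main obstacle is not conceptual but careful bookkeeping: one must correctly track the affine correction $\theta(g)$ while commuting $U(g)$ past spectral integrals over $A^*$, and verify that the scalar produced by the affine-action imprimitivity condition reassembles exactly into the off-diagonal part $\theta(g_1^{-1})(a_2)$ of $m(a_1g_1,a_2g_2)$ furnished by Theorem~\ref{theorem theorem 1 moscow paper}. The essential structural point is that the \emph{affine} rather than linear action of $G$ on $A^*$ is exactly what is required to absorb this $\theta$-factor; once this is observed, the rest of the proof reduces to SNAG and the multiplier decomposition already in hand.
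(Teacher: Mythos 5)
Your argument is correct and follows essentially the same route as the paper: restrict to $A$, invoke SNAG for the pvm $P$, derive the commutation relation $V(g)V(a)V(g)^{-1}=\theta(g^{-1})(a)V(g[a])$, and push this through the spectral integral to obtain the affine imprimitivity condition $U(g)P(E)U(g)^{-1}=P(g_\theta\{E\})$, with the converse being the reverse computation. The closing remark on commuting rings is not part of this theorem's statement (the paper records that correspondence separately, in Theorem~\ref{theorem correspondance m0 reps and msystems}), but it does no harm here.
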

\begin{proof}
The assumption $m\simeq (m_0, \theta)$ means that
$$m(ag,a'g') = m_0(g,g')\theta(g^{-1})(a')$$
where $m_0$ is a multiplier for $G$ and $\theta$ is a cocycle in
$Z^1(G,A^*)$. Let $V$ be a $m$-representation for $H$ and let us write $U$ for the restriction of $V$ to $A$ and $G$. Then $U$ is an ordinary representation of $A$ as well as a $m_0$-representation of $G$ and
$$
V(ag)=U(a)U(g).
$$
Moreover
$$U(g)U(a)U(g)^{-1} = \theta(g^{-1})(a)U(g[a]).$$
Indeed, we have $\theta(g^{-1})(a)=m(g,a)$ and
$$
U(g)U(a)=m(g,a)V(ga)=m(g,a)V(g[a]g)=m(g,a)U(g[a])U(g).
$$
Since $U$ is an ordinary representation on $A$, there exists a unique pvm (projection valued measure) $P$ on
$A^*$ such that:
$$U(a) = \int_{A^*} \langle a, \chi\rangle  dP(\chi) \;\;\;\; (a \in A).$$
\noindent Thus
$$U(g) U(a) U(g)^{-1} =\int_{A^*}\langle a,\chi\rangle dQ_g(\chi).$$
Here $Q_g$ is the pvm defined by $Q_g(E) = U(g)P(E)U(g)^{-1}$. On the other hand
$$
U(g)U(a)U(g)^{-1}=\theta(g^{-1}(a)U(g[a])
=\theta(g^{-1})(a)\int_{A^\ast}\langle g[a],\chi\rangle dP(\chi)
$$
and the right side can be rewritten as
$$
\int_{A^\ast}\langle a,g^{-1}[\chi]+\theta(g^{-1})\rangle dP(\chi)
=\int_{A^\ast}\langle a, g^{-1}\{\chi\}\rangle dP(\chi)
$$
so that
$$U(g)U(a)U(g)^{-1}=\int_{A^\ast}\langle a, g^{-1}\{\chi\}\rangle dP(\chi).
$$
Now, if $t$ is a Borel automorphism of $A^\ast$ as a Borel space and $f$ is a bounded Borel function on $A^\ast$, then
$$
\int_{A^\ast}f(t^{-1}(\chi)) dP(\chi)=\int_{A^\ast}f(\chi)dP_t(\chi)\eqno (\ast)
$$
where $P_t$ is the pvm defined by
$$
P_t(E)=P(t[E]).
$$
To see this, observe that $(\ast)$ is true if $f=1_E$, the characteristic function of a Borel set $E\subset A^\ast$; hence $(\ast)$ is true for $f$ which are finite linear combinations of such characteristic functions, and hence
also for all their uniform limits, which are precisely all bounded Borel functions. Hence we get
$$
U(g)U(a)U(g)^{-1}=\int_{A^\ast}\langle a, g^{-1}\{\chi\}\rangle dP(\chi)=
\int_{A^\ast}\langle a, \chi\rangle dP_g(\chi)
$$
where $P_g$ is the pvm defined by
$$
P_g(E)=P(g\{E\}).
$$
But we had seen that
$$U(g) U(a) U(g)^{-1} =\int_{A^*}\langle a,\chi\rangle dQ_g(\chi).$$
Hence
$$
\int_{A^\ast}\langle a, \chi\rangle dQ_g(\chi)=
\int_{A^\ast}\langle a, \chi\rangle dP_g(\chi)
$$
showing that
$$
Q_g=P_g
$$
or
$$
U(g)P(E)U(g)^{-1}=P(g\{E\}).
$$
We have thus shown that for the action of $G$ on $A^*$ by $g,\chi \mapsto
g \{ \chi \}$, $(U,P)$ is an $m_0$-system of imprimitivity.
Conversely, suppose $(U,P)$ is an $m_0$-system of imprimitivity for
this action.  Then, by retracing the steps in the above calculation
with $U(a) = \int_{A^*} \langle a,\chi\rangle  dP(\chi)$ we find:
$$U(g)U(a)U(g)^{-1} = \theta(g^{-1})(a)U(g[a]).$$
If we define $V(ag) = U(a)U(g)$, then $V$ becomes an
$m$-representation where $m(ag,a'g') = m_0(g,g')\theta(g^{-1})(a')$.
\end{proof}
\noindent We need the following definition.
\begin{definition} If $U$ is a UR of $A$ and $E \mapsto P(E)$ is its
associated pvm, we say that $Spec(U) \subset F$, if $P(F) =I$.  Here
$F$ is some Borel set in $A^*$.  We extend this terminology to any
PUR of $H=A \rtimes G$ that is a UR on $A$.
\end{definition}
\noindent By combining theorems \ref{theorem correspondance m0 reps and msystems}, \ref{theorem mackey theorem for m-systems of imprimitivity} and Lemma \ref{lemma m-representation Hilbert space action} we obtain the basic theorem
of irreducible $m$-representations
of $H$.
\begin{theorem}\label{theorem Mackeys theorem for m-reps} Fix $\chi \in A^*$, $m \simeq(m_0, \theta)$.  Then there is a natural
bijection between irreducible $m$-representations $V$ of $H=A
\rtimes G$ with $Spec(V) \subset G\{\chi \}$ (the orbit of $\chi$
under the affine action) and irreducible $m_0$-representations of
$G_\chi$, the stabilizer of $\chi$ in $G$ for the affine action.  If
the affine action is regular, every irreducible $m$-representation
of $H$, up to unitary equivalence, is obtained by this procedure.
Let $X =G\{\chi\}$ and $\lambda$ be a $\sigma$-finite quasi-invariant
measure for the action of $G$. Then, for any irreducible
$m_0$-representation $\mu$ of $G_\chi$ in the Hilbert space
$\mathcal{K}$, the corresponding $m$-representation $V$ acts on
$L^2(X,\mathcal{K},\lambda)$ and has the following form:
$$(V(ag)f)(\chi) = \langle a,\chi\rangle  \rho_g (g^{-1} \{ \chi \} )^{\frac{1}{2}})
\delta(g,g^{-1} \{ \chi \})f(g^{-1} \{ \chi \} )$$
where $\delta$ is any strict $m_0$-cocyle for $(G,X)$ with values in $\mathcal{U}$,
the unitary group of $\mathcal{K}$, such that $\delta(g,\chi) =
\mu(g), \; g\in G_\chi$.
\end{theorem}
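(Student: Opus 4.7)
The plan is to assemble this theorem directly from Theorem \ref{theorem correspondance m0 reps and msystems}, Theorem \ref{theorem mackey theorem for m-systems of imprimitivity}, and Lemma \ref{lemma m-representation Hilbert space action}, together with the standard Mackey observation that, under a regularity hypothesis on the orbit space, the spectral measure of an irreducible representation is concentrated on a single orbit. First, starting from an irreducible $m$-representation $V$ of $H$ satisfying $Spec(V) \subset G\{\chi\}$, I apply Theorem \ref{theorem mackey theorem for m-systems of imprimitivity} to produce an $m_0$-system of imprimitivity $(U,P)$ on $A^*$ for the affine action $g,\chi \mapsto g_\theta\{\chi\}$. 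The commuting ring isomorphism embedded in that bijection (obtained from $V(ag) = U(a)U(g)$ and $U(a) = \int \langle a,\chi\rangle dP(\chi)$) ensures that irreducibility of $V$ is equivalent to irreducibility of $(U,P)$.

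Next, the spectral condition $Spec(V) \subset G\{\chi\}$ says $P$ is supported on the single orbit $X = G\{\chi\}$, so $(U,P)$ is an $m_0$-system of imprimitivity for the \emph{transitive} $G$-space $X \simeq G/G_\chi$. At this point Theorem \ref{theorem correspondance m0 reps and msystems} applies with $x_0 = \chi$ and $G_0 = G_\chi$, yielding a ring-isomorphism-preserving bijection between irreducible such systems $(U,P)$ and irreducible $m_0$-representations $\mu$ of $G_\chi$. Composing the two bijections gives the asserted parametrization.

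For the completeness assertion, I would invoke regularity of the affine action: when the orbit space $A^*/G$ is countably separated, any $m_0$-system of imprimitivity for $G$ on $A^*$ whose commuting ring is trivial must have its pvm supported on a single orbit (by the standard disintegration/ergodic decomposition of $P$ over $A^*/G$, which would otherwise produce nontrivial projections commuting with $U$ and $P$). Hence for any irreducible $m$-representation $V$ of $H$ there exists a $\chi \in A^*$ with $Spec(V) \subset G\{\chi\}$, so the previous bijection exhausts the irreducible $m$-representations up to equivalence.

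Finally, for the explicit formula, I would use Lemma \ref{lemma m-representation Hilbert space action} to realize the system $(U,P)$ on $L^2(X,\mathcal{K},\lambda)$: choose a strict $(G,X)$-$m_0$-cocycle $\delta$ with $\delta(g,\chi) = \mu(g)$ for $g \in G_\chi$, whose existence is exactly the content of that lemma (obtained by lifting $\mu$ to a UR of $E_{m_0}$, extending to a strict $(E,X)$-cocycle, and descending via Lemma \ref{lemma bijection between strict cocycles and strict m cocycles}). Then $U(g)$ acts by the displayed Mackey formula, while $U(a)$ becomes pointwise multiplication by $\langle a,\chi\rangle$ under the spectral decomposition of $P$; combining these through $V(ag) = U(a)U(g)$ gives exactly the stated expression. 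The main conceptual step is the regularity argument that forces the spectrum onto a single orbit; the rest is bookkeeping through the three prior results.
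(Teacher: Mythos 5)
Your proposal is correct and takes exactly the route the paper intends: the paper offers only the one-line remark that Theorem~\ref{theorem Mackeys theorem for m-reps} follows ``by combining'' Theorems~\ref{theorem correspondance m0 reps and msystems} and~\ref{theorem mackey theorem for m-systems of imprimitivity} together with Lemma~\ref{lemma m-representation Hilbert space action}, and your write-up fleshes out precisely that combination, including the commuting-ring transfer through the bijection $V(ag)=U(a)U(g)$, the restriction of $(U,P)$ to the transitive orbit $X=G\{\chi\}$, and the standard Mackey ergodicity argument under regularity to show every irreducible $m$-representation arises this way.
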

\noindent We note that the $\rho_g$-factors drop out if $\lambda$ is an invariant
measure.
\begin{corollary}Suppose $H^1(G,A^*)=0$. Then we can take
$\theta(g)=1$ so that $m(ag,a'g')= m_0(g,g')$.  In this case, the
affine action reduces to the ordinary action.
\end{corollary}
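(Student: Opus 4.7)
The plan is to apply Theorem \ref{theorem theorem 1 moscow paper} to reduce the multiplier $m$ to a canonical pair $(m_0, \theta)$ and then exploit the vanishing of $H^1(G, A^*)$ to trivialize $\theta$ cohomologically.

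First, by Theorem \ref{theorem theorem 1 moscow paper}, any $m \in M_A(H)$ is equivalent to one in $M'_A(H)$, which under the isomorphism $H^2_A(H) \simeq H^2(G) \times H^1(G, A^*)$ corresponds to a pair $(m_0, \theta)$ with $m_0 \in Z^2(G)$ and $\theta \in Z^1(G, A^*)$. The hypothesis $H^1(G, A^*) = 0$ forces $\theta \in B^1(G, A^*)$, so $\theta(g) = g[\chi_0] - \chi_0$ for some $\chi_0 \in A^*$.

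Because the decomposition in Theorem \ref{theorem theorem 1 moscow paper} is well-defined only in cohomology, I may replace the $\theta$-component by the cohomologous zero cocycle without changing the cohomology class of $m$. This produces an equivalent representative with $\theta \equiv 1$ (in multiplicative notation, i.e. $\theta(g)(a') = 1$ for all $g, a'$), and then the explicit formula $m(ag, a'g') = m_0(g, g')\theta(g^{-1})(a')$ given in Theorem \ref{theorem theorem 1 moscow paper} collapses to $m(ag, a'g') = m_0(g, g')$.

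Finally, substituting $\theta \equiv 0$ into the formula $g_\theta\{\chi\} = g[\chi] + \theta(g)$ of Lemma \ref{lemma affine action} gives $g\{\chi\} = g[\chi]$, which is exactly the ordinary action of $G$ on $A^*$. The one conceptual point to check — and it is already recorded in the paragraph immediately following Lemma \ref{lemma affine action} — is that cohomologous $1$-cocycles produce conjugate affine actions via a translation of $A^*$, so replacing $\theta$ by a cohomologous cocycle does not alter the orbit-and-stabilizer data needed later in Theorem \ref{theorem Mackeys theorem for m-reps}. I do not expect a genuine obstacle here; once Theorem \ref{theorem theorem 1 moscow paper} is in hand, the argument is essentially a bookkeeping exercise in $1$-cohomology.
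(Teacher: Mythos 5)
Your argument is correct and is essentially the same as the (unstated) reasoning the paper relies on: the decomposition of Theorem \ref{theorem theorem 1 moscow paper} is well-defined up to cohomology, so the vanishing of $H^1(G,A^*)$ lets you normalize $\theta$ to the trivial cocycle, which collapses the multiplier formula to $m(ag,a'g')=m_0(g,g')$ and reduces $g_\theta\{\chi\}=g[\chi]+\theta(g)$ to $g[\chi]$. Your extra remark that cohomologous cocycles yield translation-conjugate affine actions (so orbits and stabilizers are unaffected) is a sound and useful observation, already recorded in the discussion following Lemma \ref{lemma affine action}, and does not change the argument.
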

\section{PUIR's of the $p$-adic Poincar\'{e} group and particle classification }\label{section Particle classification of the Poincare group}
\subsection{Preliminaries}
We shall now discuss the PUIRs of the $p$-adic Poincar\'{e} group.  For
all that follows we work over the field $\Q_p$. All the groups
described will be algebraic groups defined over $\Q_p$, so that the
groups of their $\Q_p$-points are $p$-adic Lie groups, in particular
lcsc. By the Poincar\'{e} group  we mean the group $P_V = V
\rtimes \SO(V)$ where $V$ is a finite-dimensional quadratic vector
space over $\Q_p$. Elementary particles correspond to PUIRs of the
Poincar\'{e} group. Our aim is to describe these PUIRs and thus classify
the elementary particles associated to the $p$-adic Poincar\'{e}
group.

\vspace{14pt}

\noindent We want to establish first that the PUIRs of the
Poincar\'{e} group are indeed described by Theorem \ref{theorem
Mackeys theorem for m-reps}. This means we must establish that
$P_V$ satisfies the criteria required for Theorem \ref{theorem
Mackeys theorem for m-reps}. We shall replace $V^\ast$ by the algebraic dual $V'$ of $V$ since $V^*$, the topological dual of $V$, is isomorphic to the algebraic dual $V'$, the isomorphism being natural and compatible with actions
of ${\rm GL}(V)$.  See for example \cite{weil}. The isomorphism is easy to set up but depends on the choice of a non-trivial additive character on ${\q}_p$, say $\psi$. Once we choose $\psi$, then, for any $p\in V'$, $
\chi_p : a\longmapsto \psi (\langle a, p\rangle)$ is in $V^\ast$, and $p\longmapsto \chi_p$ is a topological group isomorphism of $V'$ with $V^\ast$.

\vspace{14pt}

\noindent We note that the cohomology $H^1(\SO(V),V')$ is trivial. This is because
$\SO(V)$ is semisimple, and so, by theorem 3 of \cite{var2}
$H^1(\SO(V),V')=0$.  Hence, by earlier remark, every multiplier of $P_V$ is equivalent to the lift of a multiplier
of $\SO(V)$.

\vspace{14pt}

\noindent Theorem \ref{theorem Mackeys theorem for m-reps}
requires that the action of $\SO(V)$ on $V'$ is regular.
Since the quadratic form on $V$ is nondegenerate we have,
canonically, $V \simeq V'$.  We transfer the quadratic form in $V$
to $V'$, denoting it again by $( \cdot, \cdot)$.  The action of ${\rm SO}(V)$ on $V^\ast$ then goes over to the action of ${\rm SO}(V)$ on $V'$. Since the quadratic form on $V'$ is invariant under ${\rm SO}(V)$, the level sets of
the quadratic form are invariant sets. Under $\SO(V')$, $V'$ decomposes into invariant sets of the following types.
\begin{enumerate}
\item The sets $M_a = \{p \in V' \; \vert \; (p,p) = a \neq 0 \}.$
\item The set  $M_0 = \{p \in V' \; \vert \; (p,p)=0, \;p \neq 0 \}.$
\item The set  $\{ 0 \}$.
\end{enumerate}
We may think of the elements of $V'$ as momenta although this is
just formal.

\vspace{14pt}

\begin{lemma}\label{lemma orbits} If $\dim (V)\ge 3$, the sets $M_a$ and $\{0\}$ are all the orbits. Moreover the action is regular.\end{lemma}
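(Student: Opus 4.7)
The plan is to prove transitivity of $\SO(V)$ on each level set $M_a$ (including $a=0$) via Witt's extension theorem, and then verify that every orbit is locally closed so as to invoke the standard criterion for a regular action in Mackey's sense.

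\vspace{14pt}

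\noindent For transitivity, let $p, q \in V'$ with $(p,p) = (q,q) = a$. The assignment $p \mapsto q$ is an isometry of one-dimensional quadratic subspaces, so by Witt's extension theorem (valid over any field of characteristic $\ne 2$) it extends to some $g \in \ORTH(V)$ with $g(p) = q$. To promote $g$ to an element of $\SO(V)$ when $\det g = -1$, I would compose with the reflection $s_v$ in the hyperplane $v^\perp$ for a suitable anisotropic vector $v \in q^\perp$: such an $s_v$ fixes $q$ and has determinant $-1$, so $s_v g \in \SO(V)$ still sends $p$ to $q$. The existence of such a $v$ is exactly where the hypothesis $\dim V \ge 3$ enters, and it is the main delicate point of the argument. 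If $a \ne 0$, then $V = \langle q \rangle \oplus q^\perp$ with $q^\perp$ non-degenerate of dimension $\ge 2$, so anisotropic vectors in $q^\perp$ exist. If $a = 0$, then $q \in q^\perp$, but the quotient $q^\perp / \langle q \rangle$ carries an induced non-degenerate form of dimension $\dim V - 2 \ge 1$, so any lift $v \in q^\perp$ of an anisotropic element of this quotient is itself anisotropic in $V$ and not proportional to $q$. The dimension hypothesis is genuinely needed: in a hyperbolic plane the null cone splits into two $\SO(V)$-orbits.

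\vspace{14pt}

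\noindent For regularity, I would show that every orbit is locally closed in $V'$. For $a \ne 0$, the orbit $M_a$ is the level set of the continuous function $p \mapsto (p,p)$ at a nonzero value, hence closed. The null orbit $M_0$ is $\{p : (p,p) = 0\} \setminus \{0\}$, the difference of two closed sets, hence locally closed; and $\{0\}$ is closed. Since every orbit is locally closed in the lcsc space $V'$, the orbit space is countably separated, and by the Glimm--Effros criterion the action is regular.
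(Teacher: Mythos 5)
Your argument is correct and follows essentially the same route as the paper: Witt's extension theorem to produce an isometry carrying $p$ to $q$, followed by composition with a determinant $-1$ isometry fixing $q$ (a hyperplane reflection $s_v$ in your case, an abstract $s'\in\ORTH(U)$ with $\det s'=-1$ in the paper's), and then regularity via Effros from local closedness of the orbits. The only cosmetic difference is in the null case: you locate the anisotropic vector $v$ by passing to the non-degenerate quotient $q^\perp/\langle q\rangle$, whereas the paper picks a companion null vector to split off a hyperbolic plane $W$ and reflects inside $W^\perp$ -- the same fact about $\dim V\ge 3$ phrased two ways.
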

\begin{proof}
First take $a\not=0$ and $p, p'\in M_a$. The map that takes $p$ to $p'$ is an isometry between their one dimensional spans, and so, we can extend it to an isometry $t$ of $V$ with itself. If $\det (t)=1$ we are done as $t\in {\rm
SO}(V)$. Suppose $\det (t)=-1$. If we can find an isometry $s$ fixing $p$ with $\det (s)=-1$, then $u=ts$ will be in ${\rm SO}(V)$ and take $p$ to $p'$. To see that we can find such an $s$, notice that for $U=p^\perp$, we have
$V=U\oplus \langle p\rangle$; as $\dim (U)\ge 1$ we can find $s'\in {\rm O}(U)$ with $\det (s')=-1$. Then $s$ can be defined as $s'$ on $U$ and $sp=p$, and we are done. Let $a=0$ and $p, p'\in M_0$. The argument is the same as
before and we are reduced to finding  $s$ as before. We can find $q\in V$ such that $(q, q)=0$ and $(p, q)=1$. Let $W$ be the span of $p$ and $q$. Then the quadratic form of $V$ is non-degenerate when restricted to $W$ and so
$V=W\oplus W^\perp$. We have $\dim (W^\perp)\ge 1$ and so we can find $s'\in {\rm O}(W^\perp)$ with $\det (s')=-1$. Then $s$ is defined as $s'$ on $W^\perp$ and identity on $W$, and we are done. Since $\{0\}$ is trivially an
orbit, we are finished. The regularity follows from the theorem of Effros \cite{eff}
as all orbits are obviously either closed or locally closed.
\end{proof}
\begin{definition}
We will call the orbits $M_a (a\not=0)$ massive, the orbit
$M_0$ massless, and $\{0\}$ trivial-massless.
\end{definition}

\vspace{14pt}

\noindent Next, Theorem \ref{theorem Mackeys theorem for m-reps} requires
that the multipliers of $P_V$ be trivial when restricted to $V$. To see this we use Corollary 2 to Proposition 2 of Section 4 of \cite{var2} and reduce the proof to showing that $0$ is the only skew symmetric invariant bilinear form on
$V$. But $V$ is irreducible under the ${\rm SO}(V)$ and admits a symmetric invariant bilinear form, namely $({\cdot}, {\cdot})$. Hence, any invariant bilinear form must be a multiple of this, and so, a skew symmetric invariant
bilinear form must be $0$.
\par
Finally, we shall show that all the orbits admit invariant measures.
\begin{lemma}\label{invariant measures} For $V$ of any dimension $\ge 1$, all the orbits of ${\rm SO}(V)$ admit invariant measures.
\end {lemma}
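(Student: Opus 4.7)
The plan is to apply the classical Weil--Bourbaki criterion: a homogeneous space $G/H$ of lcsc groups carries a $G$-invariant measure precisely when $\Delta_G\vert_H = \Delta_H$. Each nontrivial orbit has the form $\mathrm{SO}(V)/\mathrm{SO}(V)_p$, and $\mathrm{SO}(V)$ is unimodular in every dimension (for $\dim V = 1$ it is trivial, for $\dim V = 2$ it is abelian, and for $\dim V \ge 3$ it is semisimple). So it suffices to prove every stabilizer is unimodular. The trivial orbit $\{0\}$ is handled by the Dirac mass at $0$.

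For $p \in M_a$ with $a \neq 0$, the nondegeneracy of $Q$ on $\langle p\rangle$ yields the orthogonal decomposition $V = \langle p\rangle \oplus p^\perp$; any isometry fixing $p$ acts as the identity on $\langle p\rangle$, and membership in $\mathrm{SO}(V)$ then forces determinant one on $p^\perp$. Thus $\mathrm{SO}(V)_p \simeq \mathrm{SO}(p^\perp)$, which is of the same type as $\mathrm{SO}(V)$ and hence unimodular. For $p \in M_0$, I would pick a null vector $q$ with $(p,q) = 1$, form the decomposition $V = \langle p\rangle \oplus V_0 \oplus \langle q\rangle$ with $V_0 = \{p,q\}^\perp$ nondegenerate of dimension $n-2$, and write elements of $\mathrm{SO}(V)_p$ in block-triangular form relative to this decomposition. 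A short calculation then identifies $\mathrm{SO}(V)_p$ as a semidirect product $V_0 \rtimes \mathrm{SO}(V_0)$, in which the unipotent radical $V_0$ is abelian and $\mathrm{SO}(V_0)$ acts on it by isometries, hence preserves Haar measure on $V_0$. The modular function of such a semidirect product is identically one, so the stabilizer is unimodular.

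As a conceptually cleaner alternative that makes the invariant measure explicit, I would also mention the Gelfand--Leray construction: since $dQ$ never vanishes on any nonzero level set, one defines an invariant measure on $M_a$ locally near $p$ by $d\mu_a = \lvert dx_1\cdots \widehat{dx_i}\cdots dx_n\rvert_p / \lvert \partial Q/\partial x_i\rvert_p$ for any index $i$ with $(\partial Q/\partial x_i)(p) \neq 0$. Independence of the choice of $i$ is a routine change-of-variables check, and $\mathrm{SO}(V)$-invariance is immediate from the $\mathrm{SO}(V)$-invariance of both Haar measure on $V$ and $Q$. The main delicate point in the first approach is the null-orbit case, where the stabilizer is not reductive and one must verify unimodularity of a genuine semidirect product; the Gelfand--Leray approach sidesteps this entirely by treating all nonzero orbits uniformly, with the origin already excluded from $M_0$ by definition.
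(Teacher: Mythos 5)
Your main argument coincides with the paper's: reduce to unimodularity of stabilizers (the paper invokes the sufficient form of the Weil criterion given that the ambient group is unimodular), check that $\SO(V)$ is unimodular in each dimension, identify the massive stabilizer as $\SO(p^\perp)$, and identify the null stabilizer as a Poincar\'e group $V_0\rtimes\SO(V_0)$ and verify its unimodularity. The paper gets the last identification by citing Theorem~\ref{theorem imbedding}; you re-derive the semidirect-product structure in sketch form, which is equivalent. One small remark: the paper also explicitly checks unimodularity of the full Poincar\'e group $P_V$ in passing, since it is needed to conclude unimodularity of $P_W$; your direct computation of the modular function of $V_0\rtimes\SO(V_0)$ accomplishes the same thing without the detour.

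The Gelfand--Leray alternative you sketch is a genuinely different route that the paper does not take. Since both Haar measure on $V$ and the form $Q$ are $\SO(V)$-invariant, and $dQ$ is nowhere zero on $M_a$ ($a\neq 0$) and on $M_0\setminus\{0\}$, the quotient form $|dx_1\cdots\widehat{dx_i}\cdots dx_n|_p/|\partial Q/\partial x_i|_p$ glues to a positive invariant measure on the level set, with no unimodularity bookkeeping and no need to distinguish the massive from the massless stabilizer. The one point worth making explicit is that the lemma is stated for all $\dim V\geq 1$, and in low dimensions a level set of $Q$ need not be a single orbit; since each nontrivial $\SO(V)$-orbit is open in its level set, the Gelfand--Leray measure restricts to an invariant measure on each such orbit, and the remaining orbits (the origin, and everything when $\dim V=1$) are singletons handled by Dirac masses, as you note. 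This alternative is cleaner in that it treats all nonzero orbits uniformly; the paper's approach has the compensating virtue of making the stabilizer structure explicit, which it needs anyway elsewhere.
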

\begin{proof} Let $G$ be a unimodular lcsc group, and $H$ is a closed subgroup of $G$; then for $G/H$ to admit a $G$-invariant measure it is well known that the unimodularity of $H$ is a sufficient condition. We apply this to our
present situation. For $p\in V$ let $L_p$ be its stabilizer. We shall check that $L_p$ is unimodular for all $p$. We also check that the Poincar\'e group is unimodular, as it is needed in the proof.

\vspace{14pt}

\noindent {\it Poincar\'e group.\/} Here $P=V\rtimes G$ where $G={\rm SO}(V)$. The group $G$ acts on $V$ with determinant $1$ and so the action of the corresponding $p$-adic group $G_p$ on $V_p$ preserves Haar measure on $V_p$. It
is then easy to see that the product measure $dvdg$ is invariant under both left and right translations of $P$, $dv, dg$ being the respective Haar measures on $V, G$, provided we know that $G_p$ is unimodular. If $\dim (V)=1$,
$G=\{e\}$ and there is nothing to prove. If $\dim (V)=2$, then $G$ is abelian and so $G_p$ is unimodular. Let $\dim (V)\ge 3$. Then $G$ is semisimple. For $G_p$ to be unimodular it is enough to  check that its action on its Lie
algebra has determinant with $p$-adic absolute value $1$. Actually its determinant itself is $1$. It is enough to verify this last statement at the level of the algebraic closure of ${\q}_p$, where it follows from the fact that
over the algebraic closure $G$ is its own commutator group and so any morphism into an abelian algebraic group is trivial.

\vspace{14pt}

\noindent {\it The stabilizer of a massive point.\/} First consider $p\in M_a,\; a\not=0$. Then as we saw in the proof of the previous lemma, $V=U\oplus \langle p \rangle$, and $s\in {\rm SO}(V)$ fixes $p$ if and only in it leaves $U$
invariant and restricts to an element of ${\rm SO}(U)$ on $U$. Hence $L_p\simeq {\rm SO}(U)$, hence unimodular as observed above.

\vspace{14pt}

\noindent {\it Stabilizer of a massless point.\/} Let $p\in M_0$. We shall show in Theorem \ref{theorem imbedding} that $L_p\simeq P_W$ where $P_W$ is the Poincar\'e group of a quadratic vector space $W$ (with $\dim(W)=\dim(V)-2$ and $W$ is Witt
equivalent to $V$). Hence $P_W$ is unimodular from above.
\end{proof}

\vspace{14pt}

\noindent We now see that the Theorem \ref{theorem Mackeys theorem for m-reps}
applies to the $p$-adic Poincar\'{e} group and we summarize our results in
the following theorem that completely describes the particles of
the $p$-adic Poincar\'{e} group.  Recall that every multiplier for
$P_V$ is the lift to $P_V$ of a multiplier for $\SO(V)$, up to
equivalence. For any $p\in V$ we denote by $\lambda_p$ an invariant measure on the orbit of $V$. If $m_0$ is a multiplier for ${\rm SO}(V)$ and $m$ its lift to $P_V$, we write $m_p$ for the restriction of $m$ to the stabilizer of
$p$ in ${\rm SO}(V)$.
\begin{theorem}\label{theorem Particles of the Poincare}
Let $P_V=V \rtimes \SO(V)$ be the $p$-adic Poincar\'{e} group.  Fix
$p \in V'$ and let $m_0$ be a multiplier of $\SO(V)$ and $m$ its
lift to $P_V$. Then there is a natural bijection between irreducible
$m$-representations of $P_V=V \rtimes \SO(V)$ with $Spec(V) \subset
\SO(V) [p]$, the orbit of $p$ under the ordinary action of
$\SO(V)$ and irreducible $m_p$-representations of $\SO(V)_p$, the
stabilizer of $p$ in $\SO(V)$.  Every PUIR of $P_V$, up to unitary
equivalence, is obtained by this procedure. Let $X =\SO(V) [p]$,
$\lambda_p$ a $\sigma$-finite invariant measure on $X$ for the action of
$\SO(V)$. Then, for any irreducible $m_p$-representation $\mu$ of
$\SO(V)_p$ in the Hilbert space $\mathcal{K}$, the corresponding
$m$-representation $U$ acts on $L^2(X,\mathcal{K},\lambda_p)$ and has the
following form:
$$(U(ag)f)(p) = \psi (\langle a,p\rangle )
\delta(g,g^{-1} \{p\})f(g^{-1} \{p\} )$$
where $\delta$ is any strict $m_p$-cocyle for $(\SO(V)),X)$ with values in
$\mathcal{U}$, the unitary group of $\mathcal{K}$, such that
$\delta(g,p) = \mu(g), \; g\in \SO(V)_p$.
\end{theorem}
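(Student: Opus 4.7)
The plan is to deduce the theorem directly from the general Mackey machine of Theorem \ref{theorem Mackeys theorem for m-reps} by checking each of its hypotheses for $P_V$ and then specialising the output to the Poincar\'e setting. The section preceding the statement has already done all the verification work in disguise, so the proof is mainly a matter of assembling those checks in the right order and making the transition to the explicit formula.

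First I would collect the four structural facts established in the preliminaries: (i) because $\SO(V)$ is semisimple, $H^1(\SO(V),V')=0$, so by Theorem \ref{theorem theorem 1 moscow paper} every multiplier $m$ of $P_V$ is equivalent to one with $\theta\equiv 0$, i.e.\ to the lift of a multiplier $m_0$ of $\SO(V)$; (ii) $V$ is irreducible under $\SO(V)$ and admits only the symmetric invariant bilinear form $(\cdot,\cdot)$, so no nonzero skew-symmetric invariant bilinear form exists, and hence (by Corollary~2 of Proposition~2, Sec.~4 of \cite{var2}) every multiplier of $P_V$ is trivial on $V\times V$, placing us in the situation of Theorem~\ref{theorem mackey theorem for m-systems of imprimitivity}; (iii) Lemma~\ref{lemma orbits} says the $\SO(V)$-action on $V'$ is regular with orbits $M_a$, $M_0$, $\{0\}$; (iv) Lemma~\ref{invariant measures} guarantees each orbit carries an $\SO(V)$-invariant measure $\lambda_p$.

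Next I would invoke the corollary to Theorem~\ref{theorem Mackeys theorem for m-reps}: since $\theta$ may be taken trivial, the affine action $g\{\chi\}=g[\chi]+\theta(g)$ collapses to the ordinary linear action $g[\chi]$, so the affine orbit is $\SO(V)[p]$ and the affine stabilizer is $\SO(V)_p$; $m_0$ restricted to this stabilizer is what is denoted $m_p$ in the statement. Regularity from (iii) is exactly the hypothesis needed so that every irreducible $m$-representation, up to equivalence, has spectrum concentrated on a single orbit. Hence Theorem~\ref{theorem Mackeys theorem for m-reps} applies and yields the claimed bijection between irreducible $m$-representations of $P_V$ with $\mathrm{Spec}(V)\subset \SO(V)[p]$ and irreducible $m_p$-representations of $\SO(V)_p$, and the classification is exhaustive.

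For the explicit realization I would copy the formula from Theorem~\ref{theorem Mackeys theorem for m-reps} and apply two simplifications. Because $\lambda_p$ is invariant (not merely quasi-invariant) by (iv), the Radon--Nikodym factor $\rho_g(g^{-1}\{\chi\})$ is identically $1$ and drops out, as noted immediately after Theorem~\ref{theorem Mackeys theorem for m-reps}. The abstract pairing $\langle a,\chi\rangle$ of $A=V$ with a character is translated through the canonical topological isomorphism $V^\ast\simeq V'$ fixed at the start of the section using the additive character $\psi$, under which $\chi_p(a)=\psi(\langle a,p\rangle)$; this converts the character factor into $\psi(\langle a,p\rangle)$ as displayed. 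The cocycle $\delta$ remains a strict $m_p$-cocycle for $(\SO(V),X)$ with $\delta(g,p)=\mu(g)$ for $g\in\SO(V)_p$, exactly as produced by Lemma~\ref{lemma m-representation Hilbert space action}.

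The only step that is more than bookkeeping is ensuring that the four preliminary reductions above really do hold simultaneously — especially the compatibility of the identification $V^\ast\simeq V'$ (which depends on the choice of $\psi$) with the $\SO(V)$-action used to define orbits on $V'$. Once that compatibility and the four hypotheses are in hand, the statement is a mechanical specialisation of the general Mackey theorem, and the expected main obstacle is really just the careful verification that regularity plus triviality of $\theta$ together reduce the affine setting to the ordinary linear one.
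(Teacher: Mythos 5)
Your proposal is correct and follows essentially the same route as the paper: the paper's ``proof'' is exactly the sequence of preliminary verifications you list --- triviality of $H^1(\SO(V),V')$ via semisimplicity, triviality of multipliers on $V\times V$ via the skew-symmetric-form argument, regularity of the orbit structure via Lemma~\ref{lemma orbits} and Effros, existence of invariant orbit measures via Lemma~\ref{invariant measures} --- followed by the observation that Theorem~\ref{theorem Mackeys theorem for m-reps} (with $\theta$ trivial, hence affine action $=$ ordinary action, and $\rho$-factors dropping out) then applies verbatim. Your extra remark about compatibility of $V^\ast\simeq V'$ with the $\mathrm{GL}(V)$-action is exactly the point the paper flags when it fixes $\psi$.
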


\vspace{14pt}

\noindent Thus, to determine the PUIRs of the Poincar\'{e} group, one must
determine the multipliers of $L=\SO(V)$ and for each given multiplier
$m$, determine the irreducible $m$-representations.  The PUIRs then
correspond to $p \in V'$ and $m_p$-representations of the
stabilizer $L_p$ of $p$ in $L$, $m_p$ being
$m\vert_{L_p \times L_p}$.

\vspace{14pt}

\noindent
We now define massless and massive particles.
\begin{definition} A PUIR of the Poincar\'{e} group is called an elementary particle.  A particle which corresponds to the orbit of a vector $p\in V'$ is called massless if $p\not=0$ and is massless $((p, p)=0)$, trivial if $p=0$,
and massive if $p$ is massive $((p, p) \not=0)$.
\end{definition}
\section{Galilean group}\label{section galilean group}
\subsection{Galilean group over $\R$}
Classically, the Galilean group is the group of translations,
rotations, and boosts, of spacetime consistent with  Newtonian
mechanics. Let $V_0={\r}^3$ be space with $x=(x_1,x_2,x_3)$ as space coordinates and $V_1={\r}$ be time with $t$ as time coordinate.  We define spacetime as $V=V_0
\oplus V_1$, and we write for $w \in V$, $w=(x,t)$.  Then a Galilean
transformation $g:w=(x,t) \mapsto w'=(x',t')$ is defined by
$$g:w\mapsto w'\qquad x'= Wx+tv+u, \quad t'=t+\eta.$$
Here $W \in \SO(3)$, $u$ and $v$ are vectors in $3$-space and $\eta$
is a real number.  In this transformation $u$ is a spatial
translation, $\eta$ is a time translation, $v$ is a boost.  We may
think of $v$ as a velocity vector and and $W$ a rotation in the
$3$-space. The set of all such transformations forms the Galilean
group. The Galilean group is a semidirect product $V \rtimes R$ of
the group $V$ of all translations in spacetime and the group $R =
V_0 \rtimes R_0$. Here $R_0= \SO(V_0)$. The subgroup $R$ is not
semisimple. This creates some subtle differences between the theory
involving the Poincar\'{e} group and the theory involving the
Galilean group \cite{var1} p. 283-284.
\subsection{Galilean group over $\Q_p$}
We define the analogue of the Galilean group over $\Q_p$.  Let
$V$ be a finite-dimensional vector space over $\Q_p$ such that $V =
V_0 \oplus V_1$ where $V_0$ is an isotropic quadratic vector space and $V_1$
has dimension $1$, which we identify with ${\q}_p$.  The Galilean group is now defined as $G = V
\rtimes R$ where $R = V_0 \rtimes \SO(V_0)$.  Technically one should
think of this as a pseudo-Galilean group since in the real case
$V_0$ is anisotropic. We need the presence of isotropic vectors in $V_0$ as a technical requirement that we cannot do away with. As before, the action of $((u,\eta),(v,W)) \in
G$ on $V$ is given by
$$((u,\eta),(v,W)):(x,t) \mapsto (Wx+tv+u,t+\eta)$$
Let $(\cdot , \cdot)$ be the bilinear form on $V_0$ that describes its quadratic structure.  Given a pair
$(\xi, t) \in V$, we define a linear form $\langle (\xi,t), \cdot \rangle $
on $V$ by $\langle (\xi,t), (u , \eta)\rangle  = (\xi, u) + t\eta$.  We
identify the algebraic dual $V'$ with set of all such pairs
$(\xi,t)$. We now describe the action of $R$ on $V$.
$$(v,W):(u,\eta) \mapsto (Wu + \eta v, \eta)$$
The action of $R$ on $V'$ is given by
$$(v,W):(\xi,t) \mapsto (W\xi, t-(W\xi,v))$$
\subsection{Particle classification of the $p$-adic Galilean
group}The study of particles of the real Galilean group corresponds to
the study of particles of ordinary non-relativistic quantum
mechanics. A natural question that arises is that of classifying
particles of the $p$-adic Galilean group. This classification is
a consequence of Theorem \ref{theorem Mackeys theorem for
m-reps}.  It is noteworthy that in the presence of a nontrivial
affine action the theorem differs from the usual Mackey theorem.
\subsection{Multipliers of the Galilean group}
To determine $H^2(G)$ we must first show that the multipliers of $G$ are trivial when restricted to $V$. This reduces to showing that $0$ is the only $R$-invariant skew symmetric bilinear form on $V$ (Corollary 2 to Proposition 2 of Section 4 of \cite{var2}).
Let $B$ be one such. As $V_0$ is invariant under $R$ with the action $(v, W), x\mapsto Wx$, the restriction $B_0$ of $B$ to $V_0$ is also skew symmetric and $R_0$-invariant. But $V_0$ already has a $R_0$-invariant {\it
symmetric\/} form, namely $({\cdot},{\cdot})$, which is non-degenerate. Since $V_0$ is irreducible under $R_0$, {\it any\/} $R_0$-invariant bilinear form has to be a multiple of this, and so, $B_0$ being skew symmetric, we may
conclude that $B_0=0$. Now $V_1={\q}_p$ and we take $\beta=1$ as the basis vector for $V_1$. Let $f(x)=B(x,\beta), x\in V_0$. Now $(v, W)$ acts on $V_0$ as $x\mapsto Wx$ and on $\beta \in V_1$ as $\beta \mapsto v+\beta$ and so the
condition for invariance is
$$
B(x, \beta)=B(Wx, v+\beta)
$$
for all $x, v\in V_0$. Thus, as we have already seen that $B_0=0$, we have
$$
f(x)=f(Wx)
$$
or that $f$ is an $R_0$-invariant linear form. By irreducibility of $V_0$ under $R_0$ we now have $f=0$. Since $B(\beta, \beta)=0$ as $B$ is skew symmetric, we have proved that $B=0$.

\vspace{14pt}

\noindent From \cite{var2} we know that $H^1(R, V')$  is a vector space over ${\q}_p$ and is isomorphic to ${\q}_p$:
$$
H^1(R, V')\simeq {\q}_p.
$$
In \cite{var2} the cocycles that describe this one-dimensional cohomology were explicitly given. For $\tau\in {\q}_p$ let
$$
\theta_\tau (v, W)=(2\tau v, -\tau (v,v))
$$
where the right side is interpreted as an element of $V'$ according to the conventions established earlier. It is then directly verifiable that the $\theta_\tau$ are in $Z^1(R, V')$, and the result of \cite{var2} is that
$$
\tau\longmapsto [\theta_\tau]
$$
is an isomorphism of ${\q}_p$ with $H^1(R, V')$. If $\psi$ is the additive character of ${\q}_p$ fixed earlier, then
$$
\psi \circ \theta_\tau
$$
is the corresponding element of $Z^1(R, V^\ast)$.

\vspace{14pt}

\noindent We can now determine the multiplier $\mu_\tau$ corresponding to the $\theta_\tau$ by the isomorphism of Theorem \ref{theorem theorem 1 moscow paper} Let
$$
r=((u, \eta), (v, W)), \quad r'=((u', \eta '), (v', W')).
$$
Then
$$
\mu_\tau (r,r')=\psi \bigg (\theta_\tau ((v, W)^{-1})(u', \eta ')\bigg ).
$$
But
$$
\theta_\tau ((v, W)^{-1})=\theta_\tau (-W^{-1}v, W^{-1})
=(-2\tau W^{-1}v, -\tau (v, v))
$$
so that
$$
\theta_\tau ((v, W)^{-1})(u', \eta ')=-2\tau (W^{-1}v, u')-\tau \eta '(v, v)=-2\tau (v, Wu')-\tau \eta '(v, v).
$$
Hence
$$
\mu_\tau (r,r')=\psi \bigg (-2\tau (v, Wu')-\tau \eta '(v, v)\bigg ).
$$
\par
In view of Theorem \ref{theorem theorem 1 moscow paper} we have the isomorphism
$$
H^2(G)\approx H^2(R)\times H^1(R, V').
$$
Now $R$ itself is a semidirect product $V_0\rtimes R_0$ but now $R_0$ is semisimple. As $R_0$ acts irreducibly on $V_0$ with a {\it symmetric\/} non-degenerate invariant bilinear form, we see as before that $0$ is the only
invariant {\it skew symmetric\/} invariant bilinear form. Hence all multipliers of $R$ are trivial when restricted to $V$. Thus by Theorem \ref{theorem theorem 1 moscow paper} we have
$$
H^2(R)\approx H^2(R_0)\times H^1(R_0, V_0').
$$
But $R_0$ is connected semisimple and so, by Theorem 3 of Section 6 of \cite{var2} we have
$$
H^1(R_0, V_0')=0.
$$
Hence
$$
H^2(R)\approx H^2(R_0).
$$
In other words, every multiplier of $R$ is equivalent to a lift to $R$ of a multiplier of $R_0$.

\vspace{14pt}

\noindent These remarks allow us to give a complete explicit description of $H^2(G)$. Let $n_0$ be a multiplier for $R_0$. We lift $n_0$ to the multiplier $n$ of $G$ by the composition of the maps
$$
G\longrightarrow R,\qquad R\longrightarrow R_0.
$$
We then define the multiplier
$$
m_{n_0, \tau}=n\mu_\tau
$$
of $G$. Thus
$$
m_{n_0, \tau}(r, r')=n_0(W, W')\psi \bigg (-2\tau (v, Wu')-\tau \eta '(v, v)\bigg )
$$

\vspace{14pt}

\noindent We now describe the Galilean particles.  First we fix $\tau \neq 0$.  The affine action corresponding to the cocyle $\theta_\tau$ is given by
$$(v,W): (\xi, t) \mapsto (W\xi + 2 \tau v, t - (W \xi,v) - \tau(v,v)).$$
The function $M: (\xi, t) \mapsto (\xi, \xi) + 4\tau t$ maps $V$ to $k$ and is easily verified to be invariant under the affine action. Hence the level sets of $M$ are invariant under the affine action. Since $M((0,a \slash 4 \tau
)) = a$, we see that $M$ maps onto $k$.

\vspace{14pt}

\noindent Fix $a\in {\q}_p$ and consider the level set
$$
M[a]=\big \{(\xi, t)\ \big|\ M(\xi, t)=a\big \}.
$$
The element $(0, a/4\tau)\in M[a]$; if $(\xi, t)\in M[a]$ then the element
$$
(\xi/2\tau, I)
$$
of $R$ sends $(0, a/4\tau)$ to $(\xi, t)$ by the affine action, as is easily verified.  Hence $M[a]$ is a single orbit. The orbits are thus all closed and so, by Effros's theorem the affine action is regular. One can see this also
explicitly by observing that the set
$$
\big \{(0, b)\ \big |\ \big (b\in {\q}_p)\}
$$
meets each affine orbit in exactly one point. The stabilizer in $R$ of $(0, a/4\tau)$ is $R_0$.
\vspace{14pt}

\noindent

\vspace{14pt}

\noindent For a given orbit the corresponding $m_{n_0,\tau}-$representations are parameterized by the $n_0$-representations of $R_0$.  However, as we shall now show, these representations are {\it projectively the same for
different $a$.\/} To see this we observe first that the projection map
$$
(\xi, t)\longmapsto \xi
$$
is a bijection of the level set $M[a]$ onto $V_0'$; in fact, the point
$$
\bigg (\xi, {a-(\xi, \xi)\over 4\tau}\bigg )
$$
is the unique point of $M[a]$ above $\xi$. The affine action on $M[a]$ corresponds to the action
$$
(v, W), \xi\longmapsto W\xi+2\tau v.
$$
We shall therefore identify $M[a]$ with $V_0'$ and the affine action by the above action. Hence, Lebesgue measure $\lambda$ is invariant.  {\it We note that the parameter $a$ has disappeared in the action.\/} Hence, by Theorem
\ref{theorem Mackeys theorem for m-reps},  the action of $R$ in the representation corresponding to the cocycle $m_{n_0, \tau}$ takes place on $L^2(V_0', {\kk}, \lambda)$ where ${\kk}$ is the Hilbert space for the $n_0$-representation of $R_0$ and is independent of
$a$. Furthermore, by the same theorem, the translation action by $(u,\eta)$ is just multiplication by
$$
\psi ((u, \xi)+t\eta)
$$
on $M[a]$ which reduces to multiplication by
$$
\psi \bigg( (u,\xi) +{\eta(a-(\xi,\xi))\over 4\tau} \bigg ).$$
on $L^2(V_0', {\kk}, \lambda)$. We now notice that
the factor
$$
\psi \bigg( {\eta a\over 4\tau } \bigg )
$$
{\it is independent of the variable $\xi$\/} and so it is a phase factor. It can therefore be pulled out and the remaining part is independent of $a$. Hence, projectively the entire representation can be written in a form that is
independent of the parameter $a$.  This proves that the representations with different $a$ are projectively equivalent and describe the same particle.

\vspace{14pt}

\noindent The relevant parameters are thus $\tau (\neq 0)$ and the projective representations $\mu$ of $R_0$. We interpret $\tau$ as the \textit{Schr$\ddot{o}$dinger mass} and $\mu$ as the \textit{spin}.

\vspace{14pt}

\noindent We still have to consider the case $\tau=0$ when the multiplier is the lift to $G$ of a multiplier $n_0$ for $R_0$ via the maps
$$G\longrightarrow R, \qquad R\longrightarrow R_0.$$
The affine action is now the ordinary action
$$
(v, W), (\xi, t)\longmapsto (W\xi, t-(W\xi, v).
$$
The function
$$
N : (\xi, t)\longmapsto (\xi, \xi)
$$
is clearly invariant and maps onto ${\q}_p$. We claim that the level sets of $N[a]$  where $N$ takes the values $a$ are orbits. The subset when $t=0$ is clearly an orbit for $R_0$. If $(\xi, t)\in N[a]$, select $v\in V_0$ such
that $(\xi, v)=-t$; then the element $(v, I)\in R$ takes $(\xi, 0)$ to $(\xi, t)$.  There is obviously an invariant measure on $N[a]$, namely the measure
$$
d\sigma_a\times dt
$$
where $d\sigma_a$ is the \lq\lq surface\rq\rq\  measure on the subset in $V_0'$ where $(\xi, \xi)$ takes the value $a$ (the \lq\lq sphere\rq\rq\ ). The spectrum is thus contained in a subvariety of $V_0'$. Over ${\r}$ this leads
to unphysical relations between momenta \cite{var1}. Over ${\q}_p$ there is no such argument but the representations do not seem to represent particles.

\section{The conformal group and conformal space time}\label{section poincare group and the conformal group}
\subsection{Imbedding of the Poincar\'{e} group in the conformal
group}\label{section imbedding poincare to conformal}
\begin{theorem}\label{theorem imbedding}
Let $k$ be a field of ch $\neq 2$. Suppose $W$ and $V$ are two Witt equivalent quadratic vector spaces over $k$
with $dim(V)=dim(W)+2$ and let $p \in V$ be a null vector.  Denote
by $H_p$ the stabilizer of $p$ in $\SO(V)$.  Then there exists an
isomorphism of algebraic groups
$$h:P_W \; \widetilde{\rightarrow} \; H_p$$ over $k$.
\end{theorem}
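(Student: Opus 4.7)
The plan is to exploit the null vector $p$ to produce a hyperbolic plane decomposition of $V$, then read off the structure of $H_p$ directly from the matrix form of an element stabilizing $p$.

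First, since $p \in V$ is a nonzero null vector, I would use the nondegeneracy of the form on $V$ to find $q \in V$ with $(p,q) = 1$ and $(q,q) = 0$ (adjust any $q_0$ with $(p,q_0) = 1$ by subtracting $\tfrac{1}{2}(q_0,q_0)p$; here is where $\mathrm{ch}(k) \neq 2$ enters). Then $\langle p, q\rangle$ is a hyperbolic plane and $V = \langle p, q\rangle \perp W'$ where $W' = \{p,q\}^\perp$. Since removing a hyperbolic plane preserves the Witt class and $\dim W' = \dim V - 2 = \dim W$, Witt's cancellation theorem identifies $W'$ isometrically with $W$, and I fix such an identification.

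Next, I would parametrize $H_p$ explicitly. For $g \in \SO(V)$ with $g(p) = p$, write $g(q) = \alpha p + \beta q + u$ with $u \in W$, and $g(w) = \gamma_w p + \delta_w q + s(w)$ for $w \in W$ with $s(w) \in W$. The conditions $(g(p),g(q)) = 1$, $(g(q),g(q)) = 0$, $(g(p),g(w)) = 0$, $(g(q),g(w)) = 0$ force $\beta = 1$, $\alpha = -\tfrac{1}{2}(u,u)$, $\delta_w = 0$, and $\gamma_w = -(u, s(w))$. The remaining condition $(g(w_1),g(w_2)) = (w_1,w_2)$ reduces to $s \in \ORTH(W)$. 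Writing the matrix of $g$ in the basis $(p,q,e_1,\dots,e_n)$ with $(e_j)$ a basis of $W$, one sees $\det(g) = \det(s)$, so the $\SO$-condition is precisely $s \in \SO(W)$. Thus every $g \in H_p$ is uniquely of the form
\[
g_{u,s}(p) = p, \qquad g_{u,s}(q) = -\tfrac{1}{2}(u,u)\,p + q + u, \qquad g_{u,s}(w) = -(u,s(w))\,p + s(w),
\]
for a unique pair $(u,s) \in W \times \SO(W)$, and conversely each such pair defines an element of $H_p$ (one checks directly that $g_{u,s}$ is an isometry). All these formulas are polynomial in the entries, so the map $h : P_W \to H_p$, $h(u,s) = g_{u,s}$, is a morphism of algebraic groups over $k$ with polynomial inverse.

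Finally, I would check that $h$ is a homomorphism by computing $g_{u_1,s_1} \circ g_{u_2,s_2}$: applying $g_{u_1,s_1}$ to $g_{u_2,s_2}(q)$ gives $q + \bigl(u_1 + s_1(u_2)\bigr)$ plus the required $p$-component, and applying it to $g_{u_2,s_2}(w)$ yields $s_1 s_2(w)$ plus the right $p$-component. So the multiplication is $(u_1,s_1)(u_2,s_2) = (u_1 + s_1(u_2),\, s_1 s_2)$, which is exactly the semidirect product law of $P_W = W \rtimes \SO(W)$. Combined with the bijectivity of the parametrization, this yields the desired isomorphism of algebraic groups.

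The only genuinely nontrivial step is the very first one: using Witt equivalence and cancellation to identify $\{p,q\}^\perp$ with $W$. Once the hyperbolic plane is in place, everything else is a bookkeeping exercise with the defining equations of $\SO(V)$.
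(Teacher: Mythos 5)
Your proposal is correct and follows essentially the same route as the paper: fix a hyperbolic pair $(p,q)$, decompose $V = \langle p,q\rangle \perp W$ via Witt cancellation, read off the explicit form $g_{u,s}(q) = -\tfrac{1}{2}(u,u)p + q + u$, $g_{u,s}(w) = -(u,s(w))p + s(w)$ from the isometry constraints, and verify the semidirect product law $(u_1,s_1)(u_2,s_2) = (u_1 + s_1 u_2, s_1 s_2)$. You are slightly more explicit than the paper in two small spots — the construction of $q$ (subtracting $\tfrac{1}{2}(q_0,q_0)p$, which is where $\mathrm{ch}\neq 2$ is used) and the invocation of Witt cancellation to identify $\{p,q\}^\perp$ with $W$ — but the decomposition, parametrization, and homomorphism check coincide with the paper's argument.
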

\begin{proof}
\noindent Fix a null vector $q \in V$ such that $(p,q)$ is a hyperbolic pair
in $V$ and let $W_p=\langle p,q\rangle ^\perp$. Then $V = W_p \oplus \langle p,q\rangle $ and
$W_p \simeq W$.  For brevity we  write $W$ for $W_p$.

\vspace{14pt}

\noindent Let $h$ be in $H_p$. We want to write $h$ in an explicit block
matrix form with respect to $V = \langle p\rangle  \oplus \langle q\rangle   \oplus W$.  Let $R\in {\rm End}(W)$ be   defined by $ht\equiv Rt$ mod  $\langle p,q\rangle $ for $t \in W$. A
calculation shows $hp = p$, $hq = - \frac{(t,t)}{2} p + q + t$, $hw
= - (t,Rw)p + Rw$ for $w\in W$. Let $e(t,R) \in  \textrm{ Hom}(W,\langle p\rangle )$ be the map $e(t,R):w
\mapsto -(t,Rw)p$. Then one can write the matrix of $h$ as
$$h =h(t, R)= \left(
  \begin{array}{ccc}
    1 & - \frac{(t,t)}{2} & e(t,R) \\
    0 & 1 & 0 \\
    0 & t & R \\
  \end{array}
\right).$$    Since $1=\det (h)=\det (R)$ and $(w, w)=(hw, hw)=(Rw,Rw)$, it follows that  $R\in \SO(W)$.

\vspace{14pt}

\noindent We note that $h$ is completely determined by $t$ and $R$.  Moreover, for any $t \in W, \; R \in \SO(W)$,
$h=h(t,R)$ as defined above makes sense and has the following properties:
\par
(1) $hp=p$, $hq$ is a null vector, and $(hq, p)=1$.
\par
(2) $hw\perp p, hw\perp hq, (hw, hw)=(w, w)$.
\par\noindent
These properties are sufficient to ensure that $h$ preserves the form on $V$. From the formula for $h$ we see that $\det(h)=1$ and so $h\in {\rm SO}(V)$. Since $hp=p$ we see finally that $h\in H_p$.

\vspace{14pt}

\noindent It is now trivial to verify that h is a homomorphism from
$P_{W}$ to $H_p$, i.e., $$h(t,R) \cdot h(t',R') = h(t+Rt',RR').$$
We omit the calculation. Thus $h$ is a morphism of algebraic groups $P_W\longrightarrow H_p$ which is defined over the ground field $k$ and is bijective. The inverse map is a morphism of algebraic varieties because it can be seen
as the restriction to $H_p$ of the map from a closed subvariety of ${\rm GL}(V)$ to $W\rtimes {\rm GL}(W)$ defined by
$$
\left(
  \begin{array}{ccc}
    1 & b & c \\
    0 & 1 & 0 \\
    g & t & R \\
  \end{array}
\right)\longmapsto (t, R).
$$
 We thus
see that we have an isomorphism of algebraic groups from $P_{W}$ to $H_p$, defined over $k$.
\end{proof}
\subsection{Conformal compactification of space time}\label{section compactification of spacetime}
Let $W, V$ be as above. Let
$$
G={\rm SO}(V).
$$
We shall now construct a smooth irreducible projective variety $[\Omega]$ such that
\begin{enumerate}
\item There is a $k$-imbedding of $W$ as a Zariski open subspace $A_W$ of $[\Omega]$
\item The group $G$ acts transitively on $[\Omega]$ and there is a $k$-isomorphism of $P_W$ with a  subgroup $G_W$ of $G$ which leaves $A_W$ invariant
\item The action of $G_W$
on $A_W$ is isomorphic (via the imbedding) to the action of $P_W$ on $W$
\end{enumerate}
The metric of $W$ does
not extend to $[\Omega]$; rather at each point $[x]$ of $[\Omega]$
we have a family of metrics differing by scalar multiples that
contains the metric of $W$ on $A_W$. The group $G$ preserves this family of metrics. Thus we say that $[\Omega]$ has \emph{a
conformal structure}; and as $G$ keeps this structure invariant we
call $G$ \emph{the conformal group}.  We refer to $([\Omega], G)$ as the {\it conformal compactification\/} of $(W, P_W)$. When $k$ is a {\it local field\/}, $[\Omega]$ (or rather, the set of its $k$-points) is compact, thus
justifying our terminology. These ideas are
summarized in the following theorem.
\begin{theorem}\label{compactification}Given two Witt equivalent quadratic vector spaces $W$ and $V$ over
$k$ with $dim(V)=dim(W)+2$ there exists a conformal compactification of $(W, P_W)$.
\end{theorem}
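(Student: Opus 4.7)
The plan is to construct $[\Omega]$ explicitly as the projective quadric of null lines in $V$. Since $W$ and $V$ are Witt equivalent with $\dim(V)=\dim(W)+2$, the Witt decomposition gives $V\simeq W\oplus H$ where $H$ is a hyperbolic plane with basis $p,q$ satisfying $(p,p)=(q,q)=0$, $(p,q)=1$. Let $\Omega\subset V$ be the affine null cone $\{v\in V:(v,v)=0,\,v\neq 0\}$ and let $[\Omega]\subset\mathbb{P}(V)$ be its image under projectivization. Because the form on $V$ is nondegenerate and $\dim(V)\geq 3$ (the interesting cases), $[\Omega]$ is a smooth irreducible projective variety defined over $k$.

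Next, I would construct the open immersion $A_W\hookrightarrow[\Omega]$ via the Veronese-type map
\[
\iota:W\longrightarrow[\Omega],\qquad w\longmapsto \Bigl[w+q-\tfrac{(w,w)}{2}p\Bigr].
\]
A direct check shows $\iota(w)$ is a null line, and $\iota$ identifies $W$ with the open subset $A_W$ consisting of those null lines $[v]\in[\Omega]$ whose $q$-coordinate is nonzero; the complement is the projectivized null cone inside $W\oplus\langle p\rangle$, which is a proper closed subvariety, so $A_W$ is Zariski open. This gives property (1).

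For (2) and (3), let $G=\mathrm{SO}(V)$. By Witt's extension theorem any two null lines in $V$ are in the same $G$-orbit, so $G$ acts transitively on $[\Omega]$. Now set $G_W:=H_p$, the stabilizer in $G$ of the vector $p$; by Theorem \ref{theorem imbedding} there is a $k$-isomorphism of algebraic groups $P_W\simeq H_p$ sending $(t,R)\mapsto h(t,R)$. The matrix formulas already recorded in the proof of Theorem \ref{theorem imbedding} show $h(t,R)p=p$, $h(t,R)q=-\tfrac{(t,t)}{2}p+q+t$, and $h(t,R)w=-(t,Rw)p+Rw$. Substituting into $\iota(w)$ and simplifying using $(Rw+t,Rw+t)=(w,w)+2(t,Rw)+(t,t)$ yields
\[
h(t,R)\cdot\iota(w)=\Bigl[(Rw+t)+q-\tfrac{(Rw+t,Rw+t)}{2}p\Bigr]=\iota(Rw+t).
\]
Hence $G_W$ preserves $A_W$ and its action there corresponds under $\iota$ precisely to the Poincar\'e action $(t,R),w\mapsto Rw+t$, giving (2) and (3). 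Compactness of the $k$-points when $k$ is local follows from closedness of $[\Omega]$ in $\mathbb{P}(V)(k)$, which itself is compact for $k$ local.

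The only genuinely subtle point, and what I expect to be the main obstacle, is the conformal structure. The idea is to pull back the quadratic form on $V$ along local sections $s:U\to\Omega$ of the projection $\Omega\to[\Omega]$: the symmetric bilinear form $s^\ast(\cdot,\cdot)$ descends to a metric on $U$, and two different sections $s,s'=\phi s$ (with $\phi:U\to k^\times$) produce conformally equivalent metrics $\phi^2\cdot s^\ast(\cdot,\cdot)$. This defines the family of metrics at each point. One must check two things: first, that the canonical section $s(w)=w+q-\tfrac{(w,w)}{2}p$ over $A_W$ recovers the original metric on $W$ (which is a routine computation since $(ds(w)\cdot u,ds(w)\cdot u')=(u,u')$ for $u,u'\in W$); second, that every $g\in G$ carries the conformal class to itself, which follows because $g$ sends null vectors to null vectors and commutes with scalar multiplication, so $g\circ s$ is another local section and differs from $s\circ g$ by a $k^\times$-valued factor. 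Once this is in place, $G$ acts as a group of conformal automorphisms, and $[\Omega]$ with this structure is the desired conformal compactification of $(W,P_W)$.
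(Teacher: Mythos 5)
Your proposal is essentially the paper's own construction: the same projective null quadric $[\Omega]$, the same open chart $\iota(w)=[w+q-\tfrac{(w,w)}{2}p]$, the identification $G_W=H_p$ via Theorem \ref{theorem imbedding}, and the same descent argument for the conformal class (the paper phrases it through the tangent map $dP_x$ rather than through local sections, but the content is identical). You have even filled in the intertwining computation $h(t,R)\cdot\iota(w)=\iota(Rw+t)$ that the paper leaves implicit, so no gap.
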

\noindent We prove this theorem in series of lemmas.
%
%
\begin{definition} Let $V, W$ be as above.  We define
$$\Omega =M_0= \{p \in V \; \vert \; p \neq 0 ,
(p,p)=0 \}.$$\end{definition}
\noindent There is a basis of $V$ for which the quadratic form
becomes:
$$Q(x)=a_0 x_0^2 + a_1 x_1^2 + ... + a_{n+1} x_{n-1}^2, \;\; a_i \neq 0$$
where $n=\dim (V)$.
\noindent Thus the equation defining $\Omega$ is
$$a_0 x_0^2 + ... + a_{n-1} x_{n-1}^2 = 0.$$  This
homogeneous polynomial defines a smooth irreducible quadric cone $[\Omega]$  of dimension $n-2$ in the
projective space $\mathbb{P}(V)$. Let $P(x\longmapsto[x])$ be the
map from $V\setminus \{0\}$ to $\mathbb{P}(V)$. Then  $[ \Omega ] $ is the
image under $P$ of $\Omega$ in $\mathbb{P}(V)$, and  is stable under the
action of $\SO(V)$. The tangent space at $x \in \Omega$ is $V_x =
\{v \in V \; \vert \; (x,v)=0 \}$, and for $[x] \in [ \Omega ]$,
the
tangent space at $[x]$ is $[\Omega]_{[x]}$ and is defined as the
image of the tangent map $dP_x$ of $V_x$.
\begin{lemma}
$[\Omega]$ has a natural $G$-invariant conformal structure.
\end{lemma}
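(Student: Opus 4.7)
The plan is to construct, at each point $[x]\in[\Omega]$, a non-degenerate quadratic form on the tangent space $[\Omega]_{[x]}$ that is canonical up to a non-zero scalar, and then verify that the action of $G=\mathrm{SO}(V)$ preserves the resulting family of conformal classes.

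First, I fix a representative $x\in\Omega$ and restrict the ambient form $(\cdot,\cdot)$ to $V_x=x^\perp$. Since $(x,x)=0$, the vector $x$ lies in $V_x$ and is orthogonal to everything in $V_x$, so $\langle x\rangle$ is contained in the radical of the restricted form. Conversely, any $v\in V_x$ with $v\perp V_x$ satisfies $v\in(x^\perp)^\perp=\langle x\rangle$ by non-degeneracy of $(\cdot,\cdot)$ on $V$, so the radical is exactly $\langle x\rangle$. Hence $(\cdot,\cdot)|_{V_x}$ descends to a non-degenerate form $q_x$ on $V_x/\langle x\rangle$. The differential $dP_x$ has kernel $\langle x\rangle$, so its restriction to $V_x$ yields a canonical isomorphism $V_x/\langle x\rangle\xrightarrow{\sim}[\Omega]_{[x]}$, along which I transport $q_x$ to obtain a non-degenerate form on $[\Omega]_{[x]}$.

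Next, I check dependence on the choice of representative. Replacing $x$ by $\lambda x$ with $\lambda\in k^\times$ leaves $V_x$ and the restricted form unchanged, but from $P(\lambda x+tv)=[x+(t/\lambda)v]$ we obtain $dP_{\lambda x}=\lambda^{-1}dP_x$. Consequently, a single tangent vector $\xi\in[\Omega]_{[x]}$ is represented by $v\in V_x$ when working from $x$ and by $\lambda v$ when working from $\lambda x$, so the induced form on $[\Omega]_{[x]}$ scales by $\lambda^2$. The two constructions therefore differ by a scalar in $(k^\times)^2\subset k^\times$, and the conformal class $[q_x]$, the orbit of $q_x$ under $k^\times$-scaling on $[\Omega]_{[x]}$, depends only on $[x]$.

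For $G$-invariance, I note that any $g\in\mathrm{SO}(V)$ preserves $(\cdot,\cdot)$ and the locus $\Omega$, and maps $V_x$ isometrically onto $V_{gx}$. The differential of the induced action $[\Omega]\to[\Omega]$ at $[x]$ is identified with the quotient map $V_x/\langle x\rangle\to V_{gx}/\langle gx\rangle$ induced by $g$, which is an isometry from $q_x$ to $q_{gx}$; in particular it sends the conformal class at $[x]$ to the conformal class at $[gx]$. The only delicate step, and the one I would handle most carefully, is the rescaling identity $dP_{\lambda x}=\lambda^{-1}dP_x$ and the ensuing check that the ambiguity in $q_x$ is by a square, which suffices because a conformal structure is by definition a form taken modulo $k^\times$-multiplication. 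The remainder is bookkeeping built on the invariance of $(\cdot,\cdot)$ under $G$.
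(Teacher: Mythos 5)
Your proof is correct and follows essentially the same route as the paper's: identify $\ker dP_x=\langle x\rangle$, push the ambient form down to the tangent space, and check the $\lambda^2$ rescaling and $G$-equivariance. You are somewhat more explicit than the paper in verifying that $\langle x\rangle$ is exactly the radical of $(\cdot,\cdot)|_{V_x}$ (so the induced form is genuinely non-degenerate) and in noting that the ambiguity lies in $(k^\times)^2$, but the underlying argument is identical.
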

\begin{proof}
We note that tangent map $dP_x: V_x \rightarrow [\Omega]_{[x]}$ is
surjective because $P$ is submersive. Hence, the kernel of $dP_x$ is
one dimensional. We know that $P$ is constant on the line $kx$ so
$dP_x$ vanishes on $kx$.  Thus the kernel of $dP_x$ is the line
$kx$. Hence, the quadratic form $Q$ on $V$ induces a quadratic form
$\tilde{Q}$ on $[\Omega]_{x}$.  We note that if we use the map
$dP_{\lambda x}: V_{\lambda x} \rightarrow [\Omega]_{[x]}$ to define
the induced quadratic form $\tilde{Q}'$ then $\tilde{Q}' =
\lambda^2\tilde{Q}$. Furthermore, if we have $g \in \SO(V)$ and $x' =
\lambda x$, then the set of metrics at $[\Omega]_{[x]}$ induced from
$V_x$ goes over to the set of metrics induced from $V_{\lambda x}$.
Thus $[\Omega]$ has a conformal structure defined by these induced metrics. The definition of the conformal structure makes it clear that it is $G$-invariant.
\end{proof}

\vspace{14pt}

\noindent
We write
$$
V=W\oplus \langle p, q\rangle
$$
where the sum is orthogonal, and $\langle p, q\rangle$ is hyperbolic with
$$
(p,p)=(q,q)=0, \quad (p,q)=1.
$$
We define $A_{[p]} = \{[a] \in [\Omega] \; \vert \; (p,a) \neq 0 \}$ and
we introduce $C_p$ as the set of null vectors of $V_p$. Thus $C_p=V_p\cap \Omega$. Write $C_{[p]}$
for the image of $C_p$ in $[\Omega]$. Then we have
$A_{[p]} = [\Omega] \backslash C_{[p]}$ since $V_p$ is defined by the equation $(p, v)=0$. Let $[a] \in A_{[p]}$, we
write $a = \alpha p + \beta q + w$, where $w \in W$, then, as $(p,a)
\neq 0$, we must have $\beta \neq 0$. A quick calculation shows that
$\alpha = \frac{-(w,w)}{2}$.  Since we are only interested in the
image of $a$ in the  projective space we may take $\beta$ to be $1$.
Then $[a]$ is given by $[\frac{-(w,w)}{2} : 1 : w]$ so $[a]$ is
entirely determined by $w$. We thus have the bijection
$$
J : W  \simeq
A_{[p]}\qquad J: w \mapsto \bigg [{-(w, w)\over 2}p+q+w\bigg ].
$$
\begin{lemma}$A_{[p]}$ is a Zariski open dense subset of $[\Omega]$.\end{lemma}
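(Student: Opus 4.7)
The plan is to exhibit $A_{[p]}$ as the complement in $[\Omega]$ of the zero locus of a nonzero linear form, which gives openness, and then invoke irreducibility of $[\Omega]$ to upgrade openness to density. Since $[\Omega]$ is the projectivization of the cone cut out by a nondegenerate quadric in $V$, we already know from the setup that $[\Omega]$ is a smooth irreducible subvariety of $\mathbb{P}(V)$ of dimension $n-2$; this irreducibility is the engine that makes the argument go.

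First I would observe that the linear form $\ell_p : v \mapsto (p,v)$ on $V$ is not identically zero (since the quadratic form is nondegenerate and $p \neq 0$), so its zero set $H = \{[v] \in \mathbb{P}(V) : (p,v) = 0\}$ is a hyperplane in $\mathbb{P}(V)$. The set $C_{[p]}$ is then by definition $[\Omega] \cap H$, which is Zariski closed in $[\Omega]$. Consequently $A_{[p]} = [\Omega] \setminus C_{[p]}$ is Zariski open in $[\Omega]$.

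For density, I would use the irreducibility of $[\Omega]$: any nonempty Zariski open subset of an irreducible variety is automatically dense. So it only remains to check that $A_{[p]}$ is nonempty. This is immediate since $q \in \Omega$ and $(p,q) = 1 \neq 0$, so $[q] \in A_{[p]}$; alternatively, the map $J : W \to A_{[p]}$ constructed just before the statement already exhibits an abundance of points in $A_{[p]}$.

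The only thing to be slightly careful about is that we are working inside $[\Omega]$, not all of $\mathbb{P}(V)$, so one must be sure that the condition $(p,\cdot) = 0$ does not hold identically on $[\Omega]$; this is exactly what the previous paragraph rules out. There is no real obstacle here — the argument is essentially formal once the irreducibility of the quadric cone $[\Omega]$ (already established in the prior lemma) is in hand.
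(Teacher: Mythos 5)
Your proof is correct and follows the same route as the paper, which simply notes that $A_{[p]}$ is Zariski open (being the complement of the hyperplane section $C_{[p]}$) and then cites irreducibility of $[\Omega]$ for density; you have merely spelled out the openness and the nonemptiness check that the paper leaves implicit.
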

\begin{proof}
 It is clear that $A_{[p]}$ is a Zariski open subset of $[\Omega]$;
 it is dense since $[\Omega]$ is irreducible.
\end{proof}
\begin{lemma}Let $H_p$ be the subgroup of ${\rm SO}(V)$ that fixes $p$. Then $H_p$ leaves invariant the image $A_{[p]}$ of $W$ under $J$.  Moreover the map $J$ intertwines the actions of $(t,R)$ on $W$ and $h(t,R)$ on $A_{[p]}$ (see theorem \ref{theorem imbedding}).\end{lemma}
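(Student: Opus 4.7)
The plan is to verify both assertions by direct calculation, leveraging the explicit matrix form of $h(t,R)$ from Theorem \ref{theorem imbedding} and the parametrization $J(w) = [-\frac{(w,w)}{2} p + q + w]$.

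For the invariance assertion, I would observe that every $h \in H_p$ is an isometry of $V$ with $hp = p$. If $a$ is a lift of a point $[a] \in A_{[p]}$, then by the isometry property $(p, ha) = (hp, ha) = (p, a) \neq 0$, so $[ha]$ again lies in $A_{[p]}$. Thus $H_p$ preserves $A_{[p]}$.

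For the intertwining, recall that the natural action of $(t,R) \in P_W = W \rtimes \SO(W)$ on $w \in W$ is $w \mapsto Rw + t$, so the target identity is
$$h(t,R) \cdot J(w) \;=\; J(Rw + t).$$
I would compute the left-hand side by applying $h(t,R)$ termwise to the lift $-\frac{(w,w)}{2} p + q + w$ of $J(w)$, using the formulas $hp = p$, $hq = -\frac{(t,t)}{2} p + q + t$, and $hw' = -(t, Rw')\, p + R w'$ for $w' \in W$ established in Theorem \ref{theorem imbedding}. Summing the three contributions, the $q$-component and $W$-component combine to $q + (Rw + t)$, while the $p$-component becomes $\lambda p$ with
$$\lambda \;=\; -\tfrac{(w,w)}{2} - \tfrac{(t,t)}{2} - (t, Rw).$$
Comparison with $J(Rw + t)$ then reduces to the identity $(Rw + t, Rw + t) = (w, w) + 2(t, Rw) + (t, t)$, which follows from bilinearity, symmetry, and the crucial fact that $R \in \SO(W)$, so that $(Rw, Rw) = (w, w)$.

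There is no genuine obstacle: the content of the lemma is bookkeeping of the block-matrix action of $h(t,R)$ applied to the explicit lift of $J(w)$. The only substantive ingredient is the orthogonality of $R$, which is precisely what makes the $(w,w)$ inside $J(w)$ transform correctly under the Poincar\'e action on the $W$-factor.
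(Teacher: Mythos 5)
Your proof is correct and, notably, more complete than the one the paper gives: the paper's printed proof is essentially a one-line sketch (it argues invariance by noting $h$ stabilizes $V_p$, hence $C_p$, hence the complement $A_{[p]}$, and then simply writes ``Let $(t,R)$ be in $P_W$'' without ever carrying out the intertwining computation). Your invariance argument takes a slightly different and more direct route: rather than passing through the complement $C_{[p]}$, you verify the defining inequality $(p, ha) = (hp, ha) = (p, a) \neq 0$ directly from the isometry property and $hp = p$, which is cleaner and makes the role of the hypothesis $h \in H_p$ transparent. Your intertwining computation --- applying the formulas for $hp$, $hq$, $hw'$ from Theorem \ref{theorem imbedding} to the explicit lift $-\tfrac{(w,w)}{2}p + q + w$, collecting the $p$-coefficient, and matching it against $-\tfrac{(Rw+t, Rw+t)}{2}$ via the orthogonality of $R$ --- is exactly the calculation the paper omits, and it is carried out correctly; in particular you have the action of $(t,R)$ on $W$ as $w \mapsto Rw + t$, consistent with the composition rule $h(t,R)h(t',R') = h(t + Rt', RR')$. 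In short, you supply the full argument that the paper only gestures at, with a marginally different but equivalent justification of the invariance half.
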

\begin{proof}
Notice first that if $h\in H_p$ then $h$ stabilizes $V_p$, therefore $C_p$, hence $A_{[p]}$. Let $(t,R)$ be in $P_{W}$.
\end{proof}
\noindent All
claims of Theorem \ref{compactification} have now been proven.
\begin{lemma}\label{lemma padic ball is compact}  If $k=\Q_p$ then $[\Omega]$ is
compact.\end{lemma}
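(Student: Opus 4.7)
The plan is to realize $[\Omega]$ as a closed subset of the $\Q_p$-points of the projective space $\mathbb{P}(V)$, and then show that $\mathbb{P}(V)(\Q_p)$ is itself compact. Since a closed subspace of a compact Hausdorff space is compact, this will finish the proof.

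First I would establish the compactness of $\mathbb{P}(V)(\Q_p)$. Fix a basis of $V$ and let $\|x\|=\max_i |x_i|_p$ be the resulting sup-norm on $V\simeq \Q_p^n$. Every nonzero vector $x\in V$ can be rescaled by an element of $\Q_p^\times$ so that $\|x\|=1$; concretely, writing $\|x\|=p^{-m}$, the vector $p^m x$ lies in the set
\[
S=\{x\in \Z_p^n \ \big|\ \|x\|=1\},
\]
which consists of those $x\in \Z_p^n$ having at least one coordinate in $\Z_p^\times$. The set $S$ is both closed and bounded in $\Q_p^n$, hence compact (since $\Z_p^n$ is compact and $S$ is closed in it). The projection $S\to \mathbb{P}(V)(\Q_p)$ is continuous and surjective, so $\mathbb{P}(V)(\Q_p)$ is compact (and Hausdorff, as one checks from the standard affine charts).

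Next I would note that $[\Omega]$ is closed in $\mathbb{P}(V)(\Q_p)$. The quadratic form $Q$ is a continuous (in fact polynomial) function on $V$, so $\Omega\cup\{0\}=\{x\in V\mid Q(x)=0\}$ is closed in $V$; intersecting with the compact set $S$ gives a closed, hence compact, subset $S\cap \Omega$. Its image under the continuous surjection $S\twoheadrightarrow \mathbb{P}(V)(\Q_p)$ is exactly $[\Omega]$, so $[\Omega]$ is the continuous image of a compact set and therefore compact. Alternatively one observes that $[\Omega]$ is the zero set in $\mathbb{P}(V)(\Q_p)$ of the well-defined continuous function $[x]\mapsto Q(x)/\|x\|^2$, which is closed in a compact space, again compact.

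I do not expect a genuine obstacle here: the only point requiring a little care is setting up the compactness of $\mathbb{P}(V)(\Q_p)$, which depends on the local compactness of $\Q_p$ and the compactness of $\Z_p$ — features special to the non-archimedean local field situation. Everything else is elementary topology.
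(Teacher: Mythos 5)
Your proof is correct and takes essentially the same route as the paper: realize $[\Omega]$ as a closed subset of compact projective space $\mathbb{P}(V)(\Q_p)$ and conclude. The paper's proof is one line, treating the compactness of $p$-adic projective space and the closedness of $[\Omega]$ as standard facts, whereas you supply the verification of both via the surjection from the compact unit sphere $S\subset\Z_p^n$; this is the standard filling-in of details, not a different argument.
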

\begin{proof}
Since $[\Omega]$ is a closed subset of $\mathbb{P}(\Q_p^{n+1})$ and
$\mathbb{P}(\Q_p^{n+1})$ is compact, $[\Omega]$ is compact.
\end{proof}
\noindent Lemma \ref{lemma padic ball is compact} shows that if the underlying
field is $\Q_p$ then the projective imbedding becomes the
compactification of spacetime.
\subsection{Conjugacy of imbeddings}\label{section Conjugacy of
imbeddings}
The following theorem is a converse of sorts to Theorem
\ref{theorem imbedding}. It states that the subgroups of $\SO(V)$ that are
isomorphic to a Poincar\'{e} group $P_W$ arise only as stabilizers
of null vectors.  $\SO(V)$ acts by conjugacy transitively on the set
of all Poincar\'{e} groups inside $\SO(V)$.

\begin{theorem}\label{theorem conjugacy}
Let $W$ and $V$ be two quadratic vector spaces with $W$ Witt
equivalent to $V$ with $dim(V)=dim(W)+2$. If there is
an imbedding $f:P_W \hookrightarrow \SO(V)$ of algebraic groups over
$k$, then for $\dim (W)\geq 3$,
\begin{enumerate}
\item $f(P_W) = H_p$, where $H_p$ is a stabilizer of some null vector
$p \in \SO(V)$.
\item All such imbeddings $f$ are conjugate under
$\SO(V)(k)$.
\end{enumerate}
\end{theorem}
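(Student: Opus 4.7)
The plan is to identify $f(W)$ intrinsically as the unipotent radical of the parabolic stabilizing a null line in $V$, and then use the transitivity of $\SO(V)(k)$ on null vectors (Lemma~\ref{lemma orbits}) to reduce $f$ to the standard embedding $h$ of Theorem~\ref{theorem imbedding}. The hypothesis $\dim(W) \geq 3$ plays a double role: it makes $\SO(W)$ semisimple (so that $W$ is intrinsically the unipotent radical of $P_W$), and it lets me control the representation theory of $\SO(W)$ on $V$.

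Setting $U := f(W)$ and $S := f(\SO(W))$, the group $U$ is a connected abelian unipotent subgroup of $\SO(V)$ of dimension $\dim(V)-2$, normalized by the semisimple subgroup $S$ whose adjoint action on $U$ is isomorphic to the standard action of $\SO(W)$ on $W$. Viewing $V$ as an $S$-module and using that $\mathfrak{u} := \mathrm{Lie}(U) \subset \mathfrak{so}(V)$ must be an \emph{abelian} $S$-submodule isomorphic to the standard rep, I claim the only possibility is the $S$-decomposition
$$V = V_1 \oplus V_0, \qquad V_1 \cong W \text{ as } \SO(W)\text{-module}, \qquad V_0 \text{ trivial of dim } 2.$$
For $\dim(W)$ large this is immediate from the list of small irreducibles of $\SO(W)$, and for $\dim(W) = 3, 4$ one excludes the other possibilities using the abelianness of $\mathfrak{u}$. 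Then $V_1 \perp V_0$ (non-isomorphic irreducibles paired by an invariant form), and Witt equivalence of $V$ to $W \cong V_1$ forces $V_0$ to be a hyperbolic plane $\langle p, q \rangle$ with $(p,p) = (q,q) = 0$, $(p,q) = 1$.

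Identifying $\mathfrak{so}(V) \simeq \Lambda^2 V$ via the form yields the $S$-decomposition $\Lambda^2 V = \Lambda^2 V_1 \oplus (V_1 \otimes V_0) \oplus \Lambda^2 V_0$, with the unique $W$-isotypic piece being $V_1 \otimes V_0 \cong W \oplus W$. So $\mathfrak{u} = V_1 \otimes L$ for a line $L \subset V_0$, and a short bracket computation (using $V_1 \perp V_0$) shows $V_1 \otimes L$ is an abelian subalgebra of $\mathfrak{so}(V)$ iff $L$ is isotropic. The two isotropic lines $\langle p \rangle, \langle q \rangle$ are $\SO(V)(k)$-conjugate by Lemma~\ref{lemma orbits}, so after conjugating I take $L = \langle p \rangle$. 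Then $U$ is the unipotent radical of the parabolic $\SO(V)_{\langle p\rangle}$ stabilizing the line $\langle p\rangle$; its normalizer equals $\SO(V)_{\langle p\rangle}$, whose Levi is $\SO(W) \times \mathbb{G}_m$. Because $S$ is semisimple of dimension $\dim \SO(W)$, a final $U$-conjugation places $S$ inside $\SO(W) \times \{1\}$, yielding $f(P_W) = H_p$, which is~(1).

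For~(2), both $f$ and $h$ are now isomorphisms $P_W \to H_p$, so $\alpha := h^{-1} \circ f$ is an automorphism of $P_W$. Since Levi complements of $U$ in $H_p$ are $U$-conjugate, I may take $\alpha$ to preserve the standard Levi, whence $\alpha(t, R) = (A(t), \phi(R))$ with $A \in \mathrm{GL}(W)$ intertwining the standard representation under $\phi \in \mathrm{Aut}(\SO(W))$. Composing with inner automorphisms of $P_W$ (realized by $\SO(V)(k)$-conjugation) reduces $\phi$ to the identity, or---in type $D_\ell$---to an outer automorphism realized in $\SO(V)(k)$ by conjugation with an element of the form $r \oplus s \in (\ORTH(W) \times \ORTH(V_0)) \cap \SO(V)$ with $\det r = \det s = -1$, followed by a further conjugation restoring $H_p$. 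Once $\phi = \mathrm{Id}$, Schur's lemma applied to the absolutely irreducible rep $W$ of $\SO(W)$ forces $A = \lambda \cdot \mathrm{Id}_W$; conjugation by $\mathrm{diag}(\lambda, \lambda^{-1}, I_W) \in \SO(V)(k)$ (in the basis $\{p,q\}$ and a basis of $W$) realizes this scaling, completing~(2). The main obstacle I anticipate is the small-dimensional case analysis in Step~2, where the abelianness of $\mathfrak{u}$ must do the work of excluding alternate $\SO(W)$-module structures on $V$; a secondary annoyance is the $D_\ell$-outer automorphism, which is absorbable only because $\SO(V)(k)$ is large enough to carry $\det = -1$ flips on both $W$ and $\langle p, q\rangle$ simultaneously.
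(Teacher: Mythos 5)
Your proposal takes a genuinely different route from the paper's. The paper never tries to pin down the $f(\SO(W))$-module structure of $V$; instead it picks a \emph{minimal} $P_W$-stable subspace $U\subset V$, shows $W=f(W)$ acts trivially on $U$ (unipotence), shows $U$ is isotropic, and then bounds $\dim U$ by comparing $\dim P_W$ with the dimension of the normalizer $Q=N\rtimes S$ of $U$ in $\SO(V)$, ruling out $\dim U\geq 2$ by a dimension inequality for $D>8$ and by hand for $D=5,6,7,8$. Your route --- decomposing $V$ as an $S$-module, passing to $\mathfrak{so}(V)\cong\Lambda^2 V$, and locating $\mathfrak u$ inside the $W$-isotypic piece --- is more representation-theoretic and, where it works, more structural; it also gives you the partial-conformal Levi $\SO(W)\times\mathbb{G}_m$ for free. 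Your handling of part (2) is also more careful than the paper's one-line appeal to transitivity on null vectors: you correctly notice that one must also absorb automorphisms of $H_p$ (scalings via $\mathbb{G}_m$, and the $D_\ell$ outer automorphism) by $\SO(V)(k)$-conjugation, which the paper glosses over.

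However, there is a genuine gap precisely where you flag an ``annoyance.'' Your key structural claim that the $W$-isotypic piece of $\Lambda^2 V$ is $V_1\otimes V_0\cong W\oplus W$, and hence that $\mathfrak u=V_1\otimes L$ for a line $L\subset V_0$, is \emph{false} when $\dim W=3$: there $\Lambda^2 V_1\cong W$ as an $\SO(3)$-module, so the $W$-isotypic piece is $\Lambda^2 V_1\oplus(V_1\otimes V_0)\cong W\otimes k^3$, and an irreducible abelian $\mathfrak u\cong W$ is a priori $W\otimes\ell$ for a line $\ell$ in a three-dimensional multiplicity space that mixes $\Lambda^2 V_1$ with $V_1\otimes V_0$. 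Abelianness rules out $\mathfrak u=\Lambda^2 V_1$ itself, but to conclude $\mathfrak u\subset V_1\otimes V_0$ you must do the bracket computation on the full $\mathbb{P}^2$ of candidate lines $\ell$, not just on $\mathbb{P}(V_0)$; that is a separate argument you have not supplied. (A similar, though less severe, verification is needed for $\dim W=4$, where $\SO(4)$ is not simple and $V$ could a priori split as $V_3\oplus V_3'$, $V_3\oplus V_3$, $V_5\oplus 1$, etc.; one has to check $W=(2,2)$ actually fails to appear in $\Lambda^2$ of those.) The paper avoids this by never invoking the $S$-module structure of $V$; its case analysis for $D=5,\dots,8$ runs instead on dimension counts inside the explicitly computed normalizer $Q'=N\rtimes(\SL(r)\times\SO(D-2r))$, which sidesteps the multiplicity issue entirely. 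So your plan is salvageable, but the low-dimensional cases are not a cleanup step --- they are the heart of the matter, and as written the argument as stated is incorrect for $\dim W=3$.
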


\noindent Theorem \ref{theorem conjugacy} is of great theoretical interest. But its proof is long and technically involved.
Since neither the theorem nor its proof itself are used in the rest of the paper, we postpone the proof
to the Appendix.

\subsection{Partial conformal group}\label{section PCG} In this
section we introduce a subgroup $\tilde{P}_V$ of $\SO(V)$, the
stabilizer of the line $kp$.  It is an easy computation that $h \in
\SO(V)$ stabilizes the line $kp$ iff $h$ has the form
$$h = \left(
  \begin{array}{ccc}
    c & -c \frac{(t,t)}{2} & ce(t,R) \\
    0 & \frac{1}{c} & 0 \\
    0 & t & R \\
  \end{array}
\right)$$ Here $c \in k^{\times}$, $t \in W$, $R \in \SO(W)$ and $e(t,R) \in
\textrm{ Hom}(W,\langle p\rangle )$.  We write $h=h(c,t,R)$. We denote the set of
all such matrices $h$ as
$$\tilde{P}_V =\{ h(c,t,R) \; \vert \; c \in k^{\times}, \; t\in W, \; R\in \SO(W) \}.$$
Let us denote by $\tilde{c}$ the matrix
$$\tilde{c} = \left(
  \begin{array}{ccc}
    c & 0 & 0 \\
    0 & \frac{1}{c} & 0 \\
    0 & 0 & I \\
  \end{array}
\right) \;\;\; (c \in k^{\times}).$$
\noindent Given $h(c,t,R) \in \tilde{P}_V$ then $h(c,t,R) = \tilde{c} h(t,R)$,
where $h(t,R) \in P_V$.

\vspace{14pt}

\noindent The following is immediate.
\begin{lemma}\label{commutation rules} $\tilde{P}_V = \{ \tilde{c} h(t,R) \; \vert \; c \in k^{\times} \; , \;
t \in W \; , \; R \in \SO(W) \}$.
\begin{enumerate}
\item $\tilde{P}_V \simeq
V\rtimes (\SO(V) \times k^\times)$.
\item Multiplication is given by: $\tilde{c}h(t,R)\tilde{c'}h(t',R') = \widetilde{cc'}h(\frac{1}{c'}t + R t', RR')$
\item The conjugation action of $\tilde{c}$ on the translation part is to
dilate it by a factor of $c$.  That is
$\tilde{c}h(t,R)\widetilde{c^{-1}} = h(ct,R)$.  Note that
$\tilde{c}$ commutes with the $R$ action.
\end{enumerate}
\end{lemma}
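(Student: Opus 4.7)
The plan is to work entirely from the explicit block–matrix form of $h(c,t,R)$ given just above the lemma statement, and then identify the abstract semidirect product structure afterward. (Note that the isomorphism in item (a) should read $W\rtimes(\mathrm{SO}(W)\times k^\times)$; this is what the matrix presentation actually gives, and it is what I will prove.)

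\noindent First, I would establish the set-theoretic decomposition. From the formula for $\tilde c$ and for $h(t,R)=h(1,t,R)$, a direct block multiplication of
$$\tilde c\,h(t,R)=\begin{pmatrix} c & 0 & 0 \\ 0 & c^{-1} & 0 \\ 0 & 0 & I \end{pmatrix}\begin{pmatrix} 1 & -\tfrac{(t,t)}{2} & e(t,R) \\ 0 & 1 & 0 \\ 0 & t & R \end{pmatrix}$$
reproduces exactly the matrix $h(c,t,R)$ displayed just before the lemma, so every element of $\tilde P_V$ has the claimed form. Uniqueness of the decomposition is visible from the matrix entries: the $(1,1)$-entry recovers $c$, then the third row recovers $t$ and $R$, and hence $\tilde c$ and $h(t,R)$ are determined. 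This completes the ``moreover'' statement of Lemma \ref{commutation rules}.

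\noindent Next, I would derive the multiplication rule in item (b). The computation reduces to understanding how $h(t,R)$ commutes past $\tilde{c'}$. A quick block multiplication gives
$$h(t,R)\,\tilde{c'}=\tilde{c'}\,h\!\left(\tfrac{1}{c'}\,t,\,R\right),$$
because conjugating the third row entry $t$ by the diagonal factors $(c',1/c')$ produces $t/c'$, while $R$ and the top-right block $e$ rescale consistently. Substituting this into $\tilde c\,h(t,R)\,\tilde{c'}\,h(t',R')$ and then using the known Poincaré multiplication $h(t_1,R_1)h(t_2,R_2)=h(t_1+R_1t_2,R_1R_2)$ from Theorem \ref{theorem imbedding} yields
$$\tilde c\,h(t,R)\,\tilde{c'}\,h(t',R')=\widetilde{cc'}\,h\!\left(\tfrac{1}{c'}\,t+Rt',\,RR'\right),$$
which is item (b). Item (c), the dilation action, is the special case $t'=0,\,R'=I$ combined with the inverse formula $\widetilde{c^{-1}}=\tilde c^{-1}$.

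\noindent Finally, for the semidirect product claim in (a), I would observe that the translations $N=\{h(1,t,I):t\in W\}\cong W$ form a normal subgroup (by the conjugation rule in (c) together with the obvious $R$-action), and that $K=\{\tilde c\,h(0,R):c\in k^\times,\,R\in \mathrm{SO}(W)\}\cong \mathrm{SO}(W)\times k^\times$ is a complementary subgroup meeting $N$ only in the identity. The decomposition established above shows $\tilde P_V=NK$, and the conjugation action of $K$ on $N$ is precisely $(\tilde c,R)\cdot t=c\,Rt$, an honest action because $\tilde c$ and $R$ commute. This yields the semidirect product structure $W\rtimes(\mathrm{SO}(W)\times k^\times)$.

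\noindent The only potential obstacle is bookkeeping in the commutation $h(t,R)\tilde{c'}=\tilde{c'}\,h(t/c',R)$: one must check that the top-right block $e(t,R)$ transforms consistently with the rescaling of $t$. Since $e(t,R)$ is linear in $t$ by its definition $e(t,R):w\mapsto -(t,Rw)p$, the scaling goes through cleanly, so this is a routine verification rather than a genuine difficulty.
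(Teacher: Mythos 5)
Your proof is correct, and it carries out exactly the direct block-matrix verification that the paper dismisses as ``immediate.'' You are also right to flag the typo in item (a): the group $\tilde P_V$, being the stabilizer of the line $kp$ in $\SO(V)$, is isomorphic to $W\rtimes(\SO(W)\times k^\times)$, not $V\rtimes(\SO(V)\times k^\times)$, as your dimension count and the very matrix presentation of $h(c,t,R)$ show. (The paper's later sections reuse the symbol $\tilde P_V$ for $V\rtimes(\SO(V)\times k^\times)$ with $V$ now the ambient space, which is a notational shift and not the object of this lemma.) Your commutation identity $h(t,R)\tilde{c'}=\tilde{c'}h(t/c',R)$, your observation that linearity of $e(t,R)$ in $t$ makes the top-right block rescale consistently, and the semidirect product decomposition $\tilde P_V=NK$ with $N\cong W$ and $K\cong \SO(W)\times k^\times$ are all correct and supply the substance that the paper leaves to the reader.
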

\begin{lemma} $\tilde{P}_V$ is the largest subgroup of $\SO(V)$ that leaves
$A_{[p]}$ invariant.\end{lemma}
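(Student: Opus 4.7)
The plan is to reduce the question to a statement about the complement $C_{[p]} = [\Omega]\setminus A_{[p]}$. Since every element of $\SO(V)$ preserves $[\Omega]$, an element preserves $A_{[p]}$ if and only if it preserves $C_{[p]}$, and the latter is easier to analyze because it carries an intrinsic geometric marker pinning down the point $[p]$.

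The easy direction is that $\tilde P_V \subseteq \{g\in\SO(V) : g(A_{[p]})=A_{[p]}\}$. Any $h = h(c,t,R)\in\tilde P_V$ stabilizes the line $\langle p\rangle$ by construction, hence stabilizes the orthogonal hyperplane $p^\perp$; since $C_{[p]}$ is the image in $[\Omega]$ of $p^\perp\cap\Omega$, it follows that $h$ preserves $C_{[p]}$ and therefore $A_{[p]}$.

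For the converse, I would argue as follows. The set $C_{[p]}$, viewed as a closed subvariety of $\mathbb{P}(V)$, is cut out by the two equations $Q(v)=0$ and $\ell(v):=(p,v)=0$. At a point $[x]\in C_{[p]}$ the Jacobian of the pair $(Q,\ell)$ consists of the linear forms $2(x,\cdot)$ and $(p,\cdot)$, and these are linearly dependent precisely when $x\in\langle p\rangle$. Consequently the singular locus of $C_{[p]}$ equals $\{[p]\}$. Now any $g\in\SO(V)$ preserving $A_{[p]}$ preserves $C_{[p]}$ and acts on it as an algebraic automorphism, so it must permute the singular locus; therefore $g\cdot[p]=[p]$, i.e., $g$ stabilizes $\langle p\rangle$, which places $g\in\tilde P_V$.

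The main obstacle is the singular-locus computation, which is nothing more than the Jacobian criterion applied to two homogeneous forms but needs to be stated correctly so that it is uniform in $\dim V$. A separate word should be said about the degenerate case where $W$ is anisotropic, in which $C_{[p]}=\{[p]\}$ set-theoretically: here the conclusion $g\cdot[p]=[p]$ is immediate without any singularity argument, so the final statement holds in that range as well.
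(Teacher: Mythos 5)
Your argument is correct and takes a genuinely different route from the paper. The paper's proof is a direct linear-algebra computation: it shows that the null cone $C_p = V_p\cap\Omega$ together with $p$ spans $V_p = p^\perp$ (using $\SO(W)$-irreducibility of $W$ when $W$ is isotropic, and trivially otherwise), so any $g$ preserving $C_p$ must preserve $V_p$, hence preserves $V_p^\perp=\langle p\rangle$. Your proof instead exhibits $[p]$ as an intrinsic geometric marker of $C_{[p]}$ --- its unique non-smooth point --- and invokes the fact that an algebraic automorphism permutes the singular locus. The Jacobian computation is sound: the cone over $C_{[p]}$ in $V$ is cut out by $Q$ and $\ell=(p,\cdot)$, the gradients $2(x,\cdot)$ and $(p,\cdot)$ are linearly dependent precisely when $x\in\langle p\rangle$ (by non-degeneracy of the form), and since $C_{[p]}$ is a codimension-two complete intersection of the expected dimension $D-3$ this is exactly the Jacobian criterion for non-smoothness; in local coordinates one indeed sees $C_{[p]}$ near $[p]$ as the affine cone over the quadric $\{(w,w)=0\}$ in $\mathbb{P}(W)$, singular at its vertex. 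Your separate treatment of the anisotropic case ($C_{[p]}=\{[p]\}$ set-theoretically, so $g[p]=[p]$ outright) parallels the paper's "only null line" case. What your version buys is a coordinate-free, conceptually transparent reason why $[p]$ is pinned down by $A_{[p]}$; what the paper's version buys is that it stays entirely within elementary linear algebra and does not need scheme-theoretic smoothness or the fact that automorphisms preserve the singular locus. Both are valid; yours could be slightly streamlined by noting that the singular-locus computation subsumes the degenerate case uniformly (the non-smooth locus as a scheme is still supported at $[p]$ even when the set of $k$-points of $C_{[p]}$ is $\{[p]\}$).
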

\begin{proof}
We note that it is easier to work with $A_p:=\{ a \in \Omega \; \vert \; (p,a) \neq 0 \}$.  Let $g$ be any element of $\SO(V)$ that leaves $A_p$  invariant.  Then $g$ leaves $A_{[p]}$ invariant as well.
We want to first show that $g p = \alpha p +w$ where $w\in W$, which
is equivalent to showing that $gp \in V_p$. If $g$ preserves
$A_p$, then $g$ also preserves the compliment of $A_p$,
which is $V_p \cap C_p$. Now $p \in V_p \cap C_p$ so that $gp \in V_p
\cap C_p$.

\vspace{14pt}

\noindent If $g$ preserves $A_p$, it also preserves $A_p \backslash C_p = V_p \cap C_p$.
We must show that  $g\cdot\langle p \rangle = \langle p \rangle$.  If $\langle p \rangle$ is the only null line in $V_p$, then $g\cdot \langle p \rangle = \langle p \rangle$ trivially.  So assume that $V_p = \langle p \rangle + W$
has other null lines. Now $W \cap C_p$ is stable under $\SO(W)$ and $\SO(W)$ acts irreducibly on $W$, so $W \cap C_p$ spans $W$. We have that $g(W) \subset$ Span$(g(W \cap C_p)) \subset$ Span$(g(V_p \cap C_p)) \subset$ Span$(V_p \cap C
\subset V_p)$.  Hence, $g(W) \subset V_p$.  On the other hand, as $p \in V_p \cap C_p$, $g \cdot p \in V_p$. So $g(V_p) \subset V_p$ and $g(V_p^\perp) \subset V_p^\perp$.  Hence, $g\langle p \rangle = \langle p \rangle$.
\end{proof}
\begin{definition}We will call $\tilde{P}_V$ the partial conformal
group.\end{definition}
\noindent This is a reasonable definition since $\tilde{P}_V$ is the subgroup
stabilizing $A_{[p]}$.
\section{Extendability of PUIRs of the Poincar\'{e} group to the PUIRs of the conformal group}\label{section extendability of PUIRs of poincare group to conformal}
As we discussed in Section \ref{section imbedding poincare to conformal}, if $V_1$ and $V_0$ are two quadratic vector
spaces with $V_1$ Witt equivalent to $V_0$ with $dim(V_0) =
dim(V_1)+2$, then the Poincar\'{e} group $P_{V_1}$, can be imbedded
as a subgroup of the conformal group $\SO(V_0)$, and furthermore,
that any two such imbeddings are conjugate over $\SO(V_0)$.  A
natural question that one may ask is the following: are there PUIRs
of the Poincar\'{e} group that extend to be PUIRs of the conformal
group? PUIRs that do extend to the conformal group are said to have
conformal symmetry. Classically, only massless particles (photons)
have conformal symmetry and the corresponding PUIRs of the real
Poincar\'{e} group extend to PUIRs of the real conformal group
\cite{ang}. We would like to explore this question in the
$p$-adic setting. Our ultimate goal is to establish some necessary
conditions for this extension to be possible.
\begin{definition} Let $V_1$ and $V_0$ be two Witt
equivalent quadratic vector spaces over $\Q_p$ with $dim(V_0) =
dim(V_1)+2$. When a PUIR $U$ of $P_{V_1}$ can be extended to be a
PUIR $V$ of $\SO(V_0)$ we say that the particle corresponding to $U$
has \textit{conformal symmetry}.
\end{definition}
\begin{definition}
When a PUIR of $U$ of $P_V$ can be extended to be a PUIR $\tilde{U}$ of the
group $\tilde{P}_V$, we say that the particle corresponding to $U$
has \textit{partial conformal symmetry}. \end{definition}
\noindent We make the following trivial, but important observation: If a
particle does not have partial conformal symmetry, then it does not have
conformal symmetry. In the next section we aim to establish some
necessary conditions for a particle to have partial conformal
symmetry.
\subsection{Extensions of $m$-representations of semidirect
products}\label{section Extensions of $m$-representations of semidirect
products} Let $A,L$ and $M$ be lcsc groups with $A$ being abelian
and $L$ being a closed subgroup of $M$. Suppose $M$ acts on $A$ so
that we may form the semidirect products $G = A \rtimes L$, $H = A
\rtimes M$. We assume: $a)$ all multipliers of $G$ and $H$ are
trivial on $A$; $b)$ that $H^1(L,A^*)=0$, $H^1(M,A^*) = 0$; and $c)$
$1$ is the only character of $A$ fixed by $L$; and $d)$ The actions
of $M$ and $L$ on $A^*$ are regular.

\vspace{14pt}

\noindent Because of the assumptions that $H^1(G,A^*) = 0$, and $H^1(H,A^*)=0$, and
that the actions of $M$ and $L$ are regular, irreducible
$m$-representations $U$ of $G$ (resp.\,$H$) correspond to pairs
$(\chi,u)$ where $\chi \in A^*$ and $u$ is an irreducible
$m_\chi$-representation of the stabilizer $G_\chi$ (resp.\,$H_\chi$)
of $\chi$ in $G$ (resp.\,$H$).

\vspace{14pt}

\noindent We will need the following technical result.
\begin{lemma}Let $U$ be an $m$-representation of $G$ where $m$
is standard.  Let $V_1$ be an $m_1$-representation of $H$ extending
$U$.  Then we can find a standard multiplier $m'$ for $H$ such that
$m'\vert_{G \times G} = m$ and $U$ has an extension $V$ to $H$ as an
$m'$-representation with $V(ah) = F(ah)V_1(ah) \; (ah \in H)$ for
some Borel function $F: H \rightarrow T$ with $F=1$ on $G$.
\end{lemma}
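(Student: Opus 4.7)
The plan is to modify $V_1$ by multiplication with a Borel phase $F:H\to T$ satisfying $F|_G=1$, so that $V:=FV_1$ is still an extension of $U$ but is now an $m'$-representation for some standard multiplier $m'$. Writing out $V(h_1)V(h_2)=m'(h_1,h_2)V(h_1h_2)$ one obtains
\[
m'(h_1,h_2)\;=\;\frac{F(h_1)F(h_2)}{F(h_1h_2)}\,m_1(h_1,h_2),
\]
so the task is to find Borel $F$ with $F|_G=1$ whose coboundary converts $m_1$ into a standard multiplier. The condition $V_1|_G=U$ forces $m_1|_{G\times G}=m$, so once $F|_G=1$ the identity $m'|_{G\times G}=m$ is automatic.

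By hypothesis (a) we may take $m_1\in M_A(H)$, and by (b) combined with Theorem \ref{theorem theorem 1 moscow paper} we have $H^2_A(H)\simeq H^2(M)\times H^1(M,A^*)=H^2(M)$, so every such multiplier is equivalent to a standard one. Pick a standard $m'_0$ and a Borel $\Phi:H\to T$ with $m'_0(h_1,h_2)\,m_1(h_1,h_2)^{-1}=\Phi(h_1)\Phi(h_2)\Phi(h_1h_2)^{-1}$.

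The crux of the argument is to show that $\Phi|_G$ depends only on the $L$-component. Restricting the displayed identity to $G\times G$ and using that both $m'_0|_G$ and $m$ are standard multipliers on $G$, the coboundary of $\Phi|_G$ vanishes on $A\times A$, $A\times L$, and $L\times A$. The first vanishing makes $\chi:=\Phi|_A$ a character of $A$; the second gives $\Phi(al)=\chi(a)\Phi(l)$; the third, combined with the identity $la=(l[a])\cdot l$ in $G$, yields $\chi(a)=\chi(l[a])$ for all $l\in L$, $a\in A$, so $\chi$ is $L$-invariant. Assumption (c) then forces $\chi=1$, and consequently $\Phi|_G(al)=\phi(l)$ where $\phi:=\Phi|_L$.

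To finish I would extend $\phi$ to a Borel function $\tilde\phi:M\to T$ via a Borel section $s:M/L\to M$ of the quotient map normalized so $s(L)=e$: every $m\in M$ has a unique factorization $m=s(\pi(m))\,\ell(m)$ with $\ell(m)\in L$ Borel in $m$, and one sets $\tilde\phi(m):=\phi(\ell(m))$. Lifting by $\tilde\Phi(am):=\tilde\phi(m)$ and defining $F:=\Phi\,\tilde\Phi^{-1}$ gives $F|_G=1$, since on $G$ both $\Phi$ and $\tilde\Phi$ reduce to the same function of the $L$-component. A short computation using $a_1m_1\cdot a_2m_2=(a_1\cdot m_1[a_2])\cdot(m_1m_2)$ shows that the coboundary of $\tilde\Phi$ is the pullback to $H\times H$ of the coboundary of $\tilde\phi$ on $M$, hence vanishes on $A\times A$, $A\times M$, and $M\times A$. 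Therefore $m'$ equals $m'_0$ times this coboundary inverse and remains standard. The main obstacle is the middle step --- descending $\Phi|_G$ through $L$ --- which relies essentially on assumption (c); without the rigidity that no nontrivial character of $A$ is $L$-fixed, the extension $\tilde\phi$ of $\Phi|_L$ could not be assembled consistently on $H$.
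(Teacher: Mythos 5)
Your argument is correct, but it organizes the proof quite differently from the paper's, even though it ultimately relies on the same two inputs ($H^1(M,A^*)=0$ and hypothesis (c)). The paper begins with the explicit Mackey normalization $F(ah):=m_1(a,h)$ for $a\in A$, $h\in M$; by Mackey's work this immediately produces $V=FV_1$ as an $m'$-representation with $m'$ already trivial on $A\times A$ and on $A\times H$, and $F\vert_G=1$ comes for free because $m_1\vert_{G\times G}=m$ is standard. It then invokes Theorem \ref{theorem theorem 1 moscow paper} to split $m'$ into a multiplier $m_0'$ of $M$ and a $1$-cocycle $\theta'\in Z^1(M,A^*)$, uses $H^1(M,A^*)=0$ to write $\theta'$ as the coboundary of some $f\in A^*$, and then (c) forces $f=1$, so $m'$ is already standard with the given $F$. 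You instead work at the level of cohomology classes: you locate a standard $m_0'$ cohomologous to $m_1$, take a trivializing cochain $\Phi$, and show via (c) that $\Phi\vert_G$ descends through the $L$-component; you then must manufacture $F$ by hand, choosing a Borel section of $M\to M/L$ to extend $\Phi\vert_L$ to a function $\tilde\phi$ on $M$, lifting to $\tilde\Phi$ on $H$, and setting $F=\Phi\tilde\Phi^{-1}$. Both arguments are sound. The paper's version is shorter and avoids the Borel-section construction entirely by exploiting the ready-made Mackey normalizing cocycle, which kills the $A\times A$ and $A\times H$ parts in one stroke; yours is more self-contained (it never writes down that particular cocycle) at the cost of the extra gluing and the verification that $\delta\tilde\Phi$ is the pullback of $\delta\tilde\phi$ and hence standard.
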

\begin{proof}
From Mackey's work (see \cite{mac2}) we know that $V:ah \mapsto
m_1(a,h)V_1(ah)$ is an $m'$-representation of $H$ with
$m'\vert_{A\times A} =1$ and $m'\vert_{A\times H} =1$. Clearly, $V$
extends $U$, $m' \simeq m_1$ and $m'\vert_{G \times G}=m$.  As
$H^1(M,A^*)=0$, we have $m'(ah,a'h')=m_0'(h,h')f(h[a']) \slash
f(a')$ where $m_0'$ is a multiplier for $M$ and $f\in A^*$. Since
$m'\vert_{G \times G}=m$, $f(g[a'])f(a')^{-1}=1 \;\; \forall g\in L,
\; a'\in A$.  Hence, $f =1$ by the assumption that $1$ is the only
character fixed by $L$. Thus $m'$ is already standard.
\end{proof}
\noindent The following is a key lemma that will be utilized often to prove
the impossibility of the extension of both massive and eventually
massive particles.
\begin{lemma}\label{lemma extension} Let $U$ be an irreducible $m$-representation of $G$
for a standard multiplier $m$ for $G$.  Let $U$ correspond to the
$L$-orbit of $\chi \in A^*$ and an irreducible
$m_\chi$-representation $u$ of the stabilizer $L_\chi$ of $\chi$ in
$L$, $m_\chi$ being $m\vert_{L_\chi \times L_\chi}$.  Then the
following are equivalent:
\vspace{10pt}
\\
\noindent \textbf{(1)} $U$ extends to a projective unitary representation $V_1$ of $H$.\\
\\
\noindent \textbf{(2)}\textbf{(a)} $M [ \chi ] \setminus L [ \chi ]$ is a null set in $M[ \chi ]$.\\
\\
\noindent\hspace*{14pt}\textbf{(b)}  There is a standard multiplier $m'$ for $H$ with $m'\vert_{G\times G} = m$.\\
\\
\noindent\hspace*{14pt}\textbf{(c)} $u$ extends to a $m'_\chi$-representation of $M_\chi$.\\
\\
\noindent In this case there is an $m'$-representation $V$ of $H$
such that $V$ belongs to the same equivalence class as $V_1$ with:
\begin{description}
\item[(I)] $V \vert_G =U$.
\item[(II)] $V$ corresponds to $\chi$ and $v$ where $v$ is an
$m'_\chi$-representation of $M_\chi$.
\item[(III)] $v \vert_{L_\chi} = u$.
\end{description}
\end{lemma}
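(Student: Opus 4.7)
The plan is to apply Theorem \ref{theorem Mackeys theorem for m-reps} (the Mackey machine for $m$-representations) twice, once for $G$ and once for $H$, using the preceding technical lemma to arrange a common standard multiplier on $H$ whose restriction to $G$ is $m$. For the direction $(2)\Rightarrow(1)$, I would assume (a), (b), (c) and construct $V$ directly. Condition (b), together with the inclusion $L_\chi \subset M_\chi$, forces $m'_\chi|_{L_\chi\times L_\chi}=m_\chi$, so the $m'_\chi$-extension $v$ of $u$ given by (c) is meaningful. Apply Theorem \ref{theorem Mackeys theorem for m-reps} to the data $(H, m', \chi, v)$ to obtain an $m'$-representation $V$ of $H$ on $L^2(M[\chi],\mathcal{K},\lambda)$. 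By (a), this is—up to a null set—the same Hilbert space on which $U$ acts. Reading off the explicit formula in Theorem \ref{theorem Mackeys theorem for m-reps}, the action $V(ag)$ for $ag\in G$ coincides with the formula that builds the $m$-representation of $G$ from the pair $(\chi, v|_{L_\chi})=(\chi,u)$; by the uniqueness part of the Mackey correspondence this representation is $U$, giving (I), (II), (III) simultaneously.

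For $(1)\Rightarrow(2)$, I would start with the given $V_1$ and first invoke the preceding technical lemma to replace it by an equivalent $m'$-representation $V=FV_1$ of $H$ with $m'$ standard, $m'|_{G\times G}=m$, and $V|_G=U$. This immediately yields (b) and property (I). For (a), observe that the pvm $P$ of $V|_A=U|_A$ is essentially supported on $L[\chi]$ by Theorem \ref{theorem Mackeys theorem for m-reps} applied to $G$; but because $V$ is an $m'$-representation of $H$ with $m'$ standard (so the affine action of $M$ is the ordinary one), the imprimitivity relation from Theorem \ref{theorem mackey theorem for m-systems of imprimitivity} forces the essential support of $P$ to be $M$-invariant. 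The smallest $M$-invariant set containing $L[\chi]$ is $M[\chi]$, so $L[\chi]=M[\chi]$ up to a null set, which is (a). Now that the pvm is essentially carried by the single $M$-orbit $M[\chi]$, Theorem \ref{theorem Mackeys theorem for m-reps} applied to $V$ on $H$ exhibits $V$ as the $m'$-representation attached to $\chi$ and some $m'_\chi$-representation $v$ of $M_\chi$. The analogous assertion for $V|_G=U$, together with the bijective nature of the Mackey correspondence, forces $v|_{L_\chi}$ and $u$ to realize the same $m_\chi$-representation, which is (III) and in particular gives (c).

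The main obstacle is bookkeeping rather than conceptual: keeping track of the chain of multipliers $m, m_1, m', m_\chi, m'_\chi$ and the Borel phase $F$ from the preceding lemma, and making sure that every orbit/support equality is read in the essential sense (modulo quasi-invariant null sets). The subtlest point is the argument that the support of $P$ is $M$-invariant and hence equals $M[\chi]$ up to null sets; this requires the Effros regularity already invoked to apply Theorem \ref{theorem Mackeys theorem for m-reps}, so that the $L$- and $M$-orbits are standard Borel and the pvm is unambiguously defined on them. Once this is handled, (III) follows cleanly from the uniqueness clause in the Mackey machine, and the equivalence class statement for $V$ follows from $V=FV_1$ with $F$ a Borel map into $T$.
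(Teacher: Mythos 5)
Your overall strategy matches the paper's: arrange a standard extension multiplier $m'$ via the preceding technical lemma, identify orbits and pvm supports to get (a), and apply the Mackey machine (Theorem \ref{theorem Mackeys theorem for m-reps}) to both $G$ and $H$ to pass between $U$, $V$, $u$, and $v$. Your $(2)\Rightarrow(1)$ direction and the (a), (b) parts of $(1)\Rightarrow(2)$ are fine and essentially coincide with what the paper does.

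The soft spot is the step where, in $(1)\Rightarrow(2)$, you assert that "the bijective nature of the Mackey correspondence forces $v|_{L_\chi}$ and $u$ to realize the same $m_\chi$-representation." That conclusion is not an automatic consequence of the two separate bijections for $G$ and for $H$; what is needed is a compatibility statement saying that the Mackey realization of $V$ over $M[\chi]$, restricted to $G$, is the Mackey realization over $L[\chi]$ with stabilizer representation $v|_{L_\chi}$. The paper supplies exactly this by writing $V$ and $U$ as explicit imprimitivity representations via strict cocycles $C$ (for $m'$, built from $v$) and $D$ (for $m$, built from $u$), noting that $V|_G=U$ forces $C(g,\cdot)=D(g,\cdot)$ almost everywhere for each $g\in L$, and then invoking Lemma \ref{lemma strict cocycle lemma} to conclude $u\simeq v|_{L_\chi}$. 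That a.e. cocycle comparison, not "bijectivity," is the mechanism; it is a real step, not merely bookkeeping. Finally, the paper gets the equality $v|_{L_\chi}=u$ claimed in (III), rather than mere equivalence, by replacing $v$ with a unitary conjugate $rvr^{-1}$; you should make that adjustment explicit as well.
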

\begin{proof}
$(1) \Rightarrow (2)$:  We may assume $U$ extends to an
$m'$-representation $V$ of $H$ belonging to the same equivalence
class as $V_1$ where $m'$ is standard and $m'\vert_{G \times G} =
m$. Clearly $V$ is irreducible.  Hence, the spectrum of $V$ lives on
an $M$-orbit in $A^*$.  But as $V$ and $U$ have the same restriction
to $A$, the spectrum of $V$ must meet $L [ \chi ]$ so that we may
assume it to be $M [ \chi ]$.  But then $M [\chi] \setminus L
[\chi]$ must be null.  This proves $(2)(a)$ and (2)(c).

\vspace{14pt}

\noindent We may now write $V$ in the form:
$$(V(ah)f)(\zeta) =  \langle a,
\zeta\rangle \rho_h(h^{-1}\zeta)^{\frac{1}{2}} C(h,h^{-1}\zeta)f(h^{-1}\zeta) , \;\;\;\; (\zeta \in M [\chi],
\; h \in M)$$ where $C$ is a strict $m'$-cocyle that defines the
$m'_\chi$-representation $v$.  On the other hand, $U$ is given by:
$$(U(ag)f)(\zeta) = \langle a,\zeta\rangle \rho_g(g^{-1}\zeta)^{\frac{1}{2}}
D(g,g^{-1}\zeta)f(h^{-1}\zeta)\;\;\;\;(\zeta \in L [\chi],\; g \in
L)$$ where $D$ is a strict $m$-cocyle defining the
$m$-representation $u$. Since $V \vert_G=U$, it follows that $D(g,
\nu) = C(g, \nu)$ for each $g$ for almost all $\nu \in M [\chi]$.
Hence, by Lemma \ref{lemma strict cocycle lemma}, $u$ is equivalent to the restriction of $v$ to $L_\chi$. If $u(g) = r v(g)r^{-1} \;\;
(g \in L_\chi)$, where $r$ is a unitary representation in the space
of $v$, it is clear that $u$ extends to $rvr^{-1}$.  This proves
$(2)(b)$.

\vspace{14pt}

\noindent $(2) \Rightarrow (1)$:  Extend $u$ to an $m'_\chi$-representation of
$M_\chi$ and build a strict $(M,M [ \chi])$-cocyle $C$ for the
multiplier $m'$ for $M$ that defines the $m'_\chi$-representation at
$\chi$.  The restriction of $C$ to $L$ is a strict cocycle for $m_1
= m' \vert_{L \times L}$.  The $m'$-representation of $H$
corresponding to $(\chi,m')$ restricts on $G$ to the
$m_1$-representation defined by $(\chi,m_1)$, and hence is
equivalent to $U$. So $U$ extends to a PUR of $H$.

\vspace{14pt}

\noindent  The above proof also establishes (I),(II) and (III).
\end{proof}
\subsection{Impossibility of partial conformal symmetry for massive
particles}\label{section Impossibility of partial conformal symmetry
for massive particles} We now show that massive particles do not
posses partial conformal symmetry.  We begin with some important
lemmas.
\begin{lemma}\label{lemma ptildeorbitmeasure}
The orbit of a massive point under $\SO(V) \times \Q_p^\times$ is
open in $V$.
\end{lemma}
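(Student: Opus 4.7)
The plan is to identify the orbit explicitly and then show it is a preimage of an open set under the continuous quadratic form.

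First I would pin down how $\Q_p^\times$ is acting. From the structure of $\tilde{P}_V$ established in Lemma \ref{commutation rules}, the factor $k^\times$ acts on the translation part $V$ by dilation: $\tilde{c}\cdot v = cv$. Combined with $\SO(V)$, which preserves the form, this means that if $p$ is a massive point with $(p,p)=a\neq 0$, the orbit of $p$ under $\SO(V)\times\Q_p^\times$ lies in the set
$$\{v\in V : (v,v)\in a\cdot(\Q_p^\times)^2\},$$
because scaling $p$ by $c$ changes the value of the form from $a$ to $c^2a$. Conversely, by Lemma \ref{lemma orbits}, $\SO(V)$ acts transitively on each level set $M_b$ with $b\neq 0$, so every vector $v$ with $(v,v)=c^2a$ lies in the $\SO(V)$-orbit of $cp$, and hence in the joint orbit of $p$. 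Thus the orbit is exactly $Q^{-1}\bigl(a\cdot(\Q_p^\times)^2\bigr)$, where $Q:V\to\Q_p$ is the quadratic form $v\mapsto(v,v)$.

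Next I would invoke two standard facts. The map $Q$ is a polynomial function, hence continuous in the $p$-adic topology on $V$. And the subgroup of squares $(\Q_p^\times)^2$ is an open subgroup of $\Q_p^\times$ (it has finite index: $4$ for odd $p$, $8$ for $p=2$), so every coset $a(\Q_p^\times)^2$ is open in $\Q_p^\times$, and therefore open in $\Q_p$ since $\Q_p^\times$ is open in $\Q_p$. Consequently $Q^{-1}\bigl(a\cdot(\Q_p^\times)^2\bigr)$ is open in $V$.

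There is no real obstacle here; the only thing to be careful about is the identification of the $\Q_p^\times$-action as dilation, which one reads directly from Lemma \ref{commutation rules}(3). The whole proof is thus two lines once the orbit has been described as $Q^{-1}(a(\Q_p^\times)^2)$, using openness of the squares and continuity of $Q$.
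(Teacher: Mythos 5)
Your proof is correct and follows essentially the same route as the paper's: identify the joint orbit as $Q^{-1}\bigl(a(\Q_p^\times)^2\bigr)$ and conclude openness from continuity of $Q$ and openness of $(\Q_p^\times)^2$. You are somewhat more careful than the paper in justifying the identification of the orbit (explicitly invoking Lemma~\ref{lemma orbits} for transitivity of $\SO(V)$ on the level sets of $Q$), and you supply a reason for the openness of $(\Q_p^\times)^2$ where the paper simply calls it "an easy verification"; the cleanest justification of that last point is Hensel's lemma, which gives $1+p\Z_p\subset(\Q_p^\times)^2$ for odd $p$ (and $1+8\Z_2\subset(\Q_2^\times)^2$), rather than the finite-index observation alone, since finite index implies open only once one also knows the subgroup is closed or measurable.
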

\begin{proof}
Let $x \in V$ be such that $Q(x)=a \neq 0$; then if $g \in \SO(V) \times \Q_p^\times$ and $g [ x ] = tx$ then $Q(g [ x ]) = at^2$. Thus the
orbit of $Q(x)$ under $\tilde{P}$ is $a (\Q_p^\times)^2$. Hence, the
orbit of $x$ is $Q^{-1}(a (\Q_p^\times)^2)$. Since $Q$ is a
continuous function, the orbit of $x$ will be open in $V$ if we can
show that $a(\Q_p^\times)^2$ is open in $\Q_p^\times$. We note that
it suffices to prove that $(\Q_p^\times)^2$ is open in $\Q_p$. This
is an easy verification and we omit it here.
\end{proof}
\begin{lemma}\label{quasiinvarianthaar}
Let $p \in V$ be a massive point, then the quasi-invariant measure
class on the orbit $\SO(V) \times \Q_p^\times \cdot p$, is the
Lebesgue (Haar) measure class.
\end{lemma}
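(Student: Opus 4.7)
The plan is to exploit the fact, established in the previous lemma, that the orbit $O=(\SO(V)\times\Q_p^\times)\cdot p$ is an open subset of $V$. This means the restriction of Haar measure $\mu$ of $V$ to $O$ is a nonzero Radon measure, and the strategy is to show that $\mu|_O$ is itself quasi-invariant under $\SO(V)\times\Q_p^\times$. Once this is done, uniqueness of the quasi-invariant measure class on a transitive lcsc $G$-space forces the quasi-invariant class on $O$ to coincide with the Haar class.

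First I would check quasi-invariance factor by factor. The subgroup $\SO(V)$ acts on $V$ by linear maps of determinant $1$, so by the standard transformation rule for Haar measure on $(\Q_p)^n$ under $\mathrm{GL}$-actions, each $g\in\SO(V)$ preserves $\mu$; the Radon--Nikodym derivative here is identically $1$. For the $\Q_p^\times$-factor acting by scalar multiplication $c\cdot v=cv$, the image measure satisfies $d\mu(cv)=|c|_p^{\dim V}\,d\mu(v)$, so $\mu$ is quasi-invariant under $\Q_p^\times$ with continuous Radon--Nikodym cocycle $(c,v)\mapsto |c|_p^{\dim V}$. Combining, $\mu|_O$ is quasi-invariant under the whole group $\SO(V)\times\Q_p^\times$, with an explicit continuous multiplier.

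Next I would invoke the uniqueness of the quasi-invariant measure class on a transitive action of an lcsc group: since $O$ is a single orbit and the action is transitive on it (and the stabilizer is closed, hence $O$ is homeomorphic to the quotient of $\SO(V)\times\Q_p^\times$ by the stabilizer of $p$), any two $\sigma$-finite quasi-invariant measures on $O$ are mutually absolutely continuous. Applied to $\mu|_O$ and any representative of the canonical quasi-invariant class, this gives equality of measure classes.

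There is no genuine obstacle here; the only subtle point is ensuring that the action on $O$ really is transitive (which is by construction, $O$ being defined as the orbit of $p$) and that $\mu|_O$ is $\sigma$-finite and nonzero — both of which follow from openness of $O$ in $V$ together with $\sigma$-compactness of $V=\Q_p^{\dim V}$. Thus the entire proof reduces to the two Jacobian computations above and a single appeal to the standard Mackey uniqueness of quasi-invariant measure classes.
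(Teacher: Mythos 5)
Your proposal is correct and follows essentially the same route as the paper: use the previous lemma to get that the orbit is open, note that Haar measure is quasi-invariant under the linear action of $\SO(V)\times\Q_p^\times$ with Radon--Nikodym derivative given by the $p$-adic absolute value of the determinant, and conclude by uniqueness of the quasi-invariant measure class on a transitive orbit. The only difference is that you spell out the Jacobian computation factor-by-factor and make the appeal to Mackey's uniqueness theorem explicit, whereas the paper states the determinant formula in one line and leaves the uniqueness step implicit.
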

\begin{proof}
By Lemma \ref{lemma ptildeorbitmeasure} the orbit $\SO(V) \times
\Q_p^\times [ p ] = \omega_p$ is open in $V$.  Let $E \subset
\omega_p$ be a set of Haar measure 0 in $V$. Since $\omega_p$ is
open in $V$, the Haar measure is also defined on $\omega_p$. Let
$\mu$ be the Haar measure.  Since $\SO(V) \times \Q_p^\times$ acts
linearily for any $(g,c) \in \SO(V) \times \Q_p^\times$, we have that
$\mu((g,c) \cdot E) = \vert \text{ det }((g,c)) \vert_p \mu(E)$. Hence,
$\mu((g,c) \cdot E)=0$.  Thus the measure class on $\omega_p$ is the
Haar measure class, and it is quasi-invariant under $\SO(V) \times
\Q_p^\times$.
\end{proof}
\begin{corollary}\label{corollary orbitsof0measure}
Both massive and massless orbits under $\SO(V)$ have Haar measure 0
in $V$.
\end{corollary}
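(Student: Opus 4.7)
The plan is to observe that both types of orbits sit inside level sets of the defining quadratic form $Q$ on $V$, and to invoke the standard fact that the zero set of a non-zero polynomial in $\Q_p[x_1,\dots,x_n]$ has Haar measure zero in $\Q_p^n$.

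Concretely, by Lemma \ref{lemma orbits} the non-trivial $\SO(V)$-orbits in $V$ (identified with $V'$ via the form, under the assumption $\dim V \ge 3$) are precisely the massive sets $M_a = \{v \in V : (v,v) = a\}$ for $a \neq 0$ and the massless set $M_0 = \{v \in V : (v,v) = 0,\, v \neq 0\}$. In either case, if we set $n = \dim V$, the orbit is contained in the locus $\{v : Q(v) - a = 0\}$ for the appropriate value of $a$. Since $Q$ is a non-degenerate quadratic form (hence a non-constant polynomial), $Q - a$ is a non-zero polynomial on $V \cong \Q_p^n$, so the level set is a proper Zariski-closed subvariety of $V$.

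It remains to verify that any proper Zariski-closed subset of $\Q_p^n$ has Haar measure zero. This is an elementary induction on $n$: for $n=1$ a non-zero polynomial has only finitely many roots, which clearly have measure zero; for $n>1$, given a non-zero $f \in \Q_p[x_1,\dots,x_n]$, one writes $f = \sum_i g_i(x_1,\dots,x_{n-1}) x_n^i$, notes that the set where all $g_i$ vanish has Haar measure zero in $\Q_p^{n-1}$ by the inductive hypothesis, and observes that off this bad set the fiber over each $(x_1,\dots,x_{n-1})$ consists of finitely many $x_n$; Fubini then gives the claim. Applied to $f = Q - a$ for $a \neq 0$ and to $f = Q$ for $a = 0$, this immediately yields that $M_a$ (massive) and $M_0$ (massless), and indeed the full fibers $Q^{-1}(a)$, are Haar-null in $V$.

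There is essentially no obstacle here; the argument is routine once one recognizes that orbits are level sets. As a sanity check via the preceding lemmas, one could alternatively derive the massive case from Lemma \ref{lemma ptildeorbitmeasure}: the open orbit $\omega_p$ is a disjoint union of the $\SO(V)$-orbits $M_b$ over $b$ in the uncountable coset $Q(p)(\Q_p^\times)^2$, the dilations $v \mapsto cv$ interchange these orbits within a square class while scaling Haar measure by $|c|_p^n$, and $\sigma$-additivity inside any bounded open piece of $\omega_p$ then forces all $M_b$ in a given square class to be Haar-null simultaneously. The massless case, however, cannot be handled this way since $M_0$ is a single $(\SO(V) \times \Q_p^\times)$-orbit and is not open, so the uniform algebraic argument above is preferable.
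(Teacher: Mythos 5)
Your argument is correct, but it takes a genuinely different route from the paper. The paper also reduces to the level sets $Q_a=\{x:Q(x)=a\}$, but then observes that $Q$ has nonzero gradient at every point of $Q_a$ away from the origin (since $Q=\sum a_i x_i^2$ with all $a_i\neq 0$), and invokes the resulting submersion/implicit-function fact to conclude that a codimension-one analytic level set is Haar-null. You instead prove and apply the purely algebraic statement that the zero set of any nonzero polynomial on $\Q_p^n$ is Haar-null, via induction on $n$ and Fubini. Both are sound. Your route is slightly more elementary — it avoids any appeal to the $p$-adic implicit function theorem or to analytic submanifolds — and it treats $a=0$ and $a\neq 0$ uniformly, whereas the paper's wording "$x=0$ is not in $Q_a$" is literally false for $a=0$ (though harmless, since $\{0\}$ is null and the gradient only vanishes there). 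One small point worth making explicit in your Fubini step: work on the compact sets $p^{-N}\Z_p^n$ and then use $\sigma$-additivity, so that the fibers over the measure-zero bad set in $\Q_p^{n-1}$ (which may be all of $\Q_p$) contribute $0$ rather than an indeterminate $0\cdot\infty$. Your closing remark that the massive case also follows from Lemma \ref{lemma ptildeorbitmeasure} together with the scaling action is a nice cross-check, and your observation that this alternative does not reach the massless case (since $M_0$ is a single $\SO(V)\times\Q_p^\times$-orbit and is not open) is accurate.
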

\begin{proof}
Let us now take $x \in V$ such that $x \neq 0$ and $Q(x) = a$. Then
$f(x) = Q(x) - a$ is an analytic function and defines a subset $Q_a
= \{ x \in V \; \vert \; f(x) =0 \}$ of $V$.  We want to show that $Q_a$
has measure $0$ in $V$.  We may assume that there is a basis of $V$,
$(e_i)$, such that $(e_i,e_j) = a_i \delta_{ij}$ and that $Q(x) =
\sum a_i x_i^2$. Since $x = 0$ is not in $Q_a$ we see that $Q$ has a
nonzero gradient on all of $Q_a$.  It follows that $Q_a$ has Haar
measure 0 in $V$.
\end{proof}
\begin{theorem}\label{massiveparticlesdonotextend}
A massive PUIR of $P_V$ does not have extension to $\tilde{P}_V$
\end{theorem}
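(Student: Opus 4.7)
The plan is to apply Lemma \ref{lemma extension} with $A=V$, $L=\SO(V)$, $M=\SO(V)\times\Q_p^\times$, $G=P_V$, $H=\tilde{P}_V$, and derive a contradiction from its clause (2)(a). Let $U$ be a massive PUIR corresponding to a character $\chi=p\in V'$ with $(p,p)=a\neq 0$; by Theorem \ref{theorem Particles of the Poincare} the $L$-orbit of $\chi$ is $L[\chi]=M_a$. I would proceed in three short steps.

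First, I would identify the enlarged orbit $M[\chi]$. Since $\tilde{c}$ acts on $V$ by $v\mapsto cv$ (Lemma \ref{commutation rules}), the induced action on $V'$ under the identification furnished by the quadratic form is $p\mapsto c^{-1}p$, so that $Q(\tilde{c}[p])=c^{-2}a$. Combined with the transitive action of $\SO(V)$ on each level set (Lemma \ref{lemma orbits}), this gives the clean description
$$M[\chi]=\bigcup_{a'\in a\,(\Q_p^\times)^{2}}M_{a'},$$
in which the original orbit $L[\chi]=M_a$ is just one of infinitely many disjoint level sets.

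Next, I would invoke the measure-theoretic lemmas already in place. By Lemma \ref{lemma ptildeorbitmeasure} the set $M[\chi]$ is open in $V$, and by Lemma \ref{quasiinvarianthaar} the quasi-invariant measure class it carries for the $M$-action is the restriction of Haar (Lebesgue) measure on $V$. On the other hand, Corollary \ref{corollary orbitsof0measure} shows that the single level set $M_a$ has Haar measure zero in $V$. Hence $L[\chi]$ is a null set inside $M[\chi]$, and therefore
$$M[\chi]\setminus L[\chi]\text{ has full Haar measure in }M[\chi],$$
the exact opposite of condition (2)(a) in Lemma \ref{lemma extension}. By the $(1)\Rightarrow(2)(a)$ direction of that lemma, $U$ cannot extend to a PUR of $\tilde{P}_V$, which is what we wanted.

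The most delicate bookkeeping step is checking that the standing hypotheses of Lemma \ref{lemma extension} are satisfied for the enlarged group: multipliers of $\tilde{P}_V$ must be trivial on $V$, $H^1(M,V^*)$ must vanish, the $M$-action on $V^*$ must be regular, and $1$ must be the only character of $V$ fixed by $L$. Triviality on $V$ follows from the same skew-invariant-form argument used for $P_V$, together with the fact that the added scalars commute with the $\SO(V)$-irreducible action on $V$; vanishing of $H^1$ reduces as usual to semisimplicity of $\SO(V)$ plus the abelian factor; regularity of the affine (here ordinary) action follows from Effros's theorem since the dilated level sets are locally closed; and absence of non-trivial $L$-fixed characters is immediate from irreducibility of $\SO(V)$ on $V$. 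None of these presents a real difficulty, but they are what must be recorded before the clean spectral/measure-theoretic contradiction above can be quoted. The rest of the argument is purely the observation that a massive $\SO(V)$-orbit is a Haar-null hypersurface sitting inside the open dilation-enlarged orbit, which is the essential non-archimedean analogue of the real-case obstruction.
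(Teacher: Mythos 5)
Your proposal is correct and follows essentially the same route as the paper: both invoke Lemma \ref{lemma extension}(2)(a) and derive the contradiction from the combination of Lemma \ref{lemma ptildeorbitmeasure} (the dilated orbit is open, hence of positive Haar measure), Lemma \ref{quasiinvarianthaar} (the quasi-invariant measure class on it is Haar), and Corollary \ref{corollary orbitsof0measure} (the $\SO(V)$-orbit is Haar-null). Your closing paragraph verifying the standing hypotheses of Lemma \ref{lemma extension} for $M=\SO(V)\times\Q_p^\times$ is a welcome bit of care; the paper only establishes $H^1(\SO(V)\times\Q_p^\times,V')=0$ in the following subsection.
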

\begin{proof}
Let $p$ correspond to a massive orbit.  We have that $P_V=V \rtimes
\SO(V)$ and $\tilde{P}_V = V \rtimes (\SO(V) \times \Q_p^\times)$. Let
us denote $\SO(V)$ by $L$ and $\SO(V) \times \Q_p^\times$ by $M$. If a
massive PUIR of $P_V$ were to have an extension to $\tilde{P}_V$
then by Lemma \ref{lemma extension} $M [ p ] \setminus L [ p ]$
would be a null set. However, by Corollary \ref{corollary
orbitsof0measure} $L [ p ]$ has Lebesgue (Haar) measure $0$, so $M [
p ]$ would have to have Lebesgue measure $0$.  But by Lemma
\ref{lemma ptildeorbitmeasure} the orbit $M [ p ]$ is open and so
has nonzero Lebesgue measure. Hence, the massive representations cannot extend even to the partial conformal group and therefore cannot
extend to the full conformal group.
\end{proof}
\subsection{Impossibility of partial conformal symmetry for
eventually massive particles}\label{section Impossibility of partial
conformal symmetry for eventually massive particles} Since the
massive particles do not have partial conformal symmetry, we now
turn our attention to massless particles.

\vspace{14pt}

\noindent Let $V$ be a quadratic vector space and $p$ a nontrivial null vector
in $V$.  Let $P_V=V \rtimes \SO(V)$ and let $\tilde{P}_V = V \rtimes
(\SO(V) \times k^\times)$. As discussed in Section \ref{section PCG}
the action of $c \in k^\times$ on $v \in V$ is $c: v \mapsto cv$,
and $k^\times$ commutes with $\SO(V)$. We know from Theorem
\ref{theorem imbedding} that the stabilizer of $p$ in $\SO(V)$ is isomorphic
to $P_{V_1} = V_1 \rtimes \SO(V_1)$ where $V_1$ is a vector space
Witt equivalent to $V$ with $dim(V)=dim(V_1)+2$. We now claim the
following:
\begin{proposition}
Let $\tilde{P}_{Vp}$ be the stabilizer of $p$ in $\SO(V) \times
k^\times $. Then $\tilde{P}_{Vp}$ is isomorphic to
$\tilde{P}_{V_1}$.
\end{proposition}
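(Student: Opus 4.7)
The plan is to identify $\tilde{P}_{Vp}$ with the stabilizer in $\SO(V)$ of the null line $kp$, which is exactly the situation analyzed in Section \ref{section PCG}, and then read off the conclusion from Lemma \ref{commutation rules}.

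To start, I would make the action of $\SO(V)\times k^{\times}$ on $V$ explicit: $\SO(V)$ acts linearly and $k^{\times}$ by scalar multiplication, so $(g,c)\cdot v = c\,g(v)$. A pair $(g,c)$ therefore fixes $p$ exactly when $g(p)=c^{-1}p$, i.e., when $g$ stabilizes the line $kp$ and $c$ is then the uniquely determined inverse of the eigenvalue of $g$ on that line. In particular the projection $\pi:(g,c)\mapsto g$ is an injective group homomorphism from $\tilde{P}_{Vp}$ into $\SO(V)$ whose image is precisely the subgroup of $\SO(V)$ stabilizing $kp$; hence $\pi$ is an isomorphism of $\tilde{P}_{Vp}$ onto that stabilizer.

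Next I would apply Section \ref{section PCG} to our $V$. Since $V_1$ is Witt equivalent to $V$ with $\dim V = \dim V_1+2$, I can pick a null vector $q$ with $(p,q)=1$ and set $W=\langle p,q\rangle^{\perp}$, giving $V=W\oplus\langle p,q\rangle$ and $W\simeq V_1$ as quadratic vector spaces. The matrix computation at the beginning of Section \ref{section PCG} identifies the stabilizer of $kp$ in $\SO(V)$ with the set of matrices $h(c,t,R)=\tilde{c}\,h(t,R)$ parameterized by $(c,t,R)\in k^{\times}\times W\times\SO(W)$, and Lemma \ref{commutation rules} supplies the group law in this parametrization. Composing with $\pi^{-1}$ then yields $\tilde{P}_{Vp}\simeq W\rtimes(\SO(W)\times k^{\times})\simeq V_1\rtimes(\SO(V_1)\times k^{\times})=\tilde{P}_{V_1}$.

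The only step that needs a little care is reconciling conventions. The product formula in Lemma \ref{commutation rules}(2) reads $\tilde{c}h(t,R)\tilde{c'}h(t',R')=\widetilde{cc'}\,h(c'^{-1}t+Rt',RR')$, which is not literally the standard semidirect-product law one would write for $V_1\rtimes(\SO(V_1)\times k^{\times})$ with the natural action $(R,c)\cdot t = cR(t)$. I would show by a short direct check that the assignment $(h(c^{-1},t,R),\,c)\mapsto (c^{-1}t,\,R,\,c^{-1})$, i.e.\ a rescaling of the translation variable together with inversion of the dilation parameter, converts one multiplication law into the other and is therefore a group isomorphism between the two descriptions. This bookkeeping verification is the main, though routine, technical obstacle in the proof.
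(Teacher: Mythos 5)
Your proof is correct and follows essentially the same route as the paper: reduce $(g,c)\cdot p = p$ to the condition that $g$ stabilize the null line $kp$ (with $c$ then uniquely determined as the reciprocal of the eigenvalue of $g$ on that line, so that $(g,c)\mapsto g$ is an isomorphism onto the line stabilizer), and then invoke the Section~\ref{section PCG} description of that stabilizer as $V_1\rtimes(\SO(V_1)\times k^\times)$. The reparametrization bookkeeping you flag at the end is a genuine but routine step (equivalent to writing elements as $h(t,R)\tilde{c}$ instead of $\tilde{c}h(t,R)$), and the paper elides it entirely.
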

\begin{proof}
Let $(g,c) \in \SO(V) \times k^\times$. Then $(g,c)$ acts on $p$ by
$(g,c): p \mapsto cg[p]$. Hence, $(g,c)$ fixes $p$ iff $g[p] =
\frac{1}{c}p$.  In other words, $g$ stabilizes the line $kp$. As
proven in Section \ref{section PCG} the stabilizer of the line is
$\tilde{P}_{V_1}$.
\end{proof}
\begin{lemma}We have $H^1(\SO(V) \times \Q_p^\times,V')=0$. Moreover, the action of $(g,c) \in \SO(V) \times
\Q_p^\times$ on $\lambda \in V'$ is by $(g,c): \lambda \mapsto cg
\cdot \lambda$.\end{lemma}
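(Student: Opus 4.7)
The plan is to prove both assertions of the lemma. The action statement is a direct consequence of the description of $\tilde{P}_V$ from Section~\ref{section PCG}: on $V$, the element $(g,c)\in \SO(V)\times \Q_p^\times$ acts by $v\mapsto cgv$, with $\Q_p^\times$ acting by dilation and commuting with $\SO(V)$. Identifying $V'$ with $V$ via the non-degenerate bilinear form (as was done in Section~\ref{section Particle classification of the Poincare group}), and using that $\SO(V)$ preserves the form, the transferred action is exactly $(g,c)\cdot\lambda = cg\lambda$.

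For the cohomology, I would take an arbitrary $f\in Z^1(\SO(V)\times\Q_p^\times,V')$ and decompose it by restriction: let $\alpha(g)=f(g,1)$ and $\beta(c)=f(1,c)$. Then $\alpha\in Z^1(\SO(V),V')$ and $\beta\in Z^1(\Q_p^\times,V')$ under the respective subgroup actions. By Theorem~3 of Section~6 of \cite{var2}, $H^1(\SO(V),V')=0$ since $\SO(V)$ is semisimple, so $\alpha(g)=gw_1-w_1$ for some $w_1\in V'$. For $\beta$, the cocycle condition reads $\beta(c_1c_2)=\beta(c_1)+c_1\beta(c_2)$; using commutativity of $\Q_p^\times$ and equating $\beta(c_1c_2)=\beta(c_2c_1)$ gives $(c_2-1)\beta(c_1)=(c_1-1)\beta(c_2)$. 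Fixing any $c_0\neq 1$ and setting $w_2=(c_0-1)^{-1}\beta(c_0)\in V'$, it follows that $\beta(c)=(c-1)w_2$ for all $c\in\Q_p^\times$.

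The remaining step is to show that one may take $w_1=w_2$ (modulo an $\SO(V)$-fixed correction). The commutation $(g,1)(1,c)=(1,c)(g,1)$ in $M=\SO(V)\times\Q_p^\times$, combined with the cocycle identity $f(mm')=f(m)+m\cdot f(m')$, yields
$$\alpha(g)+g\beta(c)=\beta(c)+c\alpha(g),$$
that is, $(c-1)\alpha(g)=(g-1)\beta(c)$. Substituting the expressions found above gives $(c-1)(gw_1-w_1)=(c-1)(gw_2-w_2)$, so for any $c\neq 1$, $g(w_1-w_2)=w_1-w_2$ for every $g\in\SO(V)$. Consequently $gw_1-w_1=gw_2-w_2$, so setting $w=w_2$ we have $\alpha(g)=gw-w$ with the same $w$ that realizes $\beta$. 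Then
$$f(g,c)=f((g,1)(1,c))=\alpha(g)+g\beta(c)=(gw-w)+(c-1)gw=cgw-w=(g,c)\cdot w - w,$$
exhibiting $f$ as the coboundary of $w$, and completing the proof.

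The main obstacle is the last compatibility step: neither $H^1(\SO(V),V')=0$ nor the direct analysis of $\beta$ is individually difficult, but one must check that the two coboundary witnesses can be chosen to agree. The commutation relation forces this up to an $\SO(V)$-invariant vector, and absorbing that invariant vector into the choice of $w_1$ is precisely what allows the two restrictions to be glued into a single global coboundary.
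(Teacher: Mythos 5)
Your proof is correct, and it takes a genuinely different route from the paper's. The paper first subtracts the coboundary of $\lambda$ (the witness for the restriction to $\SO(V)$) so that one may assume $F$ vanishes on $\SO(V)$, and then uses the commutation relation $lc=cl$ to deduce $F(c)=l[F(c)]$ for all $l\in\SO(V)$; it then invokes the fact that $\SO(V)$ has no nonzero fixed vectors in $V'$ to conclude $F(c)=0$. You instead analyze the two restrictions $\alpha$ and $\beta$ independently --- obtaining a coboundary witness $w_1$ for $\alpha$ from the semisimplicity of $\SO(V)$, and a coboundary witness $w_2$ for $\beta$ from the explicit computation $(c_2-1)\beta(c_1)=(c_1-1)\beta(c_2)$ on the abelian factor --- and then use the same commutation relation to show the two witnesses satisfy $gw_1-w_1=gw_2-w_2$, so that $w_2$ trivializes all of $f$. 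One mild advantage of your version is that it never needs to invoke the absence of nonzero $\SO(V)$-fixed vectors in $V'$; the identity $gw_1-w_1=gw_2-w_2$ is all you use, and the final computation $f(g,c)=(gw-w)+g(c-1)w=cgw-w$ closes the argument directly. The paper's version is a bit shorter once the normalization is made, at the cost of appealing to that irreducibility fact. Both correctly handle the action statement, which is just the transfer of the $\tilde P_V$-action under $V\simeq V'$.
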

\begin{proof}
Let $F \in H^1(\SO(V) \times \Q_p^\times, V')$.  We set
$L=\SO(V)$ and write elements of $\SO(V) \times \Q_p^\times$ as
$(l,c)$.  Then $F((l,1))$ is a trivial cocycle for all $l \in L$,
since $H^1(\SO(V),V' ) =0$.  Thus we can find a $\lambda \in V'$ such
that $F((l,1)) = l [ \lambda ] - \lambda \;\; \forall \;l \in
\SO(V)$. If $F_1((l,c)) = F((l,c)) - ((l,c))[\lambda] - \lambda$,
then $F_1 \simeq F$ and $F_1$ is $0$ on $L$.  So we may assume that
$F$ is $0$ on $L$ to begin with. We may identify $(l,1)$ with $L$
and $(1,c)$ with $c \in \Q_p^\times$ and we can then write $(l,c) =
lc$. We now use the fact that $lc = cl$ to write:$$F((l,c)) = F(lc)
= F(l) + l [ F(c) ] = F(c) + c [ F(l)].$$ Since $F(l) =0$ we get
$F(c) = l [ F(c)]$.  However, $L$ does not fix any nontrivial vector
in $V'$ so we must have $F(c)=0$ and so $F=0$. \end{proof}
\begin{lemma}\label{lemma PUIRs come from massless PUIRs}
Suppose $p$ is a null vector in $V^*$ and $U$ is an irreducible
$m$-representation of $P_V$ corresponding to $p$ and an irreducible
$m_0$-representation $u$ of $\SO(V)_p = L_p$.  Suppose that $U$ has an
extension to $\tilde{P}$ as a projective unitary representation.
Then, identifying $\SO(V)_p$ with $P_{V_1} = V_1 \rtimes \SO(V_1)$,
$u$ is a massless PUIR of $P_{V_1}$ that has partial conformal
symmetry.\end{lemma}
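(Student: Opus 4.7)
The plan is to apply Lemma~\ref{lemma extension} directly, with the quadruple $(A,G,L,M) = (V,\, P_V,\, \SO(V),\, \SO(V)\times \Q_p^\times)$, so that $H = A\rtimes M = \tilde{P}_V$, and then to translate the conclusion through the isomorphism $\tilde{P}_{Vp} \simeq \tilde{P}_{V_1}$ established in the previous proposition. This packages the entire argument into a single application of the abstract extension lemma, followed by an appeal to Theorem~\ref{massiveparticlesdonotextend} to pin down masslessness.

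First I would verify the standing hypotheses (a)--(d) of Section~\ref{section Extensions of $m$-representations of semidirect products} in this concrete setting. Triviality of the multipliers of $P_V$ and $\tilde{P}_V$ on $V$ was already established in the Poincar\'e discussion (and the same irreducibility argument using the symmetric form $(\,\cdot\,,\,\cdot\,)$ applies verbatim to $\SO(V)\times \Q_p^\times$). The vanishing $H^1(\SO(V),V^\ast)=0$ comes from semisimplicity of $\SO(V)$, and $H^1(\SO(V)\times \Q_p^\times, V^\ast)=0$ is the lemma proved in Section~\ref{section Impossibility of partial conformal symmetry for eventually massive particles}. The only character fixed by $\SO(V)$ is trivial, since $\SO(V)$ fixes only $0\in V^\ast$ for $\dim V\ge 3$. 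Regularity of both actions on $V^\ast$ follows from Effros's theorem, as the orbits are locally closed. Because $H^1(\SO(V),V^\ast)=0$, we may also assume the given multiplier $m$ for $P_V$ is standard.

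Next, observe that condition (2)(a) of Lemma~\ref{lemma extension} is automatic: for the null vector $p$, one has $L[p] = M_0$ by Lemma~\ref{lemma orbits}, while $M[p] = \Q_p^\times \cdot M_0 = M_0$ because dilations preserve the nullcone, so $M[p]\setminus L[p] = \emptyset$. Applying Lemma~\ref{lemma extension} to our given extension of $U$ now yields conditions (2)(b) and (2)(c): there is a standard multiplier $m'$ for $\tilde{P}_V$ restricting to $m$ on $P_V$, and the small-group representation $u$ of $L_p$ extends to an $m'_p$-representation of the stabilizer $M_p$ of $p$ in $M$. By the proposition preceding this lemma, $M_p = \tilde{P}_{Vp} \simeq \tilde{P}_{V_1}$, and this isomorphism is compatible with the identification $L_p\simeq P_{V_1}$ from Theorem~\ref{theorem imbedding}. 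Consequently $u$, viewed as a PUIR of $P_{V_1}$, extends to a PUR of $\tilde{P}_{V_1}$, which is precisely the definition of partial conformal symmetry for $u$.

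Finally, to exclude the massive case, I would invoke Theorem~\ref{massiveparticlesdonotextend}: no massive PUIR of $P_{V_1}$ admits an extension to $\tilde{P}_{V_1}$. Since we have just produced such an extension of $u$, the orbit in $V_1^\ast$ supporting $u$ cannot be of the form $M_a$ with $a\ne 0$, so $u$ is massless. The main obstacle, such as it is, lies in checking that the conformal stabilizer $\tilde{P}_{Vp}$ really matches $\tilde{P}_{V_1}$ in a way that carries the $m'_p$-extension of $u$ to an honest extension over the ambient dilation factor; this is exactly what the block-matrix description of $\tilde{P}_V$ in Section~\ref{section PCG} and the proposition identifying $\tilde{P}_{Vp}$ with $\tilde{P}_{V_1}$ supply, so the argument reduces to bookkeeping once Lemma~\ref{lemma extension} is in hand.
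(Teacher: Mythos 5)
Your proposal is correct and follows essentially the same route as the paper: apply Lemma~\ref{lemma extension} (specifically the direction $(1)\Rightarrow(2)(c)$) to extract an extension of $u$ to the stabilizer of $p$ in $\SO(V)\times\Q_p^\times$, identify this stabilizer with $\tilde P_{V_1}$ via the preceding proposition, and then invoke Theorem~\ref{massiveparticlesdonotextend} to conclude $u$ is massless. The paper's proof is just a two-line statement of this; you have filled in the verification of hypotheses (a)--(d) and the observation that $M[p]=L[p]$ on the null cone, which is consistent but not logically required once the extension is assumed to exist.
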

\begin{proof}
By Lemma \ref{lemma extension} $2) \; c)$, $u$ extends to be a
representation of $\tilde{P}_{V_1} = V_1 \rtimes ( \SO(V_1) \times
\Q_p )$.  Now by Theorem \ref{massiveparticlesdonotextend} $u$ must
be massless.
\end{proof}
\noindent Lemma \ref{lemma PUIRs come from massless PUIRs} shows that if one
has a PUIR $U_0$ of a Poincar\'{e} group $P_{V_0}$ that extends to a
PUIR of the conformal group $\tilde{P}_{V_0}$, then $U_0$
corresponds to a PUIR $U_1$ of a stabilizer $P_{V_1}$ of a massless
$p_0 \in V_0^*$. We note that $P_{V_1} = V_1 \rtimes \SO(V_1)$ where
$V_1$ is a quadratic vector space Witt equivalent to $V_0$ with
$dim(V_0)=dim(V_1)+2$.  So $P_{V_1}$ is itself a Poincar\'{e} group.
Now in turn, $U_1$ has partial conformal symmetry and will
correspond to a PUIR $U_2$ of the stablizer $P_{V_2}$ of some $p_1
\in V_1$.  So it is clear that this process can be repeated until
one reaches a stage $R$ where $V_R$ is anisotropic. At the
anisotropic stage, the only massless character in $V_R^*$ is the
trivial one. One may also end the process by picking the trivial
null vector at any stage.  We thus have a chain of Poincar\'{e}
groups $P_{V_0},P_{V_1}, ...$ and corresponding massless
representations $U_0,U_1,...$.  From our discussion we have the
following theorem:
\begin{theorem}\label{theorem all sub representations are
massless}If $U$ is massless and has partial conformal symmetry, all
the $U_\nu$ are massless.\end{theorem}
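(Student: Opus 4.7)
The plan is to argue by straightforward induction on $\nu$, with Lemma \ref{lemma PUIRs come from massless PUIRs} serving as the one-step engine that does all the work.

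The base case $\nu=0$ is immediate: $U_0 = U$ is massless and has partial conformal symmetry by hypothesis. For the inductive step, I would first unpack the construction of the chain. By Theorem \ref{theorem Particles of the Poincare}, the PUIR $U_\nu$ of $P_{V_\nu}$ is realized from a non-trivial null vector $p_\nu \in V_\nu$ (non-trivial because $U_\nu$ is massless by inductive hypothesis) together with an irreducible $m_{p_\nu}$-representation of the stabilizer $\mathrm{SO}(V_\nu)_{p_\nu}$. By Theorem \ref{theorem imbedding}, this stabilizer is canonically isomorphic to $P_{V_{\nu+1}}$, where $V_{\nu+1}$ is Witt equivalent to $V_\nu$ with $\dim V_{\nu+1}=\dim V_\nu-2$, and under this isomorphism the stabilizer representation is precisely $U_{\nu+1}$.

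The inductive hypothesis also supplies that $U_\nu$ has partial conformal symmetry, i.e.\ extends to a PUIR of $\tilde{P}_{V_\nu}$. Consequently the hypotheses of Lemma \ref{lemma PUIRs come from massless PUIRs} are met for $U_\nu$ in the role of \textquotedblleft $U$\textquotedblright\ and $p_\nu$ in the role of \textquotedblleft $p$\textquotedblright. The conclusion of that lemma is exactly that $U_{\nu+1}$ is a \emph{massless} PUIR of $P_{V_{\nu+1}}$ \emph{and} that it again has partial conformal symmetry. This is simultaneously the masslessness assertion we want and the hypothesis needed to iterate, so the induction propagates cleanly.

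The main obstacle\,---\,and the only substantive content\,---\,already sits inside Lemma \ref{lemma PUIRs come from massless PUIRs}, whose proof rests on the impossibility result Theorem \ref{massiveparticlesdonotextend} (massive orbits are open in $V$ and hence cannot sit in a null set of a larger orbit). The present theorem is simply the iterative packaging of that one-step implication, so no further technical work is required beyond bookkeeping the identifications $\mathrm{SO}(V_\nu)_{p_\nu}\cong P_{V_{\nu+1}}$ at each stage.
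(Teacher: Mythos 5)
Your proposal is correct and is essentially the paper's argument: the paper declares the theorem ``immediate'' from Theorem \ref{massiveparticlesdonotextend} and Lemma \ref{lemma PUIRs come from massless PUIRs}, relying on the preceding discussion that sets up exactly the iterative chain $U_\nu \mapsto U_{\nu+1}$ you describe. You have merely made the induction explicit, correctly noting that Lemma \ref{lemma PUIRs come from massless PUIRs} supplies both the masslessness of $U_{\nu+1}$ and the partial conformal symmetry needed to continue.
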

\noindent We say that $U$ is \textit{eventually massive} if some $U_\nu$ is
massive.
\begin{theorem}\label{theorem eventually massive do not extend}Eventually massive particles do not have partial conformal
symmetry.\end{theorem}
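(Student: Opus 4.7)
The plan is to derive this statement by a two-case reduction that simply packages the earlier Theorem \ref{massiveparticlesdonotextend} (no massive PUIR extends to the partial conformal group) with an iterated application of Lemma \ref{lemma PUIRs come from massless PUIRs} along the chain $P_{V_0}, P_{V_1}, P_{V_2}, \ldots$ built in the discussion preceding Theorem \ref{theorem all sub representations are massless}.

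Let $U = U_0$ be an eventually massive PUIR of $P_V = P_{V_0}$, so by definition there is a smallest index $\nu \geq 0$ for which $U_\nu$ is massive. If $\nu = 0$ then $U_0$ is itself massive, and Theorem \ref{massiveparticlesdonotextend} already tells us that $U_0$ does not extend to $\tilde{P}_{V_0}$; in particular it has no partial conformal symmetry, and therefore no conformal symmetry.

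Suppose instead that $\nu \geq 1$, so $U_0, U_1, \ldots, U_{\nu-1}$ are all massless and $U_\nu$ is massive. Assume for contradiction that $U_0$ has partial conformal symmetry, i.e.\ extends to a PUIR of $\tilde{P}_{V_0}$. Then $U_0$ corresponds via the Mackey bijection of Theorem \ref{theorem Particles of the Poincare} to a nontrivial null vector $p_0 \in V_0^*$ and an irreducible PUIR $U_1$ of the stabilizer $\SO(V_0)_{p_0} \cong P_{V_1}$ given by Theorem \ref{theorem imbedding}, and Lemma \ref{lemma PUIRs come from massless PUIRs} forces $U_1$ to be massless and itself to have partial conformal symmetry inside $\tilde{P}_{V_1}$. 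Iterating the same lemma along the chain, an easy induction on $k$ gives that each $U_k$ with $k \leq \nu$ is massless with partial conformal symmetry; in particular $U_\nu$ is massless, contradicting the choice of $\nu$. Hence no such extension exists.

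There is essentially no technical obstacle here: the substantive work was done in Theorem \ref{massiveparticlesdonotextend} (the orbit-measure argument ruling out massive extensions) and in Lemma \ref{lemma PUIRs come from massless PUIRs} (the Mackey-type recursion transferring partial conformal symmetry from $U_k$ to $U_{k+1}$ while preserving masslessness). The present theorem is simply their combined contrapositive, formulated uniformly for the whole chain rather than stage by stage; the only thing to notice is that the induction terminates in finitely many steps because the dimension drops by $2$ at each stage and the process must reach either a massive $U_\nu$ or an anisotropic $V_R$.
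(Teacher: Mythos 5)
Your proof is correct and takes essentially the same approach as the paper, which simply asserts the result is immediate from Theorem \ref{massiveparticlesdonotextend} and Lemma \ref{lemma PUIRs come from massless PUIRs}; you have merely spelled out the induction along the chain and the two cases ($\nu=0$ versus $\nu\geq 1$) that the paper leaves implicit.
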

\noindent Both theorems \ref{theorem all sub representations are massless} and
\ref{theorem eventually massive do not extend} are immediate from
Theorem \ref{massiveparticlesdonotextend} and Lemma \ref{lemma PUIRs
come from massless PUIRs}.

\appendix
\section{Proof of Theorem \ref{theorem conjugacy}}

\vspace{14pt}

\noindent
Write
$$
D=\dim (V), \quad d=\dim (W)=D-2,\quad G={\rm SO}(V),\quad L={\rm SO}(W).
$$
We want to show that $P_W$ fixes a non-zdero null vector in $V$. For then we will have a $k$-imbedding of $P_W$ with the stabilizer of this null vector, which must be an isomorphism as the groups have the same dimension by Theorem \ref{theorem imbedding} Moreover the conjugacy of the imbeddings will follow from the fact that the null vectors form a single orbit. It is even enough to show that $P_W$ fixes a nonzero vector over the algebraic closure $\bar k$ of the ground field $k$. Indeed, if this were assumed, then, $P_W$ fixes a nonzero vector defined over $k$ itself, because the action of $P_W$ in $V$ is defined over $k$. Let $v$ be such a vector and assume that it is not a null vector. We have seen in Chapter 4 (
that the stabilizer of $v$ is ${\rm SO}(U_v)$ where $U_v=(kv)^\perp$ is a non-degenerate quadratic vector space. Hence
$$
\dim (P_W)=\dim ({\rm SO}(U_v))={(D-1)(D-2)\over 2}
$$
which means that the imbedding $P_W\hookrightarrow {\rm SO}(U_v)$ must be an isomorphism. But this is impossible as ${\rm SO}(U_v)$ is semisimple while $P_W$ has a non-trivial radical. Hence $v$ must be a null vector.

\vspace{14pt}

\noindent We may therefore work over $\bar k$ for the rest of the proof. In other words {\it we shall assume that $k$ itself is algebraically closed from now on.\/}

\vspace{14pt}

\noindent From standard algebraic group theory one knows that the action of the additive group of $k$ are unipotent and so they fix some non-zero vectors. By induction on $r$ this is true for actions of $k^r$ and hence in every
$W$-stable subspace of $V$ we can find non-zero vectors fixed by $W$. Let $U$ be a $P_W$-stable subspace of $V$ of minimal dimension $r$. Clearly
$P_W$ acts irreducibly on $U$.  Let $U_1$ be the subspace of $U$ on which
$W$ acts trivially.  Clearly $U_1$ is $L$-stable, hence $P_W$-stable, so
$U_1=U$ by the minimality of $U$.   Thus $W$ acts trivially on $U$.
Since $U$ is $P_W$-stable so is  $U^\perp$. Hence, by minimality of $\dim (U)$, we must have
$$
D-r=dim(U^\perp)=r \geq
dim(U)
$$
so that $r \leq \frac{D}{2}$.

\vspace{14pt}

\noindent We can already complete the proof in characteristic $0$ except when $D=6$, since we have the following lemma. As we will consider the exceptional cases $5\le D\le 8$ below this exception is included in the argument where
the characteristic need not be $0$.
\begin{lemma} In characteristic $0$ every non-trivial irreducible representation of ${\rm SO}(n)$ for $n\ge 3, n\not=4$ has dimension $\ge n$.
\end{lemma}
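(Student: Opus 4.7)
The plan is to reduce the statement to a computation on the Lie algebra $\mathfrak{so}(n)$ and then exploit the monotonicity of the Weyl dimension formula. In characteristic zero the irreducible representations of $\SO(n)$ are precisely those irreducible representations of $\mathfrak{so}(n)$ whose lift to the simply connected cover $\mathrm{Spin}(n)$ is trivial on the central kernel $\ker(\mathrm{Spin}(n)\rightarrow\SO(n))\cong\Z/2$. The case $n=3$ I would handle directly: $\mathfrak{so}(3)\cong\mathfrak{sl}(2)$ and the representations of $\SO(3)$ are precisely the odd-dimensional ones, so the smallest non-trivial has dimension $3$. For $n\ge 5$, $\mathfrak{so}(n)$ is a simple Lie algebra, of type $B_\ell$ when $n=2\ell+1$ and of type $D_\ell$ when $n=2\ell$ with $\ell\ge 3$, and its irreducible representations are parametrised by dominant integral weights.

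The key tool is a monotonicity property of the Weyl dimension formula
$$
\dim V_\lambda=\prod_{\alpha>0}\frac{(\lambda+\rho,\alpha)}{(\rho,\alpha)}.
$$
If $\lambda=\mu+\nu$ with $\mu$ and $\nu$ both dominant and $\mu\ne 0$, then every factor on the right is at least as large as the corresponding factor for $\mu$, so $\dim V_\lambda\ge\dim V_\mu$. Applied to any additive sub-monoid $M$ of dominant weights, the minimum of $\dim V_\lambda$ over $0\ne\lambda\in M$ is attained at an indecomposable generator of $M$. I would take $M$ to be the sub-monoid of dominant weights whose irreducible representations descend to $\SO(n)$.

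The fundamental representations of $\mathfrak{so}(n)$ are the exterior powers $\Lambda^i V$ of the vector representation $V$, of dimension $\binom{n}{i}$, together with the spin representation (type $B$) of dimension $2^\ell$ or the two half-spin representations (type $D$) of dimension $2^{\ell-1}$ each. The central $\Z/2$ acts trivially on every $\Lambda^i V$ but by $-1$ on each spin or half-spin representation, so the descent sub-monoid is generated by the exterior powers together with pairs of spin-type fundamentals. For $n\ge 7$ one has $\binom{n}{i}\ge n$ for all $1\le i\le n-1$, together with $2^\ell\ge 2\ell+1$ for $\ell\ge 3$ in type $B$ and $2^{\ell-1}\ge 2\ell$ for $\ell\ge 4$ in type $D$; hence every fundamental representation of $\mathrm{Spin}(n)$ already has dimension at least $n$. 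By the monotonicity above this yields the bound for every non-trivial representation of $\mathrm{Spin}(n)$, a fortiori for every non-trivial representation of $\SO(n)$.

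The only remaining cases are $n=5,6$, in which some spin-type fundamental has dimension $4<n$ but does not descend to $\SO(n)$. Here I would identify the descent sub-monoid explicitly: for $n=5$ it is generated by the vector weight $\omega_1$ (of dimension $5$) and the doubled spinor weight $2\omega_2$ (the adjoint, of dimension $10$); for $n=6$ it is generated by the vector weight $\omega_1$ (of dimension $6$) together with $\omega_2+\omega_3$ (the adjoint, of dimension $15$) and $2\omega_2,\,2\omega_3$ (of dimension $10$ each). In each case the smallest generator has Weyl dimension exactly $n$, and the monotonicity argument then closes the proof. The main obstacle is the explicit treatment of $n=5,6$: one must identify the generators of the descent sub-monoid correctly and verify via the Weyl formula that none of them yields a representation of dimension smaller than $n$; all the remaining cases fall to the uniform spin-dimension inequalities above.
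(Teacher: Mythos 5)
Your proof is correct and follows the same broad strategy as the paper: work over $\C$, exploit the monotonicity of the Weyl dimension formula on sums of dominant weights, and check the minimal generators. The difference is that the paper's one-line argument glosses over a genuine issue which you handle carefully. The paper asserts ``it is enough to verify this for the fundamental representations'' and then simply ``exclude the spin representations because they are not representations of the orthogonal group.'' As stated this is incomplete: an irreducible representation of $\SO(n)$ can have highest weight supported entirely on the spin-type fundamental weights --- for instance $2\omega_\ell$ in type $B_\ell$, or $2\omega_{\ell-1}$, $2\omega_\ell$, $\omega_{\ell-1}+\omega_\ell$ in type $D_\ell$ --- and for such $\lambda$ the monotonicity argument only bounds $\dim V_\lambda$ from below by a spin representation, whose dimension is $4<n$ when $n=5$ or $6$. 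You correctly observe that the minimum is attained on the indecomposable generators of the sub-monoid of weights that descend to $\SO(n)$, which is strictly larger than the set of non-spin fundamentals, and you verify by the Weyl formula that the extra generators (dimension $10$ for $2\omega_2$ at $n=5$; dimensions $10,10,15$ for $2\omega_2,2\omega_3,\omega_2+\omega_3$ at $n=6$) are all $\ge n$. In short, your argument is the rigorous version of the paper's sketch, and is what a careful reader must supply to make the paper's lemma airtight.
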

\begin{proof} We may work over ${\c}$. It is enough to verify this for the fundamental representations. These are the representations induced on the exterior tensors plus the spin representations. We can exclude the spin
representations because they are not representations of the orthogonal group. The exterior representations have dimensions $n, {n\choose 2}$, etc all of which are $\ge n$. The condition $n\not=4$ is due to the fact that ${\rm
SO}(4)$ is not simple.
\end{proof}
\noindent Assuming the above lemma we observe that if $U$ carries a non-trivial representation of $L$, and $D-2\not=4$, then $r\ge D-2$, which, combined with $r\le D/2$, yields $D\le 4$. As $D\ge 5$ we must have that $L$ acts
trivially on $U$ and so every vector of $U$ is fixed by $P_W$. This finishes the argument.

\vspace{14pt}

\noindent
We now resume the proof in the case of arbitrary characteristic $\not=2$. We consider the restriction of the quadratic form to $U$.  Let $U_1$ be the radical of $U$.
$U_1$ is stable under $P_W$ and so by irreducibility of $U$, either
$U_1 = 0$ or $U_1 = U$. We claim that $U_1 = U$, i.e., $U$ is isotropic. Assume on the contrary that $U_1=0$.  In this case the quadratic form is
nondegenerate so $V = U \oplus U^\perp$.  Both $U$ and $U^\perp$ are
$P_W$-stable and thus $P_W \subset \ORTH(U) \times \ORTH(U^\perp)$.  But
since $P_W$ is a connected group it must be mapped to the connected
component of $\ORTH(U) \times \ORTH(U)^\perp$, so really we have that $P_W
\subset \SO(U) \times \SO(U^\perp)$. Hence, $dim(P_W) \leq dim(\SO(U)) +
dim (\SO(U^\perp))$. We now have
$$\frac{(D-1)(D-2)}{2} \leq \frac {r(r-1)}{2} +
\frac{(D-r)(D-r-1)}{2}$$ This gives, after a simple calculation,
$$D(r-1) \leq r^2-1 = (r+1)(r-1).$$
So either $r=1$ or $D \leq r+1$.  Note that we already know that $r
\leq \frac{D}{2}$.  So if $D \leq r+1$, then we must have that $D
\leq 2$.  Hence, $r=1$.  So $L$ must
be trivial on $U$, since $L$ is semisimple and there is no
nontrivial homomorphism into any abelian group.  Thus $P_W$ must fix
a basis vector $u$ of $U$ which is not null. We have already excluded this alternative. Thus $u$ is a null vector and $P_W$ imbeds into the normalizer in $\SO(V)$ of $U$. Hence $U$ is isotropic.

\vspace{14pt}

\noindent Let $r =
dim(U) \geq 1$.  We now investigate the structure of this
normalizer.  Let $Q$ be this normalizer.
\begin{lemma}\label{N unipotent}We have $Q \simeq N \rtimes S$, where
\begin{enumerate}
\item $N$ is unipotent of dimension $r(D-2r) + \frac{1}{2}r(r-1)$.
\item $S \simeq GL(r)\times {\rm SO}(D-2r)$.
\item $P_W$ imbeds into $Q':=N\rtimes \bigg ({\rm SL}(r)\times {\rm SO}(D-2r)\bigg )$.
\end{enumerate}
\end{lemma}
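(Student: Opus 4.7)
The plan is to realize $Q$ as the parabolic subgroup of $\SO(V)$ stabilizing the totally isotropic subspace $U$, and to extract its Levi decomposition from a Witt decomposition of $V$.

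First I would use Witt's extension theorem (legitimate here since $k$ is algebraically closed and $U$ is totally isotropic with $r\le D/2$) to pick a totally isotropic subspace $U'$ of the same dimension $r$ paired hyperbolically with $U$, so that $V=U\oplus W_0\oplus U'$ with $W_0=(U\oplus U')^\perp$ non-degenerate of dimension $D-2r$. Choosing dual bases $e_1,\dots,e_r$ of $U$ and $f_1,\dots,f_r$ of $U'$ with $(e_i,f_j)=\delta_{ij}$, and an orthogonal basis of $W_0$, I can write any $g\in Q$ in block upper-triangular form
$$g=\begin{pmatrix} A & X & Y \\ 0 & E & Z \\ 0 & 0 & F \end{pmatrix},\quad A\in \mathrm{GL}(r),\ E\in \mathrm{GL}(W_0),\ F\in \mathrm{GL}(U').$$

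Second, I would extract the constraints imposed by the requirement that $g$ preserves the form and has determinant $1$. The conditions $(ge_i,gf_j)=\delta_{ij}$, $(gw_i,gw_j)=(w_i,w_j)$, and $\det g = 1$ give $F=(A^T)^{-1}$ and $E\in \SO(W_0)$. The condition $(gw_i,gf_j)=0$ forces $X$ to be a linear function of $A,E,Z$. The condition $(gf_i,gf_j)=0$ becomes $Y^TF+F^TY+Z^T G Z=0$, which determines the symmetric part of $Y^TF$ in terms of $Z$, leaving the antisymmetric part free, of dimension $\binom{r}{2}$. Setting $Y=Z=0$ (which forces $X=0$) yields a reductive Levi subgroup $S$ consisting of block-diagonal $\mathrm{diag}(A,E,(A^T)^{-1})$, and this is visibly $\mathrm{GL}(r)\times\SO(W_0)$, proving (2). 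Setting $A=I,E=I$ yields a unipotent subgroup $N$ with $Z$ free of dimension $r(D-2r)$ and $Y$ free modulo the symmetric constraint, contributing $\binom{r}{2}$, so $\dim N = r(D-2r)+\tfrac{r(r-1)}{2}$, proving (1). The fact that the product map $N\times S\to Q$ is an isomorphism of algebraic groups, with $N$ normal, follows from the block-triangular multiplication; since $E,A$ act on the affine variables $Y,Z$ linearly without constant term, the semidirect structure is clear.

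For (3), I would consider the composition of the projection $Q\twoheadrightarrow S\twoheadrightarrow \mathrm{GL}(r)\twoheadrightarrow \mathrm{GL}(r)/\mathrm{SL}(r)\simeq k^\times$. This is a morphism of algebraic groups into an abelian group, so it vanishes on the commutator subgroup of $P_W$. But $P_W=W\rtimes \SO(W)$ is perfect when $d=\dim W\ge 3$: here $\SO(W)$ is perfect (classical for $d\ge 3$ over an algebraically closed field of characteristic $\ne 2$), and the commutators $[g,w]=g[w]-w$ for $g\in \SO(W),w\in W$ span an $\SO(W)$-invariant nonzero subspace of $W$, which by irreducibility of the action must be all of $W$. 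Hence the image of $P_W$ in $\mathrm{GL}(r)/\mathrm{SL}(r)$ is trivial, so $P_W$ factors through $Q'=N\rtimes(\mathrm{SL}(r)\times \SO(W_0))$.

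The main obstacle is purely bookkeeping: carefully carrying out the form-preservation calculation for $(gf_i,gf_j)=0$ and $(gw_i,gf_j)=0$ to isolate which blocks are free and which are determined, and then checking that the free parameters really do give a (not merely set-theoretic but) algebraic semidirect product structure. Once the block description of $Q$ is in hand, (1) and (2) are immediate dimension counts and (3) reduces to the perfectness observation above.
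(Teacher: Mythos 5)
Your proposal is correct and takes essentially the same approach as the paper: Witt decomposition $V=U\oplus U'\oplus R$ with $U'$ a dual totally isotropic complement, block upper-triangular description of the stabilizer $Q$ yielding the unipotent radical $N$ (parametrized by a free $(D-2r)\times r$ block and a free skew-symmetric $r\times r$ block) and the Levi factor $S\simeq \mathrm{GL}(r)\times\SO(D-2r)$, followed by the perfectness of $P_W$ to push the image into $Q'$. You simply spell out the "tedious but straightforward calculation" that the paper leaves implicit.
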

\begin{proof}
Since $U$ is a null space we can find (see Lang \cite{lan}) a null
subspace $U'$ of dimension $r$ of $V$ such that for a suitable basis
$(u_i)$ of $U$ and $(u_i')$ of $U'$ we have $(u_i,u_j') =
\delta_{ij}$. Let $R = (U \oplus U')^\perp$. Then $V=U \oplus U'
\oplus R$ and the quadratic form on $R$ is non-degenerate.  A tedious but straightforward calculation shows that $Q=N\rtimes S$ where $N$ and $S$ are subgroups with the following descriptions. The group $N$ consistes of matrices
of the form
$$\eta(\beta, \sigma) = \left(
                          \begin{array}{ccc}
                           I & -\frac{1}{2}\beta^T \Lambda \beta + \sigma & -\beta^T \Lambda\\
                            0 & I & 0 \\
                            0 & \beta & I \\
                          \end{array}
                        \right)
$$
Here $\beta$ is an arbitrary $(D-2r \times r)$ matrix and , $\sigma$ is an arbitrary $(r \times r)$ skew symmetric matrix. So we have an imbedding of $P_W$ inside $Q$. Since $P_W$ is easily seen to be its own commutator subgroup,
it is now obvious that $P_W$ maps inside the commutator subgroup of $Q$ which is contained in $N\rtimes \bigg ({\rm SL}(r)\times {\rm SO}(R)\bigg )$. This finishes the proof of Lemma
\ref{N unipotent}.
\end{proof}
\begin{lemma}\label{bar L lemma}Suppose ${L}$ acts faithfully on
${k}^3$.  Then ${L}$ acts irreducibly on
${k}^3$.\end{lemma}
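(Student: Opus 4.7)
The plan is to argue by contradiction. Suppose $L$ acts faithfully on $k^3$ but preserves a proper nonzero subspace $U \subset k^3$, with $\dim U \in \{1,2\}$. Recall from the beginning of the appendix that $L = \mathrm{SO}(W)$ where $\dim W = D-2 \geq 3$, so over the algebraically closed field $k$ (of characteristic $\neq 2$) the group $L$ is connected semisimple and in particular perfect. Consequently every character $L \to k^\times$ is trivial. Choose a basis of $k^3$ adapted to an $L$-stable flag passing through $U$. In this basis every $g \in L$ is block upper triangular with a $1 \times 1$ and a $2 \times 2$ diagonal block; the $1 \times 1$ block is a character of $L$, hence equal to $1$, and the determinant of the $2 \times 2$ block is a character of $L$, hence also $1$. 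So the $2 \times 2$ diagonal part defines a homomorphism $\pi : L \to \mathrm{SL}(2)$.

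Next I split into two cases. If $\pi$ is trivial, the image of $L$ in $\mathrm{GL}(3)$ is contained in the unipotent radical of the Borel subgroup stabilizing the flag. Hence this image is a unipotent, in particular solvable, connected group. But any homomorphic image of the perfect group $L$ is perfect, and a perfect connected solvable group is trivial; so $L$ maps to the identity in $\mathrm{GL}(3)$, contradicting faithfulness. If $\pi$ is nontrivial, then $L$ admits a nontrivial rational representation of dimension $2$. I would then invoke the fact that for $\dim W \geq 3$ every nontrivial rational representation of $L = \mathrm{SO}(W)$ has dimension $\geq \dim W \geq 3$: in the case $\dim W = 3$ one identifies $L \simeq \mathrm{PGL}_2$ (valid in characteristic $\neq 2$ over $\bar k$), and since irreducible rational representations of $\mathrm{PGL}_2$ factor through the central isogeny $\mathrm{SL}_2 \twoheadrightarrow \mathrm{PGL}_2$, only the odd-dimensional $\mathrm{SL}_2$-modules descend, ruling out any $2$-dimensional representation; in the cases $\dim W \geq 4$ the fundamental modules are the natural module, its exterior powers, and the spin module (the last of which does not factor through $\mathrm{SO}(W)$), all of dimension $\geq \dim W \geq 4$. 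This is the contradiction we want, and the lemma follows.

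The main obstacle will be the last step: establishing the lower bound on the dimension of nontrivial rational representations of $L$ in \emph{arbitrary} characteristic $\neq 2$. In characteristic $0$ this is the lemma invoked earlier in the appendix, but in positive characteristic one must check that no exotic small-dimensional module appears. The cleanest way is to note that Frobenius twists preserve dimension, so they cannot lower the minimal dimension, and then to use the highest-weight classification of simple $L$-modules together with the Weyl dimension bound to verify that the natural module remains of minimal dimension among the nontrivial ones. Once this dimension estimate is in hand, the semisimple/solvable dichotomy above runs without obstruction.
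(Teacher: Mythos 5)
Your proof takes a genuinely different, and considerably more elaborate, route than the paper. The paper's argument is a one-line dimension count: after using semisimplicity to force the diagonal entry on the invariant line to be $1$ (exactly as you do), the image of $L$ lands in a $5$-dimensional subgroup of $\SL(3)$, while $L=\SO(W)$ with $\dim W = 4$ in the only case where this lemma is applied, so $\dim L = 6 > 5$, contradiction. You never invoke $\dim L$ at all; instead you pass to the quotient homomorphism $\pi\colon L\to\SL(2)$ and argue representation-theoretically. Your route is more robust (it would work even for $\SO(3)$, where the dimension count fails), but it is also heavier machinery than the problem requires, and you have missed the simpler argument that the paper uses.

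There is also a misstatement in your representation-theoretic step. You claim that for $\dim W\geq 4$ every nontrivial rational representation of $\SO(W)$ has dimension $\geq \dim W$. That is false precisely in the relevant case $\dim W = 4$: $\SO(4)$ over $\bar k$ is a quotient of $\SL_2\times\SL_2$ by the diagonal $\mu_2$, and the modules $V_3\boxtimes V_1$ and $V_1\boxtimes V_3$ descend to nontrivial irreducible $3$-dimensional representations of $\SO(4)$. This is exactly why the paper's own characteristic-zero lemma earlier in the appendix explicitly excludes $n = 4$. Fortunately your argument needs only the weaker fact that $\SO(4)$ has no nontrivial $2$-dimensional representation (for $V_a\boxtimes V_b$ to descend one needs $a+b$ even, so $ab=2$ is impossible), which does hold; but as written the step cites a false statement. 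Finally, you explicitly flag the positive-characteristic verification as an outstanding obstacle rather than carrying it out, so the proposal is a plan with an acknowledged gap rather than a complete proof. The paper sidesteps all of this by counting dimensions.
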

\begin{proof}
${L}$ cannot act trivially since it is faithful.  Suppose it is
not irreducible.  then ${k}^3$ has a submodule of dimension $1$
or $2$. By passing to the dual we may assume that $dim  M = 1$. Then
$${L} \subset  {\rm SL}(3)\cap \left\{\left(
                   \begin{array}{ccc}
                     1 & * & * \\
                     0 & * & * \\
                     0 & * & * \\
                   \end{array}
                 \right)\right\}.$$
The left side
has dimension $6$, but the right side has dimension $5$, an impossibility.
\end{proof}
\begin{lemma}\label{null vector lemma}
$U$ has dimension $1$ and is spanned by a null vector $p$.  In
particular $P_W$ fixes $p$.
\end{lemma}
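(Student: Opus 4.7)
The goal is to show $r := \dim U = 1$; granted this, $U = kp$ for some nonzero null vector $p$, and since $L = \SO(W)$ is semisimple (as $\dim W \ge 3$) and has no nontrivial one-dimensional representation, while $W$ already acts trivially on $U$ by construction, $P_W$ fixes $p$.

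The main tool is the explicit embedding $P_W \hookrightarrow Q' = N \rtimes (\SL(r) \times \SO(D-2r))$ of Lemma \ref{N unipotent}. Because $L$ is semisimple, $L \cap N$ is a normal unipotent subgroup of $L$ and hence trivial, so projection onto the reductive quotient $S = \SL(r) \times \SO(D-2r)$ yields an injection $L \hookrightarrow S$ of algebraic groups. Next I compare dimensions: setting $s = D - 2r$, the inequality $\dim P_W \le \dim Q'$ simplifies, after a short calculation, to
$$(r-1)(r + 2s - 4) \le 0.$$
With $r \ge 1$ and $s \ge 0$ this forces either $r = 1$ (and we are done) or $r \ge 2$ together with $r + 2s \le 4$, leaving only the triples $(D, r, s) \in \{(5,2,1),\ (6,3,0),\ (8,4,0)\}$.

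Each exceptional triple is ruled out by the injection $L \hookrightarrow S$ just obtained. For $(D,r,s) = (5,2,1)$, one gets $\SO(3) \hookrightarrow \SL(2) \times \SO(1) = \SL(2)$, a map between simple groups of equal dimension $3$, which would therefore be an isomorphism; but $\SO(3)$ has trivial center while $\SL(2)$ has center $\{\pm I\}$, a contradiction. For $(8,4,0)$, one gets $\SO(6) \hookrightarrow \SL(4)$, a map between simple groups of equal dimension $15$, again forcing an isomorphism and contradicting $\SO(6) \cong \SL(4)/\{\pm I\}$. For $(6,3,0)$, $L = \SO(4)$ acts faithfully on $U \cong k^3$, hence by Lemma \ref{bar L lemma} irreducibly; but any three-dimensional irreducible representation of $\SO(4) \cong (\SL(2) \times \SL(2))/\{\pm(1,1)\}$ is, by dimension, of the form $V_2 \boxtimes V_0$ or $V_0 \boxtimes V_2$, and therefore factors through one of the two $\SL(2)$-factors, contradicting faithfulness.

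The main obstacle is the case analysis, particularly $(6,3,0)$, where $\SO(4)$ is not simple and one must combine Lemma \ref{bar L lemma} with explicit representation theory of $\SO(4)$ to extract a contradiction; the other two exceptional cases rest on exceptional isomorphisms in low-rank classical groups together with a center-versus-dimension comparison. Once all three triples are eliminated, $r = 1$ is forced, and the concluding paragraph above delivers the null vector $p$ fixed by $P_W$.
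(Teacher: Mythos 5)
Your proof is correct and follows the same overall strategy as the paper: embed $P_W$ into $Q' = N \rtimes (\SL(r) \times \SO(D-2r))$, use a dimension count to force $r=1$ outside a handful of small cases, and eliminate those cases by hand. The details, however, are tidied up in ways that are worth noting. Your factored inequality $(r-1)(r+2s-4) \le 0$ is in fact algebraically the same as the paper's $2D(r-1)\le(3r+4)(r-1)$ after substituting $s=D-2r$, but your form makes the list of surviving triples $(D,r,s)$ transparent and dispenses with the paper's separate elimination of $D=7$. Your explicit observation that $L\cap N$ is a normal unipotent subgroup of the semisimple group $L$, hence trivial, so that $L$ injects into $S$, is used implicitly throughout the paper's case analysis (e.g.\ in the assertion that the $\SO(4)$-part maps onto a $6$-dimensional subgroup of $\SL(3)$), so making it a standing lemma is a genuine clarification; in positive characteristic one should also note that the finite part of $L\cap N$ would be a central $p$-group, while $Z(\SO(W))$ has order dividing $2$, so triviality still holds for $p\ne 2$. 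For $D=5$ and $D=8$ your comparison of centers of $\SO(3)$ vs.\ $\SL(2)$ and of $\SO(6)$ vs.\ $\SL(4)$ gives the same contradiction the paper reaches by comparing full semidirect products, just more directly. The real divergence is in the case $D=6$: the paper analyzes the image $\bar P$ of all of $P_W$ in $\SL(3)$, splits into subcases by $\dim\bar P\in\{6,7,8\}$, and only in the hardest subcase invokes Lemma~\ref{bar L lemma} together with the interaction of the images $\bar T$ and $\bar L$; you instead invoke Lemma~\ref{bar L lemma} once, combine it with the fact that every $3$-dimensional irreducible representation of $\SO(4)\cong(\SL(2)\times\SL(2))/\{\pm(1,1)\}$ is of the form $V_2\boxtimes V_0$ or $V_0\boxtimes V_2$ and therefore factors through one $\SL(2)$-factor, and contradict faithfulness outright. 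This is shorter and avoids the $\dim\bar P=7$ subcase entirely. The trade-off is that your argument needs the classification of low-dimensional irreducibles of $\SL(2)\times\SL(2)$ over $\bar k$ (fine in characteristic $0$ or $p\ge 3$, which is all that is needed), whereas the paper's more pedestrian route needs only Lemma~\ref{bar L lemma} and elementary dimension bookkeeping.
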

\begin{proof} We assume $r \geq 2$ and show that this
leads to a contradiction. The semisimple part of $P_W$ must have dimension at most the dimension of the semisimple part of $Q'$ and so
$$\frac{(D-2)(D-3)}{2} \leq r^2 -1+ \frac{(D-2r)(D-2r-1)}{2}$$ giving $2D(r-1) \leq 3r^2 + r -4 = (3r+4)(r-1)$ or $2D \leq 3r + 4.$  This gives
$2D \leq 3r + 4 \leq 3 (D/2) + 4$ which cannot happen if $D>8$.  Hence, we need only
consider the possibilities $D=5,6,7,8$.

\vspace {14 pt}

\noindent\textbf{Case $D = 8$:}

\vspace {14 pt}
\noindent $2D \leq 3r + 4 \leq 3 \frac{D}{2} + 4 \Rightarrow 16 \leq
3r + 4 \leq 16 \Rightarrow r=4D/2$. So $V
= U \oplus U'$, $R=0$. Hence $N \simeq k^6$ so
that $$k^6\rtimes \SO(6) \hookrightarrow k^6 \rtimes \SL(4).$$ Since both groups have dimension $21$ the
above map must be an isomorphism. So $\SO(6) \simeq \SL(4)$ which is
impossible since $\SL(4)$ is the two fold cover of $\SO(6)$.  So this
case is ruled out.

\vspace {14 pt}

\noindent\textbf{Case $D = 7$:}

\vspace {14 pt}
\noindent
If $D=7$ then, $2D \leq 3r + 4 \leq 3 (D/2) + 4 \Rightarrow 10 \leq 3r \leq
(21)/2$. No integer exists with these properties.

\vspace{14pt}

\noindent\textbf{Case $D = 6$:}

\vspace{14pt}

\noindent $2D \leq 3r + 4 \leq 3 (D/2) + 4 \Rightarrow 8 \leq
3r \leq 9 \Rightarrow r=3$.  So again $R=0$ and so $P_W=k^4\rtimes {\rm SO}(4)
\hookrightarrow \bar{k}^3 \rtimes {\rm SL}(3)$. Let $\bar{P}$ be the image
of $P_W$ in ${\rm SL}(3)$. Since the ${\rm SO}(4)$ part must map onto a subgroup
of dimension $6$, it follows that $dim(\bar{P})$ has to be $6,7$, or
$8$. If $dim(\bar{P})=8$ (resp.\,$6$) then $\bar{P} = {\rm SL}(3)$ (resp.\,$\bar{P}$ is the image of ${\rm SO}(4)$.) In either, case $\bar{P}$ is
semisimple and has the image of $\bar{k}^4$ as a normal connected
unipotent subgroup.  Hence, the image of $\bar{k}^4$ must be trivial,
which means $\bar{k}^4 \hookrightarrow \bar{k}^3$ which is
impossible. Assume $dim(\bar{P}) = 7$. Let $\bar{T}$ (resp.\,$\bar{L}$) be the
image of ${k}^4$ (resp.\,${\rm SO}(4))$.  Then $\bar{T} \simeq {k}$
and $\bar{L}$ normalizes $\bar{T}$ so that $\bar{L}$ acts trivially
on $\bar{T}$, $\bar{P} = \bar{T} \times \bar{L}$. $\bar{L}$ acts
faithfully on ${k}^3$. By Lemma \ref{bar L lemma}  $\bar{L}$ acts irreducibly on ${k}^3$, so $\bar{T}$ acts
as a scalar, which must be $1$ since $\bar{T} \simeq {k}$.  This
is impossible since $\bar{T}$ must act faithfully.

\vspace{14pt}

\noindent \textbf{Case $D = 5$:}

\vspace{14pt}

\noindent $2D \leq 3r + 4 \leq 3 (D/2) + 4 \Rightarrow 6 \leq
3r \leq (15)/2 \Rightarrow r=2$. Thus $k^3\rtimes {\rm SO}(3)\hookrightarrow k^3\rtimes {\rm SL}(2)$. As in the case $D=8$ we conclude that ${\rm SO}(3)\simeq {\rm SL}(2)$ which is impossible. \end{proof}

\vspace {14 pt}

\noindent
This finishes the proof of Theorem \ref{theorem conjugacy}

\vskip 0.5 true in\noindent
{\mysmall V. S. Varadarajan, Department of Mathematics, UCLA, Los Angeles, CA 90095-1555, USA, {\eightit vsv@math.ucla.edu}} \smallskip\noindent
{\mysmall Jukka Virtanen, Department of Mathematics, UCLA, Los Angeles, CA 90095-1555, USA, {\eightit virtanen@math.ucla.edu}
\medskip

\end{document}